\newtheorem{observation}[theorem]{Observation}
\begin{document}
\title{Adding an Edge in a $P_4$-sparse Graph
\thanks{
Research at the University of Ioannina supported by the Hellenic Foundation for Research and Innovation (H.F.R.I.) under the ``First Call for H.F.R.I. Research Projects to support Faculty members and Researchers and the procurement of high-cost research equipment grant'', Project FANTA (eFficient Algorithms for NeTwork Analysis), number HFRI-FM17-431.
}
}
%
%
\author{Anna Mpanti\inst{1,2}
\and
Stavros D. Nikolopoulos\inst{1,3}
\and
Leonidas Palios\inst{1,4}
}
\authorrunning{A. Mpanti et al.}
%
\institute{Dept. of Computer Science and Engineering, University of Ioannina, Greece
\and
ampanti@cs.uoi.gr
\and
0000-0001-6684-8459; \ stavros@cs.uoi.gr
\and
0000-0001-8630-3835; \ palios@cs.uoi.gr
}
\maketitle              
\begin{abstract}
The minimum completion (fill-in) problem is defined as follows: Given a graph family~$\mathcal{F}$ (more generally, a property~$\Pi$) and a graph~$G$, the completion problem asks for the minimum number of non-edges needed to be added to $G$ so that the resulting graph belongs to the graph family~$\mathcal{F}$ (or has property~$\Pi$). This problem is NP-complete for many subclasses of perfect graphs and polynomial solutions are available only for minimal completion sets. We study the minimum completion problem of a $P_4$-sparse graph~$G$ with an added edge. For any optimal solution of the problem, we prove that there is an optimal solution whose form is of one of a small number of possibilities. This along with the solution of the problem when the added edge connects two non-adjacent vertices of a spider or connects two vertices in different connected components of the graph enables us to present a polynomial-time algorithm for the problem.

\keywords{edge addition  \and completion \and $P_4$-sparse graph.}
\end{abstract}

\section{Introduction}
One instance of the general (${\cal C},+k$)-MinEdgeAddition problem \cite{NP05} is the ($P_4$-sparse,{$+$}$1$)-Min\-Edge\-Addition Problem. In this problem, we add $1$ given non-edge $uv$ in a $P_4$-sparse graph and we want to compute a minimum $P_4$-sparse-completion of the resulting graph $G+uv$.

The above problem is motivated by the dynamic recognition (or on-line maintenance) problem on graphs: a series of requests for the addition or the deletion of an edge or a vertex (potentially incident on a number of edges) are submitted and each is executed only if the resulting graph remains in the same class of graphs. Several authors have studied this problem for different classes of graphs and have given algorithms supporting some or all the above operations; we mention the edges-only fully dynamic algorithm of Ibarra \cite{I08} for chordal and split graphs, and the fully dynamic algorithms of Hell et al. \cite{HSS02} for proper interval graphs, of Shamir and Sharan \cite{SS04} for cographs, and of Nikolopoulos et al{.} for $P_4$-sparse graphs \cite{Ni}.

As referred in \cite{Ki}, the class of integrally completable graphs are those Laplacian integral graphs having the property that one can add in a sequence of edges, presenting Laplacian integrality with each addition, and that such edge additions can continue until a complete graph is obtained. According to \cite{AkGhOb}, the energy of a complete multipartite graph, i.e., the sum of the absolute values of its eigenvalues, increases if a new edge added or an old edge is deleted. Papagelis \cite{Pa} study the problem of edge modification on social graphs and consider the problem of adding a small set of non existing edges in a social graph with the main objective of minimizing its characteristic path length, i.e., the average distance between pairs of vertices that controls how broadly information can propagate through a network. 

More specifically about $\cal C$-completion problems, Yannakakis \cite{Y81} showed that the computing the minimum fill-In of chordal graphs is NP-Complete. Nikolopoulos and Palios \cite{NiPa} establish structural properties of cographs and they present an algorithm which, for a cograph $G$ and a non-edge $ xy$ (i.e., two non-adjacent vertices $x$ and $y$) of $G$, finds the minimum number of edges that need to be added to the edge set of $G$ such that the resulting graph is a cograph and contains the edge $xy$. Their proposed algorithm could be a suitable addition to the algorithm of Shamir and Sharan \cite{HSS02} for the online maintenance of cographs and it runs in time linear in the size of the input graph and requires linear space.

In this paper, we prove that for any optimal solution of the minimum $P_4$-sparse completion problem of a $P_4$-sparse graph G with an added edge, there is an optimal solution whose form is of one of a small number of possibilities. This along with the solution of the problem when the added edge connects two non-adjacent vertices of a spider or connects two vertices in different connected components of the graph enables us to present a polynomial-time algorithm for the problem.

\begin{figure}[t]
\centering
\hrule\medskip\medskip	
\includegraphics[width=4.8in]{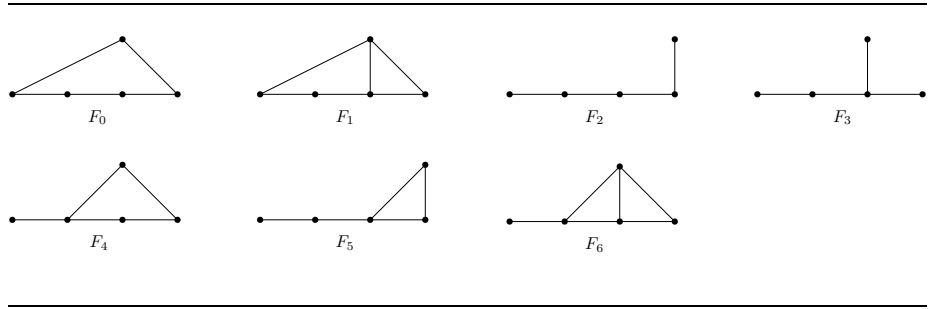}
\smallskip\medskip\hrule\medskip
\caption{The forbidden subgraphs of the class of $P_4$-sparse graphs (the naming follows \cite{JO92b}).}
\label{fig:forbidden}
\end{figure}

\section{Theoretical Framework}
We consider finite undirected graphs with no loops or multiple edges. For a graph~$G$, we denote by $V(G)$ and $E(G)$ the vertex set and edge set of $G$, respectively.
Let $S$ be a subset of the vertex set~$V(G)$ of a graph~$G$. Then, the subgraph of $G$ induced by $S$ is denoted by $G[S]$.

The {\it neighborhood\/}~$N(x)$ of a vertex~$x$ of the graph~$G$ is the set of all the vertices of $G$ which are adjacent to $x$. The {\it closed neighborhood\/} of $x$ is defined as $N[x] := N(x) \cup \{x\}$. The neighborhood of a subset~$S$ of vertices is defined as $N(S) := \left( \bigcup_{x \in S} N(x) \right) - S$ and its closed neighborhood as $N[S] := N(S) \cup S$.
The {\it degree\/} of a vertex $x$ in $G$, denoted $deg(x)$, is
the number of vertices adjacent to $x$ in $G$; thus, $deg(x) = |N(x)|$.
A vertex of a graph is \emph{universal} if it is adjacent to all other vertices of the graph. We extend this notion to a subset of the vertices of a graph~$G$ and we say that a vertex is \emph{universal in a set}~$S \subseteq V(G)$, if it is universal in the induced subgraph~$G[S]$.

Finally, by $P_k$ we denote the chordless path on $k$ vertices. In each $P_4$, the unique edge incident on the first or last vertex of the $P_4$ is often called a \emph{wing}.

\subsubsection{$P_4$-sparse Graphs}
A graph $H$ is called a \emph{spider} if its vertex set~$V(H)$ admits a partition into sets $S, K, R$ such that:
\begin{itemize}
\item
the set~$S$ is an independent (stable) set, the set~$K$ is a clique, and $|S| = |K| \ge 2$;
\item
every vertex in $R$ is adjacent to every vertex in $K$ and to no vertex in $S$;
\item
there exists a bijection $f: S \to K$ such that either $N_G(s) \cap K = \{f(s)\}$ for each vertex~$s \in S$ or else, $N_G(s) \cap K = K - \{f(s)\}$ for each vertex~$s \in S$; in the former case, the spider is \emph{thin}, in the latter it is \emph{thick}; see Figure~\ref{fig:spiders}.
\end{itemize}
The triple $(S, K, R)$ is called the \emph{spider partition}. Note that for $|S| = |K| = 2$, the spider is simultaneously thin and thick.

\begin{figure}[t!]
\centering
\hrule\medskip\medskip	
\includegraphics[width=2.8in]{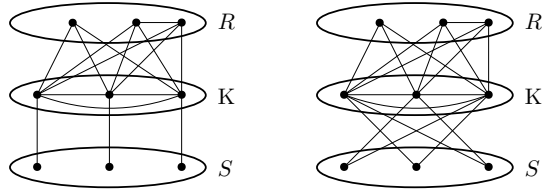}
\smallskip\medskip\hrule\medskip
\caption{(left)~A thin spider; (right)~a thick spider.}
\label{fig:spiders}
\end{figure}

In \cite{JO92b}, Jamison and Olariu showed that each $P_4$-sparse graph~$G$ admits a unique tree representation, up to isomorphism, called the \emph{$P_4$-sparse tree} $T(G)$ of $G$ which is a rooted tree such that:
\begin{itemize}
\item[(i)]
each internal node of $T(G)$ has at least two children provided that $|V(G)| \ge 2$;
\item[(ii)]
the internal nodes are labelled by either $0$, $1$, or $2$ (\emph{$0$-, $1$-, $2$-nodes, resp.}) and the parent-node of each internal node~$t$ has a different label than $t$;
\item[(iii)]
the leaves of the $P_4$-sparse tree are in a $1$-to-$1$ correspondence with the vertices of $G$; if the least common ancestor of the leaves corresponding to two vertices $v_i, v_j$ of $G$ is a $0$-node ($1$-node, resp.) then the vertices $v_i, v_j$ are non-adjacent (adjacent, resp.) in $G$, whereas the vertices corresponding to the leaves of a subtree rooted at a $2$-node induce a spider.
\end{itemize}

The structure of the $P_4$-sparse tree implies the following lemma.

\begin{lemma} \label{lemma:p4sparse_tree}
Let $G$ be a $P_4$-sparse graph and let $H = (S,K,R)$ be a thin spider of $G$. Moreover, let $s \in S$ and $k \in K$ be vertices that are adjacent in the spider.
\par\smallskip\noindent
\textbf{\ P1}.
Every vertex of the spider is adjacent to all vertices in $N_G(s) - \{k\}$.
\par\smallskip\noindent
\textbf{\ P2}.
Every vertex in $K - \{k\}$ is adjacent to all vertices in $N_G(k) - \{s\}$.
\end{lemma}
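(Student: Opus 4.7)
The plan is to work inside the $P_4$-sparse tree $T(G)$. Since $H$ is a spider of $G$, its vertex set $V(H)$ corresponds to the leaves of the subtree of $T(G)$ rooted at some $2$-node $t_H$. The unifying observation is the following: for any leaf $u$ lying outside the subtree rooted at $t_H$ and any leaf $v$ inside that subtree, one has $\operatorname{LCA}_T(v,u)=\operatorname{LCA}_T(t_H,u)$, because the path from $v$ to the root first passes through $t_H$ and from $t_H$ onward coincides with the path from $t_H$ to the root. Hence an external vertex $u$ is either adjacent in $G$ to every vertex of $V(H)$ (when this common ancestor is a $1$-node) or to none of them (when it is a $0$-node).

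With this in hand, P1 is almost immediate. First, $N_G(s)-\{k\}$ contains no vertex of $V(H)$: in the thin spider the only neighbour of $s\in S$ inside $H$ is $f(s)=k$, because $S$ is independent, $R$ is non-adjacent to $S$, and $s$ has a unique neighbour in $K$. Thus any $u\in N_G(s)-\{k\}$ lies outside the subtree rooted at $t_H$. The edge $su\in E(G)$ forces $\operatorname{LCA}_T(s,u)$ to be a $1$-node, so by the observation above $u$ is adjacent to every vertex of $V(H)$, which is precisely P1.

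The argument for P2 splits on whether $u$ is inside $V(H)$ or not. If $u\notin V(H)$, the same LCA argument applies with $k$ in place of $s$ and yields the adjacency of $u$ with every vertex of $V(H)$, in particular with $v$. The harder case, and the real obstacle, is $u\in V(H)$, since the tree argument cannot deal with in-spider pairs; here I will instead invoke the thin spider's defining adjacencies directly. Inside $H$ the neighbours of $k\in K$ are exactly $\{f^{-1}(k)\}\cup(K-\{k\})\cup R=\{s\}\cup(K-\{k\})\cup R$, so $u\neq s$ forces $u\in (K-\{k\})\cup R$. Since $K$ is a clique and $R$ is completely joined to $K$, the vertex $v\in K-\{k\}$ is adjacent to $u$, completing P2.
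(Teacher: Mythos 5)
Your strategy is the one the paper itself only gestures at (the lemma is stated as an immediate consequence of the tree structure, with no written proof), and the in-spider half of your P2 argument is fine. The genuine problem is in your ``unifying observation'' and, specifically, in the step ``the edge $su\in E(G)$ forces the least common ancestor of $s$ and $u$ to be a $1$-node.'' The common ancestor of $u$ and $t_H$ need not be a $0$- or $1$-node: the $2$-node $t_H$ may itself sit inside the $R$-part of a larger spider, and then the enclosing $2$-node is that ancestor. Concretely, let $G$ be the thin spider $(S',K',R')$ with $S'=\{a_1,a_2\}$, $K'=\{b_1,b_2\}$ and with $R'$ inducing a $P_4$ on $p,q,r,t$; take $H$ to be that $P_4$ (the spider of the inner $2$-node), $s=p$, $k=q$. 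Then $b_1\in N_G(s)-\{k\}$, and the least common ancestor of the leaves of $p$ and $b_1$ is the outer $2$-node, not a $1$-node; note that the paper's own tree model allows a $2$-node below a $2$-node via the $R$-set (see, e.g., the case analysis inside the proof of Lemma~\ref{lemma:2a_1}). The all-or-none conclusion you need is still true, but your argument does not cover this case. The repair is short: if the common ancestor is a $2$-node with spider $(S',K',R')$, then $V(H)\subseteq R'$ and $u\in S'\cup K'$; since vertices of $S'$ have no neighbours in $R'$, the edge $su$ (respectively $ku$, in the external case of P2) forces $u\in K'$, and $K'$ is completely joined to $R'\supseteq V(H)$. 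With this case added, both P1 and the external case of P2 go through.

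A smaller point: your opening sentence treats ``$V(H)$ is the leaf set of a subtree rooted at a $2$-node'' as a consequence of $H$ being a spider of $G$, whereas it is really the (necessary) interpretation of the statement. For an arbitrary induced thin spider the lemma is false: take $G$ to be the thick spider with $S=\{s,s',r\}$, $K=\{k,k',u\}$ and $f(s)=k'$, $f(s')=k$, $f(r)=u$; the set $\{s,s',k,k',r\}$ induces a bull, i.e.\ a thin spider with stable set $\{s,s'\}$, clique $\{k,k'\}$ and head $\{r\}$, yet $u\in N_G(s)-\{k\}$ is not adjacent to $r$. So you should state the $2$-node reading as an explicit hypothesis (it is the one the paper intends); once that is done and the missing $2$-node ancestor case is handled as above, your proof is correct and is precisely the argument the paper leaves implicit.
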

\par\smallskip

\medskip\noindent
\textbf{Note}. With a slight abuse of terminology, in the following, we will simply use the term \emph{edges} instead of fill edges, which in fact are non-edges of the given graph.

\begin{figure}[t]
	\hrule\medskip\medskip	
	\centering
	\includegraphics{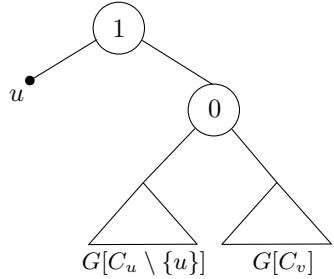}
	\smallskip\medskip\hrule\medskip
	\caption{The tree representation $T(G)$ with the vertex $u$ as universal in $G$.}
	\label{fig:u_universal}
\end{figure}
\section{Connecting two Connected Components}
\label{ch:cc}

In this section, we will consider the special case in which the given $P_4$-sparse graph~$G$ consists of $2$ connected components each containing one of the endpoints of the added non-edge~$u v$; we will cal this problem \emph{($P_4$-sparse-$2$CC,{$+$}$1$)-Min\-Edge\-Addition}. Let $C_u$ ($C_v$ respectively) be the connected component of $G$ containing $u$ ($v$ respectively). Clearly $V(G) = C_u \cup C_v$. It is not difficult to see that:

\begin{observation} \label{obs: solution_prop}
	Let $G$ be a disconnected graph consisting of $2$~connected components $C_u$ and $C_v$ such that $u \in C_u$ and $v \in C_v$, and consider the instance of the ($P_4$-sparse-$2$CC,+$1$)-Min\-Edge\-Addition Problem for the graph~$G$ and the non-edge $uv$. Then
	\begin{enumerate}[(i)]
		\item Each of the induced subgraphs~$G[C_u]$ and $G[C_v]$ is connected.
		\item In any optimal solution~$H$ to the ($P_4$-sparse-$2$CC,+$1$)-Min\-Edge\-Addition Problem for the graph~$G$ and the added non-edge $uv$ it holds that
		each of the induced subgraphs $H[C_u]$ and $H[C_v]$ is connected and the entire graph~$H$ is connected.
	\end{enumerate}
\end{observation}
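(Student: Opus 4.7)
The plan is to observe that both parts follow almost immediately from set-theoretic facts about connected components together with the monotonicity of connectivity under edge addition; no structural property of $P_4$-sparse graphs is actually used here.

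For part~(i), I would simply appeal to the definition: by hypothesis $C_u$ and $C_v$ are the vertex sets of the two connected components of $G$, so the induced subgraphs $G[C_u]$ and $G[C_v]$ are connected by definition of a connected component. Nothing else is needed.

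For part~(ii), let $H$ be any optimal solution. Since $H$ is a completion of $G+uv$, we have $E(G)\cup\{uv\}\subseteq E(H)$. In particular, for every $W\subseteq V(G)$ the graph $G[W]$ is a spanning subgraph of $H[W]$. Applying this with $W=C_u$ and using part~(i), $H[C_u]$ contains the connected spanning subgraph $G[C_u]$, hence is connected; the same argument handles $H[C_v]$. Finally, since $uv\in E(H)$ with $u\in C_u$ and $v\in C_v$, any two vertices $x,y\in V(H)=C_u\cup C_v$ can be joined in $H$ by concatenating a path inside $H[C_u]$ (or $H[C_v]$) with the edge $uv$ and another path inside the other component, which proves that $H$ itself is connected.

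There is no real obstacle; the statement is a bookkeeping observation that will be used later to justify treating the two components independently before stitching them together through the fill edge $uv$. The only thing to be careful about is to phrase part~(ii) so that it applies to \emph{any} optimal solution $H$, not only to a particular canonical one, which is automatic here because the argument uses only $E(G)\cup\{uv\}\subseteq E(H)$ and does not depend on which additional fill edges $H$ may contain.
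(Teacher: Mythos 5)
Your proof is correct and is exactly the intended argument: the paper states this observation without proof (``it is not difficult to see''), and your reasoning---components are connected by definition, $E(G)\cup\{uv\}\subseteq E(H)$ preserves connectivity of $H[C_u]$ and $H[C_v]$, and the edge $uv$ links the two---is the standard justification it relies on.
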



%

\noindent
Observation~\ref{obs: solution_prop}(ii) implies that the root node of the $P_4$-sparse tree of any optimal solution~$H$ of the ($P_4$-sparse-2CC,{$+$}$1$)-Min\-Edge\-Addition Problem for $G$ and $uv$ is a $1$-node or a $2$-node (for a thin or a thick spider) and these are the cases that we consider in the following subsections.

Before that, however, we note that we can get a $P_4$-sparse graph~$G'$ where $V(G') = V(G)$ and $E(G) \cup \{uv\} \subseteq E(G')$ by making $u$ universal in $G$ (Figure~\ref{fig:u_universal}) which requires $|V(G)| - 1 - deg_G(u)$ fill edges (including $uv$). A similar statement holds for $v$.

Also, it is important to note that for any two positive integers $i_1, i_2$, it holds that
\begin{equation} \label{eq:product}
\begin{split}
& i_1 \cdot i_2 \ \ge\  i_1 + i_2 - 1;\\
& \hbox{equality holds if } i_1 = 1 \hbox{ or } i_2 = 1.
\end{split}
\end{equation}
Note that $i_1 \cdot i_2 = i_1 + i_2 - 1 \ \Longleftrightarrow\ (i_1 - 1) \cdot (i_2 - 1) = 0$.

Our algorithm for the ($P_4$-sparse,{$+$}$1$)-Min\-Edge\-Addition Problem relies on the structure of the $P_4$-sparse tree of the given graph. In particular, for a $P_4$-sparse graph~$G$ and a vertex~$u$ of $G$, we define the subtrees $T_{u,1}(G)$, $T_{u,2}(G)$, $\ldots$. Let $t_1 t_2 \cdots t_r$ be the path in the $P_4$-sparse tree~$T_G$ of $G$ from the root node~$t_1$ to the leaf~$t_r$ corresponding to $u$. Then,
\begin{itemize}
	\item
	$T_{u,1}(G)$ is the subtree of $T_G$ containing $t_1$ after we have removed the tree edge~$t_1 t_2$;
	\item
	for $j = 2, 3, \ldots, r-1$, $T_{u,j}(G)$ is the subtree of $T_G$ containing $t_j$ after we have removed the tree edges~$t_{j-1} t_j$ and $t_j t_{j+1}$.
\end{itemize}
In Figure \ref{fig:definition_subtree}, it depicts the path $t_1 t_2 \cdots t_j t_{j+1} \cdots u$ and the subtrees $T_{u,1}(G)$, $ T_{u,2}(G)$, $\ldots$, $T_{u,j}(G)$, $T_{u,j+1}(G),\ldots$.

\begin{figure}[t]
	\hrule\medskip\medskip	
	\centering
	\includegraphics{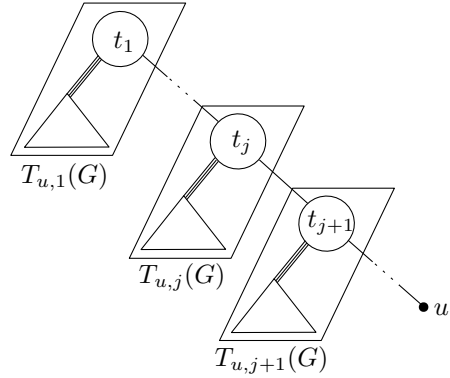}
	\smallskip\medskip\hrule\medskip
	\caption{The subtrees $T_{u,1}(G)$, $ T_{u,2}(G)$, $\ldots$, $T_{u,j}(G)$, $T_{u,j+1}(G),\ldots$ which contain the vertices $t_1, t_2, \ldots t_j, t_{j+1} \ldots $ respectively.}
	\label{fig:definition_subtree}
\end{figure}

\vspace*{0.07in}
\subsection{Case 1: The root node of the $P_4$-sparse tree~$T_H$ of the solution~$H$ is a $1$-node}

If the treenode corresponding to $u$ ($v$ resp.) in $T_H$ is a child of the root of $T_H$, then $u$ ($v$ resp.) is universal in $H$. So, in the following, assume that the treenodes corresponding to $u, v$ are not children of $T_H$'s root. Let $T_u, T_v$ be the subtrees rooted at the children of the root of $T_H$ containing the treenodes corresponding to $u$ and $v$, respectively. Next, we consider the cases whether $T_u = T_v$ and $T_u \ne T_v$.

\bigskip\noindent
\textbf{Case 1a. The vertices $u,v$ belong to the same subtree~$T$.}

We show the following lemma.

\begin{lemma} \label{lemma:1a_1}
	Suppose that the root node of the $P_4$-sparse tree~$T_H$ of an optimal solution~$H$ of the ($P_4$-sparse-2CC,{$+$}$1$)-Min\-Edge\-Addition Problem for the graph~$G$ and the non-edge~$uv$ is a $1$-node and that the vertices $u,v$ belong to the same subtree of the root of $T_H$.
	Then, there exists an optimal solution~$H'$ of the ($P_4$-sparse-2CC,{$+$}$1$)-Min\-Edge\-Addition Problem for the graph~$G$ and the non-edge~$uv$ (a)~which results from making $u$ or $v$ universal in $G$ or (b)~in which the subtree~$T_{u,1}(H')$ is identical to $T_{u,1}(G[C_u])$ or $T_{v,1}(G[C_v])$ or (c) $G[C_u]$ ($G[C_v]$ respectively) is a thin spider $(S',K',R')$ with $u \in S'$ ($v \in S'$ respectively) in which case $H'$ results from making the unique neighbor of $u$ ($v$ respectively) universal in $G$.
\end{lemma}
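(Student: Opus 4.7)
Let $H$ be an arbitrary optimal solution whose $P_4$-sparse tree $T_H$ has a $1$-node root $r$, let $t^*$ be the unique child of $r$ whose subtree $T^*$ contains both $u$ and $v$, and set $W = V(T^*)$ and $Z = V(H) \setminus W$. Condition~(i) of the $P_4$-sparse tree definition gives $Z \ne \emptyset$, and since $r$ is a $1$-node every vertex of $Z$ is adjacent in $H$ to every vertex of $W$; in particular every $z \in Z \cap C_v$ forces the fill edge $zu$ and every $z \in Z \cap C_u$ forces $zv$. The overall plan is to modify $H$, without increasing the number of fill edges, into an $H'$ of one of the three stated forms.

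As the first reduction, I would show WLOG $Z \subseteq C_u$ or $Z \subseteq C_v$. If both $Z \cap C_u$ and $Z \cap C_v$ are nonempty, then cross-pairs joined at $r$ produce $|Z \cap C_u|\cdot |Z \cap C_v|$ fill edges; inequality~(\ref{eq:product}) shows that pushing the smaller side down into a nested position inside $T^*$ converts this multiplicative cost into an additive one and hence cannot increase the total. By symmetry, assume $Z \subseteq C_u$, so $v \in W$ and $C_v \subseteq W$. I would then compare $H$ to the candidate $H^\star$ obtained by forcing $T_{u,1}(H^\star) = T_{u,1}(G[C_u])$ and packing all of $C_v$ into $T_{u,2}(H^\star)$ alongside the original $T_{u,2}(G[C_u])$, with $u$ and $v$ placed so as to be joined. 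Since the $P_4$-sparse tree of $G[C_u]$ is unique up to isomorphism, any other choice of top for $T^*$ either matches $T_{u,1}(G[C_u])$ already (so $H^\star$ has form~(b)) or introduces extra joins or $0$-unions whose cost, measured again by inequality~(\ref{eq:product}), is at least what a rearrangement would save.

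The one configuration in which the preceding rearrangement is obstructed is when $G[C_u]$ is a thin spider $(S',K',R')$ with $u \in S'$: then $T_{u,1}(G[C_u])$ has a rigid top, and packing $C_v$ into $T_{u,2}(H^\star)$ would force fill edges from every vertex of $C_v$ to $K' \cup R'$ (Lemma~\ref{lemma:p4sparse_tree} pins down exactly which edges). A direct count shows that this exceeds the cost of instead promoting the unique neighbor $k \in K'$ of $u$ to a universal vertex, which gives form~(c); the symmetric analysis handles the case $v \in S'$ for a thin spider $G[C_v]$. If neither (b) nor (c) beats the completion obtained by making $u$ (or $v$) universal in $G$, then form~(a) is itself optimal. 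The main obstacle I expect is the bookkeeping in the second paragraph: establishing that the top of $T^*$ can always be aligned with $T_{u,1}(G[C_u])$ without increasing the total fill count requires a careful case analysis on the labels ($0$-, $1$-, or $2$-node) of the internal nodes on the path from $r$ down to $u$ inside $T^*$, repeatedly invoking inequality~(\ref{eq:product}) to compare nested versus parallel placements.
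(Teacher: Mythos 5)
Your sketch has the right opening move (every vertex outside the subtree $T^*$ is joined in $H$ to all of $W=V(T^*)$, so cross-component joins can be counted against the cost of making $u$ or $v$ universal via inequality~(\ref{eq:product})), but two of its load-bearing steps are not proofs. First, the mixed-case reduction is argued with a wrong count: vertices of $Z$ that lie in the \emph{same} child subtree of the $1$-node root need not be adjacent in $H$, so "cross-pairs joined at $r$" do not produce $|Z\cap C_u|\cdot|Z\cap C_v|$ fill edges; and the proposed fix, "pushing the smaller side down into a nested position inside $T^*$", is a rearrangement whose $P_4$-sparseness and cost you never verify. (A correct and simpler route exists: count the forced fill edges between $Z\cap C_u$ and $W\cap C_v$ and between $Z\cap C_v$ and $W\cap C_u$, which by inequality~(\ref{eq:product}) already exceed $|V(G)|-1-\deg_G(u)$, so a mixed $Z$ contradicts optimality outright.)

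Second, and more seriously, the entire content of the lemma sits in your second paragraph, which you yourself flag as unfinished. Asserting that "any other choice of top for $T^*$ either matches $T_{u,1}(G[C_u])$ or introduces extra joins or $0$-unions whose cost is at least what a rearrangement would save" is exactly what must be proved, and it cannot be done by uniqueness of $T(G[C_u])$ plus inequality~(\ref{eq:product}) alone. The paper's argument splits on whether the root of $T_{u,1}(G[C_u])$ is a $1$-node or a $2$-node (thin/thick spider): in the $1$-node case it takes $H$ minimizing $Q=V(T_{u,1}(G[C_u]))\setminus V(T_{u,1}(H))$ and then controls the structure of the solution below the root by examining $T_{u,2}(H)$ and invoking Lemmas~\ref{lemma:1b_1}, \ref{lemma:2a_2}, \ref{lemma:2b}, \ref{lemma:2c}, \ref{lemma:3a_2} and \ref{lemma:3bc}; in the $2$-node case it performs explicit exchanges justified by the containments $N_G[w]\subseteq N_G[k_i]$ and $N_G[s_i]\subseteq N_G[k_i]$ and builds cheaper alternative completions, which is precisely how the residual case~(c) (thin spider with $u\in S'$, unique neighbor of $u$ made universal) is isolated. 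Your sketch uses none of this machinery, gives no "direct count" for case~(c), and never explains how outcome~(b) with $T_{v,1}(G[C_v])$ (rather than $T_{u,1}(G[C_u])$) can arise when $Z\subseteq C_u$ -- which in the paper comes out of the interaction with Lemma~\ref{lemma:2a_2}. As it stands the proposal is an outline of intent rather than a proof.
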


\begin{proof}
	We distinguish the following cases depending on the treenode type of the root of the subtree~$T_{u,1}(G[C_u])$; since the subgraph~$G[C_u]$ of $G$ induced by $C_u$ is connected (Observation~\ref{obs: solution_prop}(i)), the root of $T_{u,1}(G[C_u])$ is a $1$-node or a $2$-node.
	\begin{enumerate}
		\item [A.] \emph{The root of the subtree~$T_{u,1}(G[C_u])$ is a $1$-node}. If $V(T_{u,1}(G[C_u])) \subseteq V(T_{u,1}(H))$, then in $H$, every vertex in $V(T_{u,1}(G[C_u]))$ is adjacent to all the vertices in $V(G) \setminus V(T_{u,1}(G[C_u]))$; then, an optimal solution of the problem can be constructed from the join $G[V(T_{u,1}(G[C_u]))] + F$ where $F$ is an optimal solution after the addition of the non-edge~$uv$ in $G[V(G) \setminus V(T_{u,1}(G[C_u]))$.
		
		Let $Q = V(T_{u,1}(G[C_u])) \setminus V(T_{u,1}(H))$ and consider now the case in which $Q \ne \emptyset$; in particular, assume that $H$ is such that $|Q|$ is minimum. Then, $Q$ is universal in $G[C_u \setminus V(T_{u,1}(G[C_u]))]$, which includes $u$. Additionally, since $u$ is adjacent to all the vertices in $V(T_{u,1}(G[C_u]))$, the graph~$G[C_u \setminus V(T_{u,1}(H))$ is connected and so is $H[V(G) \setminus V(T_{u,1}(H))$; since the root of the $P_4$-sparse tree of $H$ is a $1$-node, then any non-leaf child of the root is a $2$-node.
		
		If the least common ancestor~$t$ of $u,v$ is not a child of the root of the $P_4$-sparse tree~$T_H$ of $H$, then the subtree~$T_{u,2}(H)$ is well defined and its root is a $2$-node; let $(S_1,K_1,R_1)$ be the corresponding spider and $u,v \in R_1$.
		\begin{itemize}
			\item
			\emph{The spider~$(S_1,K_1,R_1)$ is thin}. If $S_1 \cup K_1$ contains vertices from both $C_u$ and $C_v$, Lemma~\ref{lemma:2a_2} implies that there exists an optimal solution of the ($P_4$-sparse-2CC,{$+$}$1$)-Min\-Edge\-Addition Problem for the subgraph $G[V(G) \setminus V(T_{u,1}(H))]$ and the non-edge~$u v$ in which $u$ or $v$ is universal or the induced subgraphs $G[C_u \setminus V(T_{u,1}(H))]$ and $G[C_v]$ are as shown in Figure~\ref{fig:2a}. 
			In the latter case, we cannot have that $|R_1 \cap C_v| \le |R_1 \cap C_u|$ since then there exists an optimal solution~$H'$ of the ($P_4$-sparse-2CC,{$+$}$1$)-Min\-Edge\-Addition Problem for $G$ and $uv$ in which $Q \cup \{u'\}$ is universal in $G[C_u \setminus V(T_{u,1}(G[C_u]))]$, in contradiction to the minimality of $Q$. Now, if $|R_1 \cap C_u| < |R_1 \cap C_v|$ then $Q \cup \{v'\}$ is universal in $G[C_u \setminus V(T_{u,1}(G[C_u]))]$ and Lemma~\ref{lemma:1b_1} implies that there is an optimal solution with $u$ or $v$ is universal or the induced subgraphs $G[C_u \setminus V(T_{u,1}(H))]$.
			But if $u$ or $v$ is universal in $G[C_u \setminus V(T_{u,1}(H))]$, there exists an optimal solution of the ($P_4$-sparse-2CC,{$+$}$1$)-Min\-Edge\-Addition Problem for $G$ and $uv$ in which $u$ or $v$ is universal in $G$. 
			
			Let us now consider that $S_1 \cup K_1 \subseteq C_u$ or $S_1 \cup K_1 \subseteq C_v$. However, it is not possible that $S_1 \cup K_1 \subseteq C_u$, otherwise $S_1 \cap Q = \emptyset$ (no vertex in $S_1$ is adjacent to $u$) and then no vertex is adjacent to all the vertices in $G[S_1]$. Hence $S_1 \cup K_1 \subseteq C_v$ but then exchanging $T_{u,1}(H)$ and $T_{u,2}(H)$, we get an optimal solution with fewer fill edges.
			\item
			\emph{The spider~$(S_1,K_1,R_1)$ is thick}. Lemma~\ref{lemma:3a_2} implies that either $S_1 \cup K_1 \subseteq C_u$ or $S_1 \cup K_1 \subseteq C_v$; the former case is impossible otherwise $S_1 \cap Q = \emptyset$ (no vertex in $S_1$ is adjacent to $u$) and then no vertex is adjacent to all the vertices in $G[S_1]$, whereas in the latter case, by exchanging $T_{u,1}(H)$ and $T_{u,2}(H)$, we get an optimal solution with fewer fill edges.
		\end{itemize}
		
		If the least common ancestor~$t$ of $u,v$ is a child of the root of the $P_4$-sparse tree~$T_H$ of $H$, then $t$ is a $2$-node. If $G[C_u \setminus V(T_{u,1}(H))]$ is a $P_2$ then $|C_u| \ge 3$ and $u$ is universal in $G[C_u]$. Then the number of fill edges in $H$ is at least $(|C_u| - 2) \cdot |C_v| \ge |C_v|$ and thus there exists an optimal solution of the ($P_4$-sparse-2CC,{$+$}$1$)-Min\-Edge\-Addition Problem for the graph~$G$ and the non-edge~$uv$ with $u$ being universal in $H''$.
		On the other hand, $G[C_u \setminus V(T_{u,1}(H))]$ is not a spider since no subset of vertices are adjacent to all remaining vertices in a spider. Let $(S_2.K_2.R_2)$ be the spider corresponding to the treenode~$t$.
		\begin{itemize}
			\item
			\emph{The spider~$(S_2,K_2,R_2)$ is thin}.
			If one of $u,v$ belongs to $S_2 \cup K_2$ and the other belongs to $R_2$, the subgraph~$G[C_u]$ cannot be a $P_3$ with $u$ as an endpoint and Lemma~\ref{lemma:2b} implies that there exists an optimal solution~$H'$ with $u$ or $v$ being universal in $H'[V(G) \setminus V(T_{u,1}(H))]$.
			On the other hand, if $u,v$ in $S_2 \cup K_2$, since $G[C_u \setminus V(T_{u,1}(H))]$ is neither a $P_2$ nor a thin spider, then Lemma~\ref{lemma:2c} implies that there exists an optimal solution~$H'$ with $u$ or $v$ being universal in $H'[V(G) \setminus V(T_{u,1}(H))]$.
			\item
			\emph{The spider~$(S_2,K_2,R_2)$ is thick}. 
			Since at least one of $u,v$ belongs to $S_2 \cup K_2$, Lemma~\ref{lemma:3bc} implies that there exists an optimal solution~$H'$ with $u$ or $v$ being universal in $H'[V(G) \setminus V(T_{u,1}(H))]$.
		\end{itemize}
		Therefore, if the least common ancestor~$t$ of $u,v$ is a child of the root of $T_H$ there exists an optimal solution~$H'$ with $u$ or $v$ being universal in $H'[V(G) \setminus V(T_{u,1}(H))]$ directly implies that there exists an optimal solution~$H''$ ($P_4$-sparse-2CC,{$+$}$1$)-Min\-Edge\-Addition Problem for the graph~$G$ and the non-edge~$uv$ with $u$ or $v$ being universal in $H''$.
		
		\item [B.] \emph{The root of the subtree~$T_{u,1}(G[C_u])$ is a 2-node corresponding to a thin or a thick spider $(S_G,K_G,R_G)$}. Let $S_G = \{ s_1, \ldots, s_{|K_G|} \}$ and $K_G = \{ k_1, \ldots, k_{|K_G|} \}$ where $s_i, k_i$ ($i = 1, \ldots, |K_G|$) are adjacent (non-adjacent resp.) if $G[C_u]$ is a thin (thick, resp.) spider.
		If $K_G \not\subseteq V(T_{u,1}(H))$ and there exist vertices in $V(T_{u,1}(H)) \setminus (S_G \cup K_G)$, then we exchange vertices in $K_G \setminus V(T_{u,1}(H))$ with vertices in $V(T_{u,1}(H)) \setminus (S_G \cup K_G)$; note that for any vertex~$k_i \in K_G$ and any vertex~$w \in V(T_{u,1}(H)) \setminus (S_G \cup K_G)$, it holds that $N_G[w] \subseteq N_G[k_i]$. Additionally, for any $i$ ($i = 1, \ldots, |K_G|$) such that $s_i \in V(T_{u,1}(H))$ and $k_i \not\in V(T_{u,1}(H))$, we exchange $s_i$ and $k_i$; note that again $N_G[s_i] \subseteq N_G[k_i]$. After these exchanges, which do not increase the number of fill edges, we have constructed an optimal solution~$H'$ of the ($P_4$-sparse-2CC,{$+$}$1$)-MinEdgeAddition Problem for the graph~$G$ and the non-edge~$uv$, and in $H'$, there is no vertex~$s_j$ in the resulting $V(T_{u,1}(H))$ such that $k_j \not\in V(T_{u,1}(H))$.
		
		Let $K' \subseteq K_G$ be the set of vertices $k_j \in V(T_{u,1}(H))$ such that $s_j \ne u$ and let $S' = \{ \,s_j \,| \,k_j \in K' \,\}$. Then, if $|K'| = 1$ and $K' = \{k_j\}$, let $F$ be the graph resulting from $H'$ after having removed the vertices $s_j, k_j$, having inserted $s_j$ as an isolated vertex, and after having made $k_j$ as a universal vertex whereas if $|K'| \ge 2$, let $F$ be the spider $(S',K',V(G) \setminus (S' \cup K')$ where $F[V(G) \setminus (S' \cup K')] = H'[V(G) \setminus (S' \cup K')]$ (the spider is thin or thick if and only if the spider $(S_G,K_G,R_G)$ is thin or thick respectively). In either case, the graph~$F$ is $P_4$-sparse and a completion of $G$ including the non-edge~$uv$ and has fewer fill edges than $H$, a contradiction. The only possibility is that $V(T_{u,1}(H)) = \{k_j\}$ such that $s_j = u$.
	\end{enumerate}
\end{proof}


\bigskip\noindent
\textbf{Case 1b. The vertices $u,v$ belong to subtrees~$T_u,T_v$, respectively, with $T_u \ne T_v$.}
Then, we show the following lemma.

\begin{lemma} \label{lemma:1b_1}
	Let $H$ be an optimal solution of the ($P_4$-sparse-2CC,{$+$}$1$)-Min\-Edge\-Addition Problem for the graph~$G$ and the non-edge~$uv$ and suppose that the vertices $u,v$ belong to subtrees~$T_1,T_2$, respectively, of the root of the $P_4$-sparse tree~$T_H$ of $H$.
	If $A = V(T_1)$ and $B = V(G) \setminus V(T_1)$, then it is not possible that $A \cap C_v \ne \emptyset$ and $B \cap C_u \ne \emptyset$ and there exists an optimal solution of the ($P_4$-sparse-2CC,{$+$}$1$)-Min\-Edge\-Addition Problem for the graph~$G$ and the non-edge~$uv$ which results from making $u$ or $v$ universal in $G$.
\end{lemma}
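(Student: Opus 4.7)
The plan is to combine a cut-counting argument at the $1$-node root of $T_H$ with the product inequality~(\ref{eq:product}), and then to compare $H$ against the obvious feasible solution obtained by making $u$ or $v$ universal in $G$. I would first introduce $a_u=|A\cap C_u|$, $a_v=|A\cap C_v|$, $b_u=|B\cap C_u|$ and $b_v=|B\cap C_v|$; note that $a_u\ge 1$ and $b_v\ge 1$ since $u\in A\cap C_u$ and $v\in B\cap C_v$. Because the root of $T_H$ is a $1$-node, every vertex of $A$ is adjacent in $H$ to every vertex of $B$, while $G$ has no edges between $C_u$ and $C_v$; hence the number of fill edges of $H$ that cross the cut $(A,B)$ is at least
\[
|A|\cdot|B|-a_ub_u-a_vb_v \;=\; a_ub_v+a_vb_u.
\]

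For the impossibility part I would proceed by contradiction, assuming both $a_v\ge 1$ and $b_u\ge 1$. Applying~(\ref{eq:product}) to each summand would give
\[
\mathrm{fill}(H)\;\ge\;a_ub_v+a_vb_u\;\ge\;(a_u+b_v-1)+(a_v+b_u-1)\;=\;n-2,
\]
where $n=|V(G)|$. On the other hand, $b_u\ge 1$ forces $|C_u|\ge 2$, and Observation~\ref{obs: solution_prop}(i) guarantees that $G[C_u]$ is connected, so $\deg_G(u)\ge 1$; therefore the completion that makes $u$ universal in $G$ is feasible with only $n-1-\deg_G(u)\le n-2$ fill edges. Optimality of $H$ forces $\mathrm{fill}(H)\le n-1-\deg_G(u)$, so the whole chain must collapse to equalities: $\deg_G(u)=1$ and $\mathrm{fill}(H)=n-2$. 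Tightness in the cut estimate then forces $e_u=a_ub_u$ and $e_v=a_vb_v$ (both component-cut bipartite subgraphs are complete in $G$), and tightness in~(\ref{eq:product}) forces $(a_u=1\ \mathrm{or}\ b_v=1)$ together with $(a_v=1\ \mathrm{or}\ b_u=1)$. A symmetric argument, comparing instead with the completion that makes $v$ universal, yields $\deg_G(v)=1$ and $a_v=1=b_u$.

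The equality case just extracted already shows that the completion obtained by making $u$ (or $v$) universal has fill count $n-2=\mathrm{fill}(H)$ and is therefore also optimal, establishing the existential half of the lemma. For the non-existence half I would close the argument by an explicit exchange: writing $\{x\}=A\cap C_v$ and $\{w\}=B\cap C_u$, the cut equalities imply that $x$ has no $G$-neighbour in $A\setminus\{x\}$ and $w$ has no $G$-neighbour in $B\setminus\{w\}$, so re-attaching $x$ (respectively $w$) as a sibling of the $B$-subtree (respectively $A$-subtree) of $T_H$ preserves $P_4$-sparseness as well as the fill count, producing an equally-optimal $H'$ with $A\cap C_v=\emptyset$ or $B\cap C_u=\emptyset$. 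Hence the forbidden configuration can always be removed from any optimum, and a universal-vertex optimum is always available. The delicate step is precisely this tight equality case: the product inequality yields no strict gain, and the proof must be closed by the structural exchange above rather than by a strict improvement in the fill count.
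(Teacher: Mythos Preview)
Your argument has two genuine gaps.

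First, the impossibility claim is about the given optimal solution~$H$ itself, not about some other optimal solution you can exchange into. Reaching the tight case $\mathrm{fill}(H)=n-2$ and then ``removing the forbidden configuration'' by moving $x$ (or $w$) to another subtree does not contradict anything: it would only show that \emph{some} optimum avoids the configuration, whereas the lemma asserts that \emph{every} optimum of this shape does. Moreover, your exchange is broken as stated: re-attaching $x\in A\cap C_v$ as a sibling of the $B$-subtree under the $1$-node root makes $x$ adjacent to all of $A\setminus\{x\}$, and since $x$ has no $G$-neighbour there (as you correctly observe), this strictly \emph{adds} $|A|-1$ fill edges rather than preserving the count. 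The paper closes the tight case differently: from $\deg_G(u)=\deg_G(v)=1$ and the equality conditions one pins down the structure (e.g.\ $|C_u|=2$, $v'$ universal in $G[C_v]$, $v$ adjacent only to $v'$) and then exhibits explicit $P_4$-sparse completions using strictly fewer than $n-2$ fill edges (one, two, or three fill edges depending on $|C_v|$), yielding a genuine contradiction. Your proof needs this strict improvement, not a lateral exchange. (Incidentally, the claim ``$a_v=1=b_u$ by symmetry'' is not justified: symmetry only reproduces the same disjunctions $(a_u{=}1\ \mathrm{or}\ b_v{=}1)$ and $(a_v{=}1\ \mathrm{or}\ b_u{=}1)$; the paper handles the equality cases by a short case analysis.)

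Second, the existential half of the lemma is proved under the wrong hypothesis. You only argue that a universal-vertex solution is optimal \emph{inside} the configuration you are simultaneously proving impossible. Once impossibility is established, one of $A\cap C_v$, $B\cap C_u$ is empty, say $A\cap C_v=\emptyset$; you must still show that making $u$ (or $v$) universal is optimal in this remaining, legitimate case. The paper does this with a second short count: if $|A|\ge 2$ then
\[
\mathrm{fill}(H)\ \ge\ |B\setminus N_G(u)|+|A\setminus\{u\}|\cdot|C_v|\ \ge\ |B\setminus N_G(u)|+|A|-1+|C_v|-1\ \ge\ |V(G)\setminus N_G(u)|,
\]
while if $A=\{u\}$ then $u$ is already universal in $H$. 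Your proposal omits this step entirely.
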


\begin{proof}
	The definition of $A,B$ implies that $u \in A$ and $v \in B$.
	First we prove that it is not possible that that $A \cap C_v \ne \emptyset$ and $B \cap C_u \ne \emptyset$. Suppose for contradiction that $A \cap C_v \ne \emptyset$ and $B \cap C_u \ne \emptyset$. By considering only fill edges with one endpoint in $C_u$ and the other in $C_v$, we have that the number~$N$ of fill edges is
	\begin{equation*} \label{eq1}
	\begin{split}
	N & \ge\ |A \cap C_u| \cdot |B \cap C_v| + |A \cap C_v| \cdot |B \cap C_u|\\
	& \ge\  (|A \cap C_u| + |B \cap C_v| - 1) + (|A \cap C_v| + |B \cap C_u| - 1)
	\ =\ |V(G)| - 2
	\end{split}
	\end{equation*}
	On the other hand, if we make $u$ or $v$ universal, we need $|V(G)| - 1 - deg_G(u)$ and $|V(G)| - 1 - deg_G(v)$ fill edges respectively. Then the optimality of $H$ implies that $deg_G(u) \le 1$ and $deg_G(v) \le 1$. Since $A \cap C_v \ne \emptyset$ and $v \in B$, we have that $|C_v| \ge 2$ which implies that $deg_G(v) \ge 1$ because the induced subgraph~$G[C_v]$ is connected (Observation~\ref{obs: solution_prop}(i)); thus, $deg_G(v) = 1$. In a similar fashion, $deg_G(u) = 1$. Then, the number of fill edges needed to make $u$ or $v$ universal is $|V(G)| - 2$ and the optimality of $H$ along with Equation~\ref{eq:product}
	imply that
	\begin{itemize}
		\item at least one of $|A \cap C_u|$, $|B \cap C_v|$ is equal to $1$;
		\item at least one of $|A \cap C_v|$, $|B \cap C_u|$ is equal to $1$;
		\item
		no more fill edges are used in $H$ which implies that $H[C_u] = G[C_u]$ and $H[C_v] = G[C_v]$.
	\end{itemize}
	
	\begin{figure}[t!]
		\hrule\medskip\medskip	
		\centerline{\includegraphics[scale=1.0]{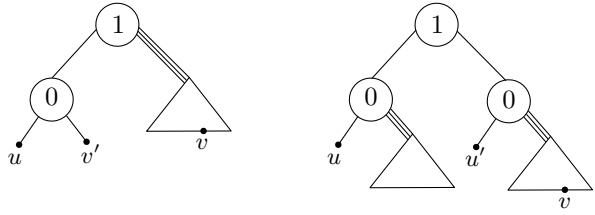}}
		\smallskip\medskip\hrule\medskip
		\caption{The $P_4$-sparse tree~$T_H$ of the optimal solution~$H$ in Cases~(i) and (ii) of the proof of Lemma~\ref{lemma:2a_2} respectively.}
		\label{fig:lemma3_2}
	\end{figure}
	
	We consider the following two main cases; the remaining ones are similar.
	\begin{itemize}
		\item[(i)]
		$|A \cap C_u| = 1$ and $|A \cap C_v| = 1$: Then, $A \cap C_u = \{u\}$ and if $A \cap C_v = \{v'\}$, the $P_4$-sparse tree of $H$ is as shown in Figure~\ref{fig:lemma3_2}(left) which implies that $u$ is universal in $H[C_u] = G[C_u]$ and that $v'$ is universal in $H[C_v] = G[C_v]$. The fact that $deg_G(u) = 1$ yields $|C_u| = 2$ and the fact that $deg_G(v) = 1$ yields that $v$ is adjacent only to $v'$ in $G[C_v]$. Figure~\ref{fig:1b_a2} shows solutions to the ($P_4$-sparse-2CC,{$+$}$1$)-Min\-Edge\-Addition Problem for the graph~$G$ and the non-edge~$uv$ contradicting the optimality of $H$ which requires $2$, $3$, and $|C_v|$ fill edges (including the edge~$uv$) in case~(a), (b), and (c) respectively.
		\item[(ii)]
		$|A \cap C_u| = 1$ and $|B \cap C_u| = 1$: Then, $A \cap C_u = \{u\}$ and $C_u = \{u,u'\}$ where $B \cap C_u = \{u'\}$. The optimality of $H$ implies that the $P_4$-sparse tree of $H$ is as shown in Figure~\ref{fig:lemma3_2}(right). Moreover, since $H[C_v] = G[C_v]$ and $deg_G(v) = 1$, we conclude that $|A \cap C_v| = 1$ which leads to the setting of Case~(i).
	\end{itemize}
	We reached a contradiction in each case. Then either $A \cap C_v = \emptyset$ or $B \cap C_u = \emptyset$. Suppose without loss of generality that $A \cap C_v = \emptyset$.
	If $A = \{u\}$ then $H'$ = $H$ and we are done. Suppose next that $|A| \ge 2$. Then the number of fill edges~$N$ in $H$ is at least equal to
	\begin{align*}
	N \ \ge\  & |B \setminus N_G(u)| +
	|A \setminus \{u\}| \cdot |C_v|
	\ \ge\  |B \setminus N_G(u)| + |A| - 1 + |C_v| - 1\\
	\ge\  & |V(G) \setminus N_G(u)| + |C_v| - 1
	\ \ge\  |V(G) \setminus N_G(u)|
	\end{align*}
	which implies that there is an optimal solution with $u$ being a universal vertex.

\end{proof}

\begin{figure}[tb]
	\hrule\medskip\medskip	
	\centerline{\includegraphics[scale=1.0]{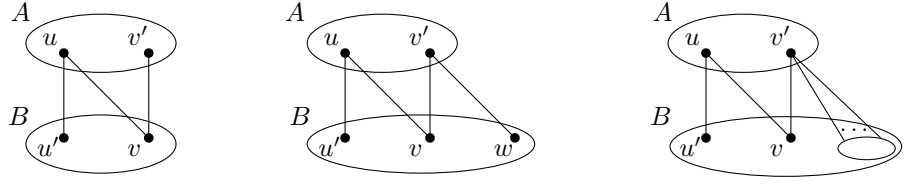}}
	\smallskip\medskip\hrule\medskip
	\caption{(a)~$|C_v| = 2$: only the fill edge~$uv$ is needed; (b)~$|C_v| = 3$: only the fill edges $uv$ and $u v'$ are needed; (c)~$|C_v| \ge 4$: only the fill edges $uv$, $u v'$, and $u' v'$ are needed ($C_u = \{u,u'\}$).}
	\label{fig:1b_a2}
\end{figure}

\vspace*{0.07in}
\subsection{Case 2: The root node of the $P_4$-sparse tree of the solution~$H$ is a $2$-node corresponding to a thin spider $(S,K,R)$}

We first prove some important properties for the optimal solution~$H$ in this case.

\begin{observation} \label{obs:2a}
	Suppose that an optimal solution~$H$ of the ($P_4$-sparse-2CC,{$+$}$1$)-Min\-Edge\-Addition Problem for a $P_4$-sparse graph~$G$ and a non-edge~$uv$ is a thin spider~($S,K,R$). Then:
	\begin{itemize}
		\item[(i)] For each edge~$a b$ in $H$ such that $a \in K$, $b \in S$, and $b$ is not $u$ or $v$, the vertices $a,b$ are adjacent in $G$ (i.e., $a b$ is \emph{not} a fill edge).
		\item[(ii)] For each edge~$a c$ in $H$ such that $a, c \in K \cap C_u$, the vertices $a,c$ are adjacent in $G$ (i.e., $a c$ is \emph{not} a fill edge); a symmetric result holds if $a, c \in K \cap C_v$.
	\end{itemize}
\end{observation}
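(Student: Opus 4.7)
Both parts of the observation are proved by contradiction, assuming that the edge in question is a fill edge (i.e., not in $E(G)$). Part~(i) reduces to a purely structural argument using the definition of a thin spider together with Observation~\ref{obs: solution_prop}(i); part~(ii) requires an exchange/restructuring argument that uses the optimality of $H$.

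For part~(i), since $H$ is a thin spider $(S,K,R)$ and $b\in S$, the definition of a thin spider forces $N_H(b)=\{f(b)\}$, where $f(b)\in K$ is $b$'s unique partner. The hypothesis $ab\in E(H)$ with $a\in K$ then yields $a=f(b)$, so $N_H(b)=\{a\}$. Because $H$ is a completion of $G+uv$, $E(G)\subseteq E(H)$, whence $N_G(b)\subseteq N_H(b)=\{a\}$. If $ab$ were a fill edge, then $a\notin N_G(b)$ and so $N_G(b)=\emptyset$, meaning $b$ is isolated in $G$. But $b\notin\{u,v\}$, and by Observation~\ref{obs: solution_prop}(i) each of $G[C_u]$ and $G[C_v]$ is connected; thus an isolated $b$ would form a third connected component of $G$, contradicting the hypothesis that $G$ consists of exactly the two components $C_u$ and $C_v$.

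For part~(ii), suppose for contradiction that $a,c\in K\cap C_u$ with $ac\in E(H)\setminus E(G)$, and exhibit a $P_4$-sparse completion $H'$ of $G+uv$ with strictly fewer fill edges. The first step is to apply part~(i) to the partners $f(a),f(c)\in S$: either such a partner lies in $\{u,v\}$, or it is a $G$-pendant inside $C_u$ of the corresponding $K$-vertex (since $N_H(\cdot)=\{\cdot\}$ forces $|N_G(\cdot)|\le 1$). The second step is to construct $H'$ from $H$ by restructuring the thin spider so as to eliminate the forced adjacency $ac$; a typical modification removes the pair $a,f(a)$ from $(S,K)$ and re-attaches them inside the subtree rooted at the former $R$-part, producing a smaller thin spider (valid when $|K|\ge 3$). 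When no local modification strictly improves the count, one compares the fill-edge count of $H$ with the cost $|V(G)|-1-\deg_G(u)$ of making $u$ universal in $G$ (as noted before Case~1), using the cross-component fill edges that the thin spider is forced to contain together with the extra fill edge $ac$ in order to derive the strict inequality via the product inequality in Equation~\ref{eq:product}.

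The main obstacle is the case analysis in part~(ii): one must show strict improvement in every configuration. Particularly delicate are the boundary subcases (a)~$|K|=2$, where the spider has minimum size and no vertex can be removed from $K$ without destroying the spider property, and (b)~$f(a)\in\{u,v\}$ or $f(c)\in\{u,v\}$, which impose strong constraints (e.g., $u\in S$ combined with $uv\in E(H)$ and $N_H(u)=\{f(u)\}$ forces $f(u)=v$, and then $a=v$ or $c=u$ in the corresponding branches). In each such subcase, one must exhibit an explicit $H'$—either a local swap preserving the spider, a restructuring to a smaller spider, or a completion making $u$ or $v$ universal—and verify by direct fill-edge accounting that $H'$ is cheaper than $H$, thereby contradicting optimality.
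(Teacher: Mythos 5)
Your part (i) is correct, though it takes a different route from the paper: you argue structurally that $N_G(b)\subseteq N_H(b)=\{a\}$ and that $b$ cannot be isolated in $G$ because each of the two components $C_u,C_v$ is connected and contains $u$ or $v\ (\neq b)$, whereas the paper deletes the fill edge $ab$ from $H$ (leaving $b$ isolated in a graph that is still $P_4$-sparse and still contains $E(G)\cup\{uv\}$, since $ab\neq uv$) and contradicts optimality. Either argument works; yours even avoids using optimality.

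Part (ii), however, is a plan rather than a proof, and not the paper's plan. You announce an exchange/restructuring strategy and then state that ``one must exhibit an explicit $H'$ \ldots and verify by direct fill-edge accounting'' in every configuration, but none of that case analysis is actually carried out, so there is nothing to check. The paper's argument is short and does not use optimality at this point: by part (i), the $S$-partners $a'$ of $a$ and $c'$ of $c$ satisfy $N_G(a')=\{a\}$, $N_G(c')=\{c\}$ and lie in $C_u$; if $ac\notin E(G)$, a chordless $a'$--$c'$ path in the connected graph $G[C_u]$ must start with the edge $a'a$, end with the edge $cc'$, and hence has at least five vertices, so $G$ would contain an induced $P_5$, contradicting that $G$ is $P_4$-sparse. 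Moreover, your delicate subcase (b), where the partner of $a$ or $c$ is $u$ or $v$, cannot be closed by a ``strictly cheaper $H'$'' argument at all: take $G[C_u]$ to be the path $u\,x\,c\,c'$ and $C_v=\{v\}$; the thin spider with $K=\{u,c\}$, $S=\{v,c'\}$, $R=\{x\}$ uses only the two fill edges $uv$ and $uc$ and is optimal (adding $uv$ alone leaves a $P_5$), yet $uc$ is a fill edge with both endpoints in $K\cap C_u$. So in that configuration no cheaper completion exists and the strict improvement you rely on is impossible; the claim is safe only when the partners of $a$ and $c$ avoid $\{u,v\}$ (e.g., when $u,v\in R$, which is where the observation is mainly applied), and that is precisely the hypothesis the paper's pendant-path argument uses implicitly through part (i). As written, your proposal neither proves the generic case nor resolves this boundary case.
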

\begin{proof}
	(i)~Suppose without loss of generality that $a, b$ are not adjacent in $G$; then, $ab$ is a fill edge in $H$. Let $H'$ be the graph resulting from $H$ after we have removed the edge~$ab$. The graph~$H'$ is $P_4$-sparse since it is the union of the isolated vertex~$b$ with the induced subgraph~$H[V(G) \setminus \{b\}]$; in fact, since $b$ is not $u$ or $v$, it is an optimal solution of the ($P_4$-sparse,{$+$}$1$)-Min\-Edge\-Addition Problem for the $P_4$-sparse graph~$G$ and the non-edge~$uv$ and it has $1$ fewer fill edge than $H$, in contradiction to the optimality of $H$. Therefore, $a, b$ are adjacent in $G$.
	
	\medskip\noindent
	(ii)~We concentrate only in the case in which $a, c \in K \cap C_u$. In $H$, let $a'$ ($c'$, resp.) be the unique neighbor of $a$ ($c$, resp.) in $S$; by statement~(i) of this observation, $a',c' \in C_u$, and $a, a'$ and $c, c'$ are adjacent in $G$.
	Now, suppose, for contradiction, that $a,c$ are not adjacent in $G$. Since $a,c \in C_u$ and the induced subgraph~$G[C_u]$ is connected (Observation~\ref{obs: solution_prop}(i)), there is a path connecting $a'$ to $c'$ in $G[C_u]$, and in fact there is a chordless such path~$\rho$. Clearly, $\rho$ starts with the edge~$a' a$, ends at the edge~$c c'$ and has length at least~$4$; thus, $G$ contains an induced chordless path on at least~$5$ vertices, in contradiction to the fact that $G$ is $P_4$-sparse.

\end{proof}



\bigskip\noindent
\textbf{Case 2a. The vertices $u,v$ belong to $R$.} 
Since $u,v \in R$, it is possible that $S \cup K \subset C_u$ or $S \cup K \subset C_v$. For these cases, we show the following lemma.

\begin{lemma} \label{lemma:2a_1}
	Suppose that the optimal solution~$H$ of the ($P_4$-sparse,{$+$}$1$)-Min\-Edge\-Addition Problem for a $P_4$-sparse graph~$G$ and a non-edge~$uv$ is a thin spider~($S,K,R$) with $u,v \in R$.
	If $S \cup K \subseteq C_u$ then there exists an optimal solution~$H'$ of the ($P_4$-sparse-2CC,{$+$}$1$)-Min\-Edge\-Addition Problem for the graph~$G$ and the non-edge~$uv$ (a)~which results from making $u$ or $v$ universal in $G$ or (b)~in which $T_{u,1}(H) = T_{u,1}(G[C_u])$ or $T_{u,1}(H) = T_{v,1}(G[C_v])$.
	\\
	A symmetric result holds if $S \cup K \subseteq C_v$.
\end{lemma}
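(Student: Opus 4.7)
The plan is to apply Observation~\ref{obs:2a} to pin down the top of the $P_4$-sparse tree of $H$, and then to use an exchange argument to either force $T_{u,1}(H') = T_{u,1}(G[C_u])$, or else restructure $H$ into an equally good solution that matches $T_{v,1}(G[C_v])$ or collapses to the universal-vertex form of (a).

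The first step is structural. Since $u,v \in R$ we have $u,v \notin S$, so Observation~\ref{obs:2a}(i) forces every $K$-$S$ edge of $H$ to be an edge of $G$. Since $K \subseteq C_u$ by hypothesis, Observation~\ref{obs:2a}(ii) forces every $K$-$K$ edge of $H$ to be an edge of $G$. Because $S$ is independent in $H$ (hence in $G$), we obtain $H[S \cup K] = G[S \cup K]$: the thin spider sitting at the top of $T(H)$ is already present as an induced subgraph of $G$ on $S \cup K$, with exactly the same partition and bijection.

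The second step concerns $W := R \cap C_u$, which contains $u$ and whose vertices are adjacent to every vertex of $K$ in $H$. The key claim is that, after possibly replacing $H$ by an equally good completion $H'$, we may assume that each $w \in W$ is adjacent in $G$ to every vertex of $K$ and non-adjacent in $G$ to every vertex of $S$. I establish this by exchange: for any $w \in W$ and $k \in K$ with $wk$ a fill edge of $H$, I detach $w$ from its position in $H$'s $R$-subtree and reinsert it so that only its true $G$-neighbors within $C_u$ are kept, using the local tree structure of $G[C_u]$ at $w$; a careful accounting, together with the absence of $S$-$R$ edges in $H$, shows that the resulting graph is a $P_4$-sparse completion of $G+uv$ with no more fill edges than $H$. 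Iterating this operation turns $(S, K, W)$ into a thin spider partition of $G[C_u]$, so the root of $T(G[C_u])$ is the corresponding $2$-node, and consequently $T_{u,1}(H') = T_{u,1}(G[C_u])$, which is option (b).

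If at some step the exchange cannot be completed without introducing a forbidden induced subgraph of Figure~\ref{fig:forbidden} (i.e., the local structure of $G[C_u]$ around $w$ is incompatible with the spider root), I apply the symmetric construction in which the top spider of $H'$ is drawn from $G[C_v]$'s top rather than $G[C_u]$'s, producing an optimum with $T_{u,1}(H') = T_{v,1}(G[C_v])$. When neither structural matching is possible, a direct counting using the $|K|\cdot|C_v|$ forced fill edges between $K$ and $C_v$ together with the obligatory $uv$ edge dominates the cost $|V(G)|-1-\deg_G(v)$ (or, symmetrically, $|V(G)|-1-\deg_G(u)$) of making $v$ (resp. $u$) universal in $G$, so a universal-vertex completion is already optimal and (a) applies. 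The main obstacle is the exchange step: showing that the re-insertion of $w$ preserves $P_4$-sparseness while strictly avoiding the addition of new fill edges requires a careful case analysis of the subtree containing $w$ in $T(H)$ and of the interaction between its $G$-neighborhood and the spider $(S,K)$; the degenerate case $|W| = 1$ (so $W=\{u\}$) is immediate, since the top of $T(H)$ is then forced to match $T_{u,1}(G[C_u])$ without any exchange.
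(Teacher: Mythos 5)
Your opening step is sound and matches the paper's starting point: since $u,v\in R$ and $K\subseteq C_u$, Observation~\ref{obs:2a}(i)--(ii) indeed gives $H[S\cup K]=G[S\cup K]$ (the paper extracts essentially the same information via degree arguments, also ruling out that the root of $T_{u,1}(G[C_u])$ is a $1$-node or a thick-spider $2$-node). The genuine gap is your second step, which is exactly where the difficulty of the lemma lies and which you only assert. The ``detach $w$ and reinsert it keeping only its true $G$-neighbors'' operation is not a well-defined local move: once $w\in R\cap C_u$ loses a fill edge to $K$ it can no longer sit in the head~$R$ of the spider, and there is no obvious position in the tree that keeps the graph $P_4$-sparse without creating new fill edges; the claim that iterating this ``turns $(S,K,R\cap C_u)$ into a thin spider partition of $G[C_u]$'' silently skips the central case in which the clique of $H$'s spider is a \emph{proper} subset of the clique $K_G$ of the top spider of $G[C_u]$ (so the vertices of $S_G\setminus S$ lie in $R$ of $H$ and are joined to $K$ by fill edges). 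Handling this case is the bulk of the paper's proof: it proceeds by a case analysis on whether the least common ancestor of $u,v$ is a child of the root of $T_H$ and on the type of $T_{u,2}(H)$, invoking Lemmas~\ref{lemma:1a_1}, \ref{lemma:1b_1}, \ref{lemma:2a_2}, \ref{lemma:2b}, \ref{lemma:2c}, \ref{lemma:3a_2} and \ref{lemma:3bc} recursively on $G[V(G)\setminus(S\cup K)]$ together with explicit edge counts. Writing ``a careful accounting/case analysis shows'' at this point is restating the lemma, not proving it.

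Your fallback is also quantitatively wrong: the count $|K|\cdot|C_v|+1$ does \emph{not} dominate $|V(G)|-1-\deg_G(v)$ in general. Take $|K|=2$, $|C_v|=2$ and $|C_u|$ large (a thin spider $G[C_u]$ with a big head $R_G$): then $|K|\cdot|C_v|+1=5$ while making $v$ universal costs roughly $|C_u|$, so the universal solution is far from optimal and the correct conclusion in that regime is option~(b) with $T_{u,1}(H)=T_{u,1}(G[C_u])$, which your counting step would miss. Likewise, the criterion you give for switching to the $C_v$-based matching (``if the exchange would create a forbidden subgraph'') is never characterized, so the trichotomy (a)/(b)/(b$'$) is not actually established. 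In short, the skeleton is plausible but the two load-bearing claims --- the cost-neutral, sparseness-preserving exchange and the dominance of the universal-vertex bound --- are respectively unproved and false as stated.
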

\begin{proof}
	We consider the following cases that cover all possibilities:
	\begin{enumerate}
		\item[A.] \emph{The root node of the tree~$T_{u,1}(G[C_u])$ is a $1$-node}. This implies that every vertex in $V(T_{u,1}(G[C_u]))$ is adjacent to all vertices in $C_u \setminus V(T_{u,1}(G[C_u]))$ and in particular to $u$. On the other hand, the vertices in $S$ are not adjacent to $u$ in $H$ and consequently are not adjacent to $u$ in $G$; hence, since $S \subset C_u$, it holds that $S \subset C_u \setminus V(T_{u,1}(G[C_u]))$ which in turn implies that in $G$, all the vertices in $V(T_{u,1}(G[C_u]))$ are adjacent to all the vertices in $S$ and this is also true in $H$. But this is \emph{impossible} since no vertex in $H$ is adjacent to all vertices in $S$.
		
		\item[B.] \emph{The root node of the tree~$T_{u,1}(G[C_u])$ is a $2$-node corresponding to a thin spider $(S_G,K_G,R_G)$}.
		Since each vertex in $C_u \setminus S_G$ has degree at least~$2$ in $G$ and thus it has degree at least $2$ in $H$, and each vertex in $S \subset C_u$ has degree~$1$, we conclude that $S \subseteq S_G$. Then, by Observation~\ref{obs:2a}(i), $K = N_G(S)$ and $K \subseteq K_G$. If $K = K_G$ then $S = S_G$ and $T_{u,1}(H) = T_{u,1}(G[C_u])$.
		
		In the following assume that $K \subset K_G$. Then if $|K_G \setminus K| \ge 2$, the subgraph~$G[C_u \setminus (S \cup K)]$ is a thin spider $(S_G \setminus S, K_G \setminus K, R_G)$ whereas if $K_G \setminus K = \{w\}$ then $w$ is universal in $G[C_u \setminus (S \cup K)]$ and the remaining vertices form a disconnected graph with connected components $R_G$ and $z$ where $\{z\} = N_G(w) \cap S_G$. In either case, $|C_u \setminus (S \cup K)| \ge 3$ and $G[C_u \setminus (S \cup K)]$ is connected.
		
		If the least common ancestor~$t$ of $u,v$ is not a child of the root of the $P_4$-sparse tree~$T_H$ of $H$, then the subtree~$T_{u,2}(H)$ is well defined and its root is a $1$-node or a $2$-node.
		\begin{itemize}
			\item[(a)]
			\emph{The root node of $T_{u,2}(H)$ is a $1$-node}.
			Then Lemma~\ref{lemma:1a_1} implies that there is an optimal solution~$F$ of the ($P_4$-sparse-2CC,{$+$}$1$)-Min\-Edge\-Addition Problem for the subgraph $G[V(G) \setminus (S \cup K)]$ and the non-edge~$u v$ in which either $T_{u,1}(F) = T_{u,1}(G[C_u \setminus (S \cup K)])$ or $T_{u,1}(F) = T_{v,1}(G[C_v])$ or $u$ or $v$ is universal. The former case is impossible since by replacing $H[S \cup K \cup V(T_{u,1}(F))]$ by $G[S \cup K \cup V(T_{u,1}(F))]$, we get an optimal solution with fewer fill edges in contradiction to the optimality of $H$.  
			\item[(b)]
			\emph{The root node of $T_{u,2}(H)$ is a $2$-node}.
			Let $(S_1,K_1,R_1)$ be the corresponding spider and $u,v \in R_1$.
			\begin{itemize}
				\item
				\emph{The spider~$(S_1,K_1,R_1)$ is thin}. If $S_1 \cup K_1$ contains vertices from both $C_u$ and $C_v$, Lemma~\ref{lemma:2a_2} implies that there exists an optimal solution~$H'$ of the ($P_4$-sparse-2CC,{$+$}$1$)-Min\-Edge\-Addition Problem for the subgraph $G[V(G) \setminus V(T_{u,1}(H))]$ and the non-edge~$u v$ in which either $u$ or $v$ is universal in $H'$ or $T_{u,1}(H')$ is identical to $T_{u,1}(G[V(G) \setminus V(T_{u,1}(H))])$ or $T_{v,1}(G[C_v])$. In the latter case, by exchanging $T_{u,1}(H')$ and $T_{u,2}(H')$ we get an optimal solution of the ($P_4$-sparse-2CC,{$+$}$1$)-Min\-Edge\-Addition Problem for $G$ and $u v$ in which $T_{u,1}(H')$ is identical to $T_{u,1}(G[C_u])$ or $T_{v,1}(G[C_v])$. In turn, if vertex~$u$ or $v$ is universal in an optimal solution of the ($P_4$-sparse-2CC,{$+$}$1$)-Min\-Edge\-Addition Problem for the induced subgraph $G[V(G) $ $ \setminus (S \cup K)]$ and $uv$, then there exists an optimal solution of the ($P_4$-sparse-2CC,{$+$}$1$)-Min\-Edge\-Addition Problem for $G$ and $uv$ in which $u$ or $v$ is universal; note that solution~$H$ contains fill edges connecting the vertices in $K$ to all the vertices in $(S_G \setminus S) \cup C_v$, which, for $|C_v| \ge 2$, are more than the $|K|$ fill edges needed to connect $u$ or $v$ to the vertices in $S$.
				
				If $S_1 \cup K_1 \subseteq C_u$ then $S_1 \subseteq S_G$ which implies that $K_1 \subseteq K_G$, and if we replace $H[(S \cup S_1) \cup (K \cup K_1)]$ by $G[(S \cup S_1) \cup (K \cup K_1)]$ we get an optimal solution with fewer fill edges than $H$, a contradiction. Hence $S_1 \cup K_1 \subseteq C_v$. Then, because $|K| \ge 2$, $|K_1| \ge 2$, $|C_u| \ge 5$ and $|C_v| \ge 5$, the number~$N$ of fill edges is at least equal to
				\begin{align*}
				N \ \ge \  & |K| \cdot |C_v| + |K_1| \cdot |C_u \setminus (S \cup K)|\\
				\ \ge\  & |C_v| + (|K| - 1) \cdot |C_v| + 2\, |C_u \setminus (S \cup K)|\\
				\ =\ & |C_v| + (|K| - 1) \cdot (|C_v| - 4) + 4\, (|K| - 1) + 2\, |C_u| - 4\, |K|\\
				\ \ge\ & 2\, |C_v| - 4 + 2\, |C_u| - 4
				\ \ge\ |C_v| + |C_u| + 2
				\end{align*}
				which is greater than making $u$ or $v$ universal, a contradiction to the optimality of $H$.
				\item
				\emph{The spider~$(S_1,K_1,R_1)$ is thick}. Lemma~\ref{lemma:3a_2} implies that either $S_1 \cup K_1 \subseteq C_u$ or $S_1 \cup K_1 \subseteq C_v$. If $S_1 \cup K_1 \subseteq C_v$ then by working as in the previous case, we get a contradiction. If $S_1 \cup K_1 \subseteq C_u$ then no matter where the vertices in $K_G \setminus K$ are, there exists a vertex in $S_1$ that belongs to $S_G$, which implies that its neighbor in $K_G$ belongs to $K_1$. Then, by removing these two vertices from the spider~$(S_1,K_1,R_1)$ and joining them to the spider~$(S,K,R)$ we get an optimal solution that requires fewer fill edges than $H$, a contradiction.
			\end{itemize}
		\end{itemize}
		
		If the least common ancestor~$t$ of $u,v$ is a child of the root of the $P_4$-sparse tree~$T_H$ of $H$, then $t$ is a $1$-node or a $2$-node. 
		\begin{itemize}
			\item[(a)]
			\emph{The root node of $T_{u,2}(H)$ is a $1$-node}.
			Then Lemma~\ref{lemma:1b_1} implies that there exists an optimal solution~$F$ of the ($P_4$-sparse-2CC,{$+$}$1$)-Min\-Edge\-Addition Problem for the subgraph $G[V(G) \setminus (S \cup K)]$ and the non-edge~$u v$ in which $u$ or $v$ is universal.
			\item[(b)]
			\emph{The root node of $T_{u,2}(H)$ is a $2$-node}.
			Let $(S_2,K_2,R_2)$ be the spider corresponding to the treenode~$t$.
			\begin{itemize}
				\item
				\emph{The spider~$(S_2,K_2,R_2)$ is thin}.
				If one of $u,v$ belongs to $S_2 \cup K_2$ and the other belongs to $R_2$, Lemma~\ref{lemma:2b} applies. If Lemma~\ref{lemma:2b}, case~(c) holds, $G[C_v]$ is a $P_2$ and let the resulting spider be $(S',K',R')$. Then, we can get an optimal solution of the ($P_4$-sparse-2CC,{$+$}$1$)-Min\-Edge\-Addition Problem for the graph~$G$ and the non-edge~$uv$, which is a spider with stable set $S \cup (S' \cap C_u)$ and clique $K \cup (K' \cap C_u)$, requiring fewer fill edges than $H$, a contradiction. A similar construction implies that Lemma~\ref{lemma:2b}, case~(b) if $T_{u,2}(H) = T_{u,1}(G[C_u \setminus (S \cup K)])$ as well as Lemma~\ref{lemma:2b}, case~(b), if $T_{u,2}(H) = T_{v,1}(G[C_v])$ and the root node of $T_{v,1}(G[C_v])$ is a $1$-node are not possible either. If Lemma~\ref{lemma:2b},case~(b) holds with $T_{u,2}(H) = T_{v,1}(G[C_v])$ and the root node of $T_{v,1}(G[C_v])$ being a $2$-node then by exchanging $T_{u,1}(H)$ and $T_{u,2}(H)$, we get an optimal solution with $T_{u,1}(H) = T_{v,1}(G[C_v])$.
				
				On the other hand, if $u,v$ in $S_2 \cup K_2$, then Lemma~\ref{lemma:2c} applies. Since $G[C_u \setminus (S \cup K)$ cannot be a $P_2$ or a headless thin spider (which includes the $P_4$), then the only possibility is Lemma~\ref{lemma:2c}, case~(a), i.e, there exits an optimal solution~$F$ of the ($P_4$-sparse-2CC,{$+$}$1$)-Min\-Edge\-Addition Problem for the graph~$G[V(G) \setminus (S \cup K)$ and the non-edge~$uv$ in which $u$ or $v$ is universal.
				\item
				\emph{The spider~$(S_2,K_2,R_2)$ is thick}. 
				Since at least one of $u,v$ belongs to $S_2 \cup K_2$, Lemma~\ref{lemma:3bc} implies that there exists an optimal solution~$H'$ with $u$ or $v$ being universal in $H'[V(G) \setminus V(T_{u,1}(H))]$.
			\end{itemize}
			Therefore, if the least common ancestor~$t$ of $u,v$ is a child of the root of $T_H$ and $t$ is a$2$-node, then there exists an optimal solution~$H'$ of the ($P_4$-sparse-2CC,{$+$}$1$)-Min\-Edge\-Addition Problem for the graph~$G$ and the non-edge~$uv$ in which $T_{u,1}(H') = T_{v,1}(G[C_v])$ or $u$ or $v$ is universal in the induced subgraph~$G[V(G) \setminus (S \cup K)$.
		\end{itemize}
		If vertex~$u$ or $v$ is universal in an optimal solution of the ($P_4$-sparse-2CC,{$+$}$1$)-Min\-Edge\-Addition Problem for the graph~$G[V(G) \setminus (S \cup K)$ and the non-edge~$uv$, then there exists an optimal solution of the ($P_4$-sparse-2CC,{$+$}$1$)-Min\-Edge\-Addition Problem for $G$ and $uv$ in which $u$ or $v$ is universal; note that solution~$H$ contains fill edges connecting the vertices in $K$ to all the vertices in $(S_G \setminus S) \cup C_v$.
		\item [C.] \emph{The root node of the tree~$T_{u,1}(G)$ is a $2$-node corresponding to a thick spider~$Q_G = (S_G,K_G,R_G)$}. Since $Q_G$ is a thick spider and $|S_G| = |K_G| \ge 3$, every vertex $w \in C_u$ is adjacent to at least $2$ vertices in $C_u$. On the other hand, in $H$, each vertex in $S \subset C_u$ is adjacent to exactly $1$ vertex, which belongs to $K \subset C_u$. Therefore, such a case is \emph{impossible}. 
	\end{enumerate}
 
\end{proof}

In addition to the above case, it is possible that $S \cup K$ contains vertices from both $C_u$ and $C_v$; however, we show that this case cannot yield solutions better than having $u$ or $v$ being universal in $H$.

\begin{lemma} \label{lemma:2a_2}
	Suppose that the optimal solution~$H$ of the ($P_4$-sparse-2CC,{$+$}$1$)-Min\-Edge\-Addition Problem for a $P_4$-sparse graph~$G$ and a non-edge~$u v$ is a thin spider~($S,K,R$) with $u,v \in R$. Then, if $S \cup K$ contains vertices from both $C_u$ and $C_v$, there exists an optimal solution~$H'$ of the ($P_4$-sparse-2CC,{$+$}$1$)-Min\-Edge\-Addition Problem for the graph~$G$ and the non-edge~$uv$ (a)~which results from making $u$ or $v$ universal in $G$ or (b)~in which the subtree~$T_{u,1}(H')$ is identical to $T_{u,1}(G[C_u])$ or $T_{v,1}(G[C_v])$ in the case shown in Figure~\ref{fig:2a}.
\end{lemma}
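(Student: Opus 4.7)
The plan is to lower-bound the fill-edge count of $H$ by counting only its cross-component edges, compare this with the ``make $u$ universal'' and ``make $v$ universal'' solutions, and when neither of those is strictly cheaper, identify the residual tight configuration with Figure~\ref{fig:2a} and produce $H'$ by a subtree exchange yielding option~(b).

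First I would set up the counts. Because $u,v\in R$, every $K$-$S$ edge of $H$ has its $S$-endpoint distinct from $u,v$, and so by Observation~\ref{obs:2a}(i) belongs to $G$; hence the thin-spider bijection $f\colon S\to K$ is component-preserving. Writing $n_u=|K\cap C_u|$, $n_v=|K\cap C_v|$, $r_u=|R\cap C_u|$, $r_v=|R\cap C_v|$, this yields $|S\cap C_u|=n_u$, $|S\cap C_v|=n_v$, and the hypothesis together with $u\in R\cap C_u$ and $v\in R\cap C_v$ gives $n_u,n_v,r_u,r_v\ge 1$. Since each $S$-vertex has only its $K$-partner (in the same component) as an $H$-neighbor, no edge of $H$ incident on $S$ is cross-component. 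Counting the forced cross-component edges (all of $K$-$K$, all of $K$-$R$, and the fill edge $uv\in E(H[R])$), the number $N$ of fill edges of $H$ satisfies
\[
N \;\ge\; n_u n_v + n_u r_v + r_u n_v + 1.
\]

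Next I would compare this bound with the cost $|V(G)|-1-\deg_G(u)$ of making $u$ universal in $G$ and its $v$-analogue. Using $|V(G)|=2n_u+r_u+2n_v+r_v$, $\deg_G(u)\le |C_u|-1$, and the product inequality~\eqref{eq:product}, the displayed bound rearranges to $N\ge |C_v|+1$ and symmetrically $N\ge |C_u|+1$, so whenever $u$ is universal in $G[C_u]$ or $v$ is universal in $G[C_v]$ the corresponding universal solution is strictly cheaper than $H$. A short case analysis extending this argument shows that, except possibly when $\min(n_u,n_v)=1$ and the degrees of $u,v$ in their own components are forced to be small, the cheaper of the two universal solutions matches or beats $H$, giving option~(a).

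In the residual tight case, which -- by tracing through the equality clause of~\eqref{eq:product} -- forces $G[C_u]$ (respectively $G[C_v]$) into the restricted configuration of Figure~\ref{fig:2a}, I would construct $H'$ whose $P_4$-sparse tree is obtained from $T_H$ by replacing the subtree $T_{u,1}(H)$ with $T_{u,1}(G[C_u])$ (or, by the $u\leftrightarrow v$ symmetry of the lemma, with $T_{v,1}(G[C_v])$) and keeping the rest intact. The gluing of two legitimate $P_4$-sparse subtrees under a common ancestor remains $P_4$-sparse, the added edge $uv$ is undisturbed, and a fill-edge tally using Observation~\ref{obs:2a}(ii) together with the component-preservation of $f$ shows that $H'$ uses no more fill edges than $H$; this yields option~(b). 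The main obstacle will be this final bookkeeping in the exchange step, which relies on the exact tightness of~\eqref{eq:product} in this structural case to make the counts on the two sides line up.
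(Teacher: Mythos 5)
Your counting setup coincides with the paper's: both bound the fill edges of $H$ from below by the forced cross-component pairs, $N \ge |K\cap C_u|\,|K\cap C_v| + |K\cap C_u|\,|R\cap C_v| + |K\cap C_v|\,|R\cap C_u| + 1 \ge |V(G)|-2$ via Equation~\ref{eq:product}, and then compare with the cost $|V(G)|-1-\deg_G(u)$ of making $u$ (or $v$) universal. However, your middle step is both hedged and under-exploited: since $\deg_G(u)\ge 1$, the universal solution always costs at most $|V(G)|-2\le N$, so the optimality of $H$ forces a tie and option~(a) holds in \emph{every} case covered by the lemma; there is no exceptional regime ``$\min(n_u,n_v)=1$ with small degrees'' in which the universal solutions could fail to match $H$. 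The ``short case analysis'' you appeal to is exactly where the paper extracts, from the equality conditions of Equation~\ref{eq:product}, that $\deg_G(u)=\deg_G(v)=1$, $|K\cap C_u|=|K\cap C_v|=1$, $H[C_u]=G[C_u]$, $H[C_v]=G[C_v]$, hence the Figure~\ref{fig:2a} configuration with $u'$ (resp.\ $v'$) universal in $G[C_u]$ (resp.\ $G[C_v]$); you assert this tracing but do not carry it out.

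The genuine gap is the residual case. ``Replacing $T_{u,1}(H)$ by $T_{u,1}(G[C_u])$ and keeping the rest intact'' is not a well-defined operation: in the tight configuration $T_{u,1}(H)$ is the $2$-node root together with the leaves of all of $S\cup K$ (which include $v'$ and the $S$-vertices, drawn from both components), whereas $T_{u,1}(G[C_u])$ consists of a $1$-node and the single leaf $u'$; the vertex sets do not match, so ``gluing two legitimate subtrees'' does not by itself describe a $P_4$-sparse graph on $V(G)$ containing $G+uv$. The paper instead builds an explicit competitor: make $u'$ (or $v'$, on the side with the smaller $|R\cap C_\cdot|$) universal in $G$ and add $uv$, plus one further fill edge when $|R\cap C_v|\ge 2$ --- note that making $u'$ universal and adding only $uv$ is \emph{not} $P_4$-sparse in that case, since two induced $P_4$s arise on $u,v,v'$ and two further vertices of $C_v$ --- and then counts exactly: $4$ or $\min\{|R\cap C_u|,|R\cap C_v|\}+4$ fill edges against $|R|+2$ for $H$ and for the universal solutions. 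This exact bookkeeping is what shows the competitor ties or beats $H$, thereby both constraining when the lemma's hypothesis can occur at all and delivering option~(b) (since $u'$, resp.\ $v'$, is universal in its component, $T_{u,1}(H')$ equals $T_{u,1}(G[C_u])$, resp.\ $T_{v,1}(G[C_v])$). You explicitly defer this computation as ``the main obstacle,'' but it is the substance of part~(b); as written, the proposal does not establish it.
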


\begin{figure}[tb]
		\hrule\medskip\medskip	
	\centerline{\includegraphics[scale=1.0]{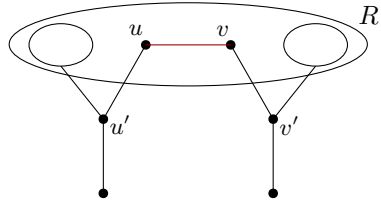}}
	\smallskip\medskip\hrule\medskip
	\caption{The vertex $u$ is adjacent only to $u'$ which is universal in $G[C_u]$, with $u' = K \cap C_u$ and $v' = K \cap C_v$.}
	\label{fig:2a}
\end{figure}

\begin{proof}
	Because $H[C_u]$ and $H[C_v]$ are connected (Observation~\ref{obs: solution_prop}(i)), there exist vertices $u' \in K \cap C_u$ and $v' \in K \cap C_v$. Let $k_u = |K \cap C_u|$ and $k_v = |K \cap C_v|$; clearly $k_u \ge 1$ and $k_v \ge 1$. By taking into account the fill edges with one endpoint in $C_u$ and the other in $C_v$, we have that the number~$N$ of fill edges is
	\[
	N \ \ge\  k_u \cdot |R \cap C_v| + k_v \cdot |R \cap C_u| + k_u \cdot k_v + 1
	\]
	where the term~{$+$}$1$ accounts for the added non-edge~$uv$. Then by Equation~\ref{eq:product} we have
	\[
	N \ \ge\  (k_u + |R \cap C_v| - 1) + (k_v + |R \cap C_u| - 1) + (k_u + k_v - 1) + 1 \ =\  |V(G)| - 2.
	\]
	If vertex~$u$ is universal in $G$ then the number of fill edges is $|V(G)| - 1 - deg_G (u)$ where $deg_G (u) \ge 1$ and similarly for $v$. Then, the optimality of $H$ implies that in $H$ all of the following hold:
	$deg_G (u) = deg_G (v) = 1$;
	$k_u = 1$ or $|R \cap C_v| = 1$;
	$k_v = 1$ or $|R \cap C_u| = 1$;
	$k_u = 1$ or $k_v = 1$;
	no fill edges exist with both endpoints in $C_u$ or $C_v$, i.e., $H[C_u] = G[C_u]$ and $H[C_v] = G[C_v]$.
	
	Let $u' = K \cap C_u$ and $v' = K \cap C_v$. The facts that $H[C_u] = G[C_u]$, $k_u \ge 1$, and $deg_G (u) = 1$ imply that $k_u = 1$ and that in $G$, $u$ is adjacent only to $u'$ which is universal in $G[C_u]$ (Figure~\ref{fig:2a}). Similarly, $k_v = 1$ and in $G$, $v$ is adjacent only to $v'$, which is universal in $G[C_v]$. Then, $|K| = 2$ and the number of fill edges (including $uv$) is $|V(G)| - 2 = |R| + 2$ (where $|R| \ge 2$) matching the number of fill edges if $u$ or $v$ is made universal in $G$.
	
	Moreover, we can get a $P_4$-sparse graph by making $u'$ or $v'$ universal. In particular, if $|R \cap C_v| \le |R \cap C_u|$, we make $u'$ universal and add the fill edge~$u v$, and if $|R \cap C_v| > 1$ we add the fill edge~$u' v$ as well; the total number of fill edges is $4$ if $|R \cap C_v| = 1$ and $|R \cap C_v| + 4$ if $|R \cap C_v| \ge 2$; a symmetric result holds if $|R \cap C_u| \le |R \cap C_v|$. In summary, the number of fill edges (including $uv$) is $4$ if $\min\{|R \cap C_u|, \,|R \cap C_v| \} = 1$ otherwise it is $\min\{|R \cap C_u|, \,|R \cap C_v| \} + 4$. Since $\min\{|R \cap C_u|, \,|R \cap C_v| \} \le |R|/2$, this solution ties the solution with $u$ or $v$ universal if $R = \{u,v\}$ or $|R \cap C_u| = |R \cap C_v| = 2$ and is better in all other cases. The lemma follows from the fact that $u'$ ($v'$ respectively) is universal in $G[C_u]$ ($G[C_v]$ respectively).
\end{proof}


\bigskip\noindent
\textbf{Cases 2b. One of the vertices $u,v$ belongs to $R$ and the other one belongs to $S \cup K$; since $u,v$ are adjacent in the solution~$H$, the latter vertex belongs to $K$.}

Without loss of generality, suppose that $u \in K$ and $v \in R$. Then $k_u = |K \cap C_u| \ge 1$.

\begin{lemma} \label{lemma:2b}
	Suppose that an optimal solution~$H$ of the ($P_4$-sparse-2CC,{$+$}$1$)-Min\-Edge\-Addition Problem for a $P_4$-sparse graph~$G$ and a non-edge~$uv$ is a thin spider~($S,K,R$) with one of $u,v$ belongs to $R$ and the other one belongs to $S \cup K$.
	Then, there exists an optimal solution~$H'$ which
	\begin{itemize}
		\item[(a)]
		results from making $u$ or $v$ universal in $G$ \quad or
		\item[(b)]
		has $T_{u,1}(H') = T_{u,1}(G[C_u])$ or $T_{v,1}(H') = T_{v,1}(G[C_v])$
		\item[(c)]
		except if in $G$ one of $C_u, C_v$ induces a $P_2$ and the other induces a $P_3$ with $u$ or $v$ being an end vertex or a thin spider $(S_1,K_1,R_1)$ with $u$ or $v$ being an isolated vertex in $G[R_1]$ and $|R_1| \le |K_1|$
		in which case the optimal solution involves joining $G[C_u]$ and $G[C_v]$ into a thin spider.
	\end{itemize}
\end{lemma}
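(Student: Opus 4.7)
The plan is to mirror the counting argument used in Lemma~\ref{lemma:2a_2}, adjusted to the fact that now one of $u,v$ sits in $K$ and the other in $R$. Without loss of generality assume $u \in K$ and $v \in R$. Introducing the parameters $k_u = |K \cap C_u|$, $k_v = |K \cap C_v|$, $r_u = |R \cap C_u|$, $r_v = |R \cap C_v|$ (and analogously $s_u, s_v$ for $S$), we immediately get $k_u \ge 1$ and $r_v \ge 1$, and since $G$ has no edge between $C_u$ and $C_v$, every cross edge of $H$ is a fill edge. The first step is to write the lower bound
\[
N \;\ge\; k_u \cdot k_v \;+\; k_u \cdot r_v \;+\; r_u \cdot k_v \;+\; (\text{fill edges forced by } S\text{-vertices matched across the cut})
\]
and to apply Equation~(\ref{eq:product}) to replace each product by a sum of sizes.

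The second step is to compare this lower bound with the cost of making $u$ or $v$ universal in $G$, namely $|V(G)| - 1 - deg_G(u)$ and $|V(G)| - 1 - deg_G(v)$. The optimality of $H$ forces the derived inequality to be tight, which by Equation~(\ref{eq:product}) collapses most of the parameters: one of $k_u, k_v$ must be equal to~$1$; at least one of $r_u, r_v$ is forced to be small; the degrees $deg_G(u)$ and $deg_G(v)$ are small; and no fill edges lie wholly inside a single component, so that $H[C_u] = G[C_u]$ and $H[C_v] = G[C_v]$, invoking Observation~\ref{obs:2a}.

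The third step is to enumerate the surviving configurations. In each, $H[C_u]$ and $H[C_v]$ are prescribed by $G[C_u]$ and $G[C_v]$, and the only freedom lies in which vertex of $K$ sits in each component and how the $S$-partners are distributed. When one of the components is large enough that the fill-edge count for making a universal vertex matches or beats the spider construction, case~(a) is reached. When the component containing the $K$-vertex, after peeling the outer spider off, coincides with $T_{u,1}(G[C_u])$ or $T_{v,1}(G[C_v])$ (using the uniqueness of the $P_4$-sparse tree), case~(b) is reached; here Observation~\ref{obs:2a} guarantees that the matched $K$-partner of $u$ and of each $s \in S \cap C_u$ actually lies in $C_u$, so the subtree is the one read off from $G[C_u]$. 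The residual configurations are exactly those in which both components are tiny and the count $|K|+1$ of fill edges for the merged thin spider strictly beats both universalization alternatives; these are precisely the cases listed in (c), namely a $P_2$ paired with a $P_3$ having $u$ or $v$ as an end vertex, or a small thin spider $(S_1,K_1,R_1)$ with $|R_1| \le |K_1|$ and the bridging endpoint isolated in $G[R_1]$.

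The main obstacle will be the bookkeeping: the lower bound on $N$ leaves several boundary cases depending on whether $k_u$ or $k_v$ equals~$1$, on the exact values of $deg_G(u)$ and $deg_G(v)$, and on how many $S$-vertices are matched across the cut. Each such case has to be opened explicitly, a candidate optimal solution constructed, and its fill-edge count verified to match (or be beaten by) one of (a), (b), (c). A secondary subtlety is ensuring that the merged-spider construction in (c) really is $P_4$-sparse and really beats both universalizations by exactly the threshold dictated by the sizes of $C_u$ and $C_v$, so that no further exceptional configuration leaks through.
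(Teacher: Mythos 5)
Your outline follows the same strategy as the paper (count the cross fill edges between $C_u$ and $C_v$, lower-bound via Equation~(\ref{eq:product}), compare against the $|V(G)|-1-deg_G(u)$ and $|V(G)|-1-deg_G(v)$ costs of universalization, then enumerate the tight configurations), but as written it is only a plan: the entire substance of the lemma lies in the case analysis you defer to ``bookkeeping,'' and that analysis is not routine. The paper must split on $k_v=0$ versus $k_v\ge 1$ and on $R\cap C_u$ empty or not; in the $k_v=0$ branch tightness does \emph{not} force ``one of $k_u,k_v$ equal to $1$'' as you assert, but rather $|C_v|=1$ (since $G[C_u]$ is then a thin spider with $k_u\ge 2$), and the optimality of the merged spider there is proved by a forbidden-subgraph argument (an induced $F_5$ or $F_2$ through an untouched pair $s_j,k_j$), not by the counting inequality alone. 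Similarly, the branch $k_v\ge 1$, $R\cap C_u\ne\emptyset$ forces $G[C_u]$ to be a $P_3$ and leads, via the extra inequality $k_v+|(R\cap C_v)\cap N_G(v)|\le 2$, to three subcases whose optimal solutions often fall under conclusion~(b) (a vertex $a$ universal in $G[C_v]$, so $T_{v,1}(H')=T_{v,1}(G[C_v])$), not under~(c); and the branch $R\cap C_u=\emptyset$ with $v$ having one neighbor $z$ in $R$ requires comparing constructions costing $2k_v+1$, $2k_v+2$, $2k_v+3$, or $2k_v+\kappa+2$ fill edges depending on the structure of the component of $v$ in $H[R]$. None of these derivations, nor the verification that the exceptional list in~(c) is exactly the residue, appears in your proposal.

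A concrete symptom of the gap is your claim that the merged thin spider costs $|K|+1$ fill edges and that the exceptions are ``precisely'' the tiny configurations: in the paper's accounting the merged spider costs $k_u$ edges when $C_v=\{v\}$, $|R|+k_v$ versus $2k_v+1$ in the $R\cap C_u=\emptyset$ branch (whence the threshold $|R_1|\le|K_1|$ in~(c)), and the ties with universalization at the boundary values are what make the statement ``there exists an optimal solution of form (a), (b), or (c)'' true. Without opening these cases, constructing the competing solutions, and proving their optimality (in two places by contradiction with induced $F_2/F_3/F_5$ subgraphs), the characterization in~(c) is asserted rather than established, so the proof is incomplete at its essential step.
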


\begin{proof}

	We distinguish the following cases:
	
	\begin{itemize}
		\item[A.] $k_v = |K \cap C_v| = 0$.
		Then $C_v \subseteq R$. By taking into account the number of fill edges with one endpoint in $C_u$ and the other in $C_v$, we have that the number~$N$ of fill edges in $H$ is
		\[
		N \ \ge\  k_u \cdot |R \cap C_v| \ =\  k_u \cdot |C_v| \ \ge\ k_u + |C_v| - 1.
		\]
		If we make $u$ universal in $G$, the number of fill edges (including the fill edge~$u v$) is precisely $k_u - 1 + |C_v|$. 
		Then, the optimality of $H$ implies that $N = k_u - 1 + |C_v|$ which requires that $k_u \cdot |C_v| = k_u + |C_v| - 1$ and that no additional fill edges exist; the former implies that $k_u = 1$ or $|C_v| = 1$, the latter that no fill edges exist with both endpoints in $C_u$ or $C_v$. Thus, since $k_v = 0$, $G[C_u]$ is a thin spider ($S_u,K_u,R_u$), which implies that $k_u \ge 2$; thus $|C_v| = 1$, i.e., $C_v = \{v\}$. Then, $N = k_u + |C_v| - 1 = k_u$ and this is optimal: if there were an optimal solution~$H'$ with at most $k_u - 1$ fill edges (one of which is $uv$), there would exist an edge~$ab$ in $G[C_u]$ where $a \in K \setminus \{u\}$, $b \in S$, and no fill edge in $H'$ is incident on $a$ or $b$; then, the vertices $u,v,a,b,c$ (where $c$ is the unique neighbor of $u$ in $S$) induce an $F_5$ or an $F_2$ depending on whether $v,c$ are adjacent in $H'$ or not, in contradiction to the fact that $H'$ is $P_4$-sparse.
		\item[B.] $k_v \ge 1$ and $R \cap C_u \ne \emptyset$.
		By taking into account the number of fill edges with one endpoint in $C_u$ and the other in $C_v$, we have that the number~$N$ of fill edges in $H$ is
		\begin{align*}
		N \ \ge\  & k_u \cdot k_v + k_u \cdot |R \cap C_v| + k_v \cdot |R \cap C_u|\\
		\ge\  &(k_u + k_v - 1) + (k_u + |R \cap C_v| - 1) + (k_v + |R \cap C_u| - 1)\\ =\  &|V(G)| - 3.
		\end{align*}
		If we make $u$ universal in $G$, the number of fill edges (including $u v$) is precisely $|V(G)| - 1 - deg_G(u)$. By Observation~\ref{obs:2a} and the facts that the induced graph~$G[C_u]$ is connected (Observation~\ref{obs: solution_prop}(i)) and that $R \cap C_u \ne \emptyset$, we have $deg_G(u) \ge k_u + 1 \ge 2$. Then, the optimality of $H$ implies that $deg_G(u) = 2$ and $N = |V(G)| - 3$ which by Equation~\ref{eq:product} requires that all of the following hold:
		$k_u = 1$ or $k_v = 1$;
		$k_u = 1$ or $|R \cap C_v| = 1$;
		$k_v = 1$ or $|R \cap C_u| = 1$;
		no additional fill edges exist, i.e., $G[C_u] = H[C_u]$ and $G[C_v] = H[C_v]$. Since $deg_G(u) = 2$, $k_u \ge 1$, $R \cap C_u \ne \emptyset$, and $G[C_u] = H[C_u]$, Observation~\ref{obs:2a} implies that $k_u = 1$ and $|R \cap C_u| = 1$; thus, $G[C_u]$ is a $P_3$ and $N = 2\,k_v + |R \cap C_v|$.
		
		Next, if we make $v$ universal in $G$, the number of fill edges (including $u v$) is precisely $3 + k_v + |(R \cap C_v) \setminus N_G[v]|$. The optimality of $H$ implies that
		\[
		2\,k_v + |R \cap C_v| \le 3 + k_v + |(R \cap C_v) \setminus N_G[v]|
		\ \Longleftrightarrow\  k_v + |(R \cap C_v) \cap N_G(v)| \le 2.
		\]
		Then there exist three possibilities:
		\begin{itemize}
			\item[(i)] $k_v = 1$ and $|(R \cap C_v) \cap N_G(v)| = 0$. Let $K \cap C_v = \{a\}$. If $|R \cap C_v| = 1$, then an optimal solution requires $3$ fill edges (including $uv$), a \emph{tie} between the thin spider~$H$ (clique $\{u,a\}$) and making $u$ universal;  if $|R \cap C_v| = 2$, then an optimal solution requires $4$ fill edges (including $uv$), a three-way \emph{tie} among the thin spider~$H$, making $u$ universal, and making $a$ universal; if $|R \cap C_v| \ge 3$, the optimal solution is obtained by making $a$ universal, which requires $4$ fill edges (including $uv$). Note that vertex~$a$ is universal in $G[C_v]$; thus, $T_{v,1}(H') = T_{v,1}(G[C_v])$ if $H'$ is the optimal solution with $a$ universal.
			\item[(ii)] $k_v = 1$ and $|(R \cap C_v) \cap N_G(v)| = 1$. Then $|R \cap C_v| \ge 2$. Let $K \cap C_v = \{a\}$. If $|R \cap C_v| = 2$, then an optimal solution requires $4$ fill edges (including $uv$), a four-way \emph{tie} among the thin spider~$H$ (clique $\{u,a\}$), making $u$ universal, making $v$ universal, and making $a$ universal; if $|R \cap C_v| \ge 3$, then the optimal solution is to make $a$ universal which requires $4$ fill edges (including $uv$). Again, vertex~$a$ is universal in $G[C_v]$ and $T_{v,1}(H') = T_{v,1}(G[C_v])$ if $H'$ is the optimal solution with $a$ universal.
			\item[(iii)] $k_v = 2$ and $|(R \cap C_v) \cap N_G(v)| = 0$. Let $K \cap C_v = \{a,b\}$.
			If $|R \cap C_v| = 1$, then an optimal solution requires $5$ fill edges including $uv$), a \emph{tie} between the thin spider~$H$ (clique $\{u,a,b\}$) and making $u$ universal; if $|R \cap C_v| = 2$, then an optimal solution requires $6$ fill edges including $uv$), a three-way \emph{tie} among the thin spider~$H$, making $u$ universal, and forming a thin spider with clique $\{a,b\}$; if $|R \cap C_v| \ge 3$, then an optimal solution is to form a thin spider with clique $\{a,b\}$ which requires $6$ fill edges (including $uv$). Again, note that $G[C_v]$ is a thin spider with clique $\{a,b\}$.
		\end{itemize}
		\item[C.] $k_v \ge 1$ and $R\cap C_u = \emptyset$. Then $R \cap C_v = R$.
		By taking into account the number of fill edges with one endpoint in $C_u$ and the other in $C_v$, we have that the number~$N$ of fill edges in $H$ is
		\begin{align*}
		N \ \ge\ \  & k_u \cdot k_v + k_u \cdot |R \cap C_v| \ =\  k_u \cdot k_v + k_u \cdot |R| \\   
		\ge\ \  & (k_u + k_v - 1) + (k_u + |R| - 1) \ =\  2\,k_u + k_v + |R| - 2.
		\end{align*}
		In accordance with Observation~\ref{obs:2a}, if we make $u$ universal in $G$ then the number of fill edges is $k_u - 1 + 2\, k_v + |R|$ whereas if we make $v$ universal in $G$ then the number of fill edges is $2\, k_u + k_v + |R \setminus N_G[v]|$.
		The optimality of $H$ implies that
		\[
		2\,k_u + k_v + |R \setminus N_G[v]| \ \ge \  N \ \ge\  2\,k_u + k_v + |R| - 2
		\ \Longleftrightarrow\   |R \cap N_G(v)| \le 1
		\]
		and in accordance with Equation~\ref{eq:product} for the product $k_u \cdot |R|$, that
		\[
		k_u - 1 + 2\, k_v + |R| \ \ge \  N \ \ge\  k_u \cdot k_v + k_u \cdot |R| \ \ge\  k_u \cdot k_v + k_u + |R| - 1
		\]
		from which we conclude that $k_u \le 2$.
		in fact, if $k_u = 2$, then from $k_u - 1 + 2\, k_v + |R| \ge k_u \cdot k_v + k_u \cdot |R|$ we conclude that
		$2\, k_v + |R| + 1 \ge 2\, k_v + 2\, |R|
		\ \Longleftrightarrow\  |R| + 1 \ge 2\, |R|
		\ \Longleftrightarrow\  |R| \le 1
		\ \Longleftrightarrow\  |R| = 1$, i.e., $R = \{v\}$.
		
		We distinguish two cases.
		\begin{itemize}
			\item[(i)]
			$v$ has no neighbors in $R$.
			If $k_u = 1$ then $G[C_u]$ is a $P_2$.
			If $k_v = 1$ then if $|R| = 1$ the optimal solution is the thin spider~$H$ which requires $2$ fill edges (including $uv$), if $|R| \ge 3$ the optimal solution is to make the single vertex in $K \cap C_v$ universal which requires $3$ fill edges (including $uv$), and there is a \emph{tie} between these two possibilities if $|R| = 2$ ($3$ fill edges including $uv$); note that the single vertex in $K \cap C_v$ is universal in $G[C_v]$.
			Let us now consider that $k_v \ge 2$. We note that in this case the thin spider~$H$ requires fewer fill edges than making $v$ universal which in turn requires fewer fill edges than making $u$ universal. Then, if $|R| \le k_v$, the optimal solution is the thin spider~$H$ which requires $|R| + k_v$ fill edges (including $uv$), if $|R| \ge k_v + 2$ the optimal solution is the thin spider with clique $K \cap C_v$ (the vertices in $C_u$ are placed in the $R$-set of the spider) which requires $2\, k_v + 1$ fill edges (including $uv$), and there is a \emph{tie} between these two possibilities if $|R| = k_v + 1$ in which case $|R| + k_v = 2\, k_v + 1$ fill edges (including $uv$) are required.
			
			If $k_u = 2$ then $G[C_u]$ is a $P_4$ and $G[C_v]$ is a $P_3$ if $k_v = 1$ or else a thin spider $(S_v,K_v,R_v)$ where $|S_v| = |K_v| = k_v \ge 2$ and $R_v = \{v\}$.
			If $G[C_v]$ is a $P_3$ then an optimal solution requires $4$ fill edges (including $uv$), a \emph{tie} between the thin spider~$H$ and making $u$ universal; if $G[C_v]$ is a thin spider $(S_v,K_v,\{v\})$, then if $k_v = 2$ an optimal solution requires $6$ fill edges (including $uv$), a \emph{tie} between the thin spider~$H$ and making $u$ or $v$ universal whereas if $k_v \ge 3$, the optimal solution is to make $v$ universal which requires $k_v + 4$ fill edges (including $uv$).
			%
			\item[(ii)]
			$v$ has $1$ neighbor in $R$.
			Let $z$ be the neighbor of $v$ in $R$ and let $S_z$ be the connected component in $H[R]$ to which $v,z$ belong. The fact that $v$ has $1$ neighbor in $R$ implies that $|R| = |R \cap C_v| \ge 2$ and hence $k_u = 1$; then, due to Observation~\ref{obs:2a}, the induced subgraph~$G[C_u]$ is a $P_2$. If $k_v = 1$, $G$ Moreover, $|R \setminus N_G[v]| = |R| - 2$ and the optimality of $H$ implies that $N =  2\,k_u + k_v + |R| - 2 = k_v + |R|$ which by Equation~\ref{eq:product} requires that no additional fill edges exist, i.e., $H[C_u] = G[C_u]$ and $H[C_v] = G[C_v]$. Then, the thin spider~$H$ and making $v$ universal tie in the number of fill edges required.
			If $k_v = 1$, then the optimal solution is making the single vertex in $K \cap C_v$ universal which requires $3$ fill edges (including $uv$); we note that there is a tie with making $v$ universal if $|R| = 2$ and that the single vertex in $K \cap C_v$ is universal in $G[C_v]$.
			If $k_v \ge 2$ then the induced subgraph~$G[C_v]$ is a thin spider $(S_v,K_v,R)$ with $K_v = K \cap C_v$. Then, a thin spider $(S_v,k_v,R_v \cup C_u)$ can be built which requires
			\par
			\phantom{xx} $2\,k_v + 1$ fill edges if $S_z = \{v,z\}$,
			\par
			\phantom{xx} $2\,k_v + 2$ fill edges if $G[S_z]$ is a $P_3$,
			\par
			\phantom{xx} $2\,k_v + 3$ fill edges if $z$ is universal in $S_z$ but $S_z$ is not a $P_2$ or a $P_3$,
			\par
			\phantom{xx} $2\,k_v + \kappa + 2$ fill edges if $zv$ is a "leg" of a thin spider with clique size
			\par
			\phantom{xxxxxx} equal to $\kappa$
			\par
			where the above number of fill edges includes $uv$.
			The optimal solution is one of the above possibilities and depends on the difference of $|R| - kv$. 
		\end{itemize}
	\end{itemize}
\end{proof}


\bigskip\noindent
\textbf{Case 2c. The vertices $u,v$ belong to $S \cup K$.}
Since $u,v$ are adjacent in $H$ and because of Observation~\ref{obs:2a}(i), then $u, v \in K$ and thus $k_u = |K \cap C_u| \ge 1$ and $k_v = |K \cap C_v| \ge 1$. 
We show the following lemma.

\begin{lemma} \label{lemma:2c}
	Suppose that an optimal solution~$H$ of the ($P_4$-sparse-2CC,{$+$}$1$)-Min\-Edge\-Addition Problem for a $P_4$-sparse graph~$G$ and a non-edge~$uv$ is a thin spider~($S,K,R$) with $u,v \in S \cup K$.
	Then, there exists an optimal solution which
	\begin{itemize}
		\item[(a)]
		results from making either $u$ or $v$ universal in $G$ 
		\item[(b)]
		except if in $G$\\
		(i)~one of $C_u, C_v$ induces a $P_2$ and the other induces a $P_2$ or a headless thin spider  $(S_1,K_1,\emptyset)$ with $u$ or $v$ in $G[K_1]$ or\\
		(ii)~both $C_u$ and $C_v$ induce a $P_4$ with $u,v$ being middle vertices,\\
		in which cases the optimal solution involves joining $G[C_u]$ and $G[C_v]$ into a thin spider.
	\end{itemize}
\end{lemma}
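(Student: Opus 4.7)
The plan is to extend the cross-component counting arguments used in Lemmas~\ref{lemma:2a_2} and~\ref{lemma:2b} to the regime in which both endpoints of the added edge lie in the clique~$K$ of the thin spider~$H$. First I would verify that the hypothesis $u,v \in S \cup K$ combined with $uv$ being an edge of the thin spider~$H$ forces $u,v \in K$: any edge of a thin spider incident on a vertex of $S$ would join that vertex to its unique $K$-mate, and Observation~\ref{obs:2a}(i) would then force $uv$ to be already an edge of $G$, contradicting that $uv$ is the added non-edge. Hence $k_u := |K \cap C_u| \ge 1$ and $k_v := |K \cap C_v| \ge 1$, and setting $r_u = |R \cap C_u|$ and $r_v = |R \cap C_v|$ we obtain $|V(G)| = 2(k_u + k_v) + r_u + r_v$ since $|S| = |K|$.

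Counting only those edges of $H$ with one endpoint in $C_u$, the other in $C_v$, and at least one endpoint in~$K$, the number~$N$ of fill edges satisfies
\[
N \ \ge\  k_u k_v + k_u r_v + k_v r_u.
\]
When $r_u, r_v \ge 1$, three applications of Equation~\ref{eq:product} give $N \ge |V(G)| - 3$, while making $u$ (resp.~$v$) universal in $G$ uses $|V(G)| - 1 - \deg_G(u)$ (resp.~$|V(G)| - 1 - \deg_G(v)$) fill edges, so optimality of~$H$ forces $\deg_G(u), \deg_G(v) \le 2$. Observation~\ref{obs:2a}(i),(ii) applied to $H$ shows that $u$ is adjacent in $G$ to its mate in $S$ (which lies in $C_u$) and to every other vertex of $K \cap C_u$, yielding $\deg_G(u) \ge k_u$; hence $k_u \le 2$, with $k_u = 2$ forcing $u$ to have no $G$-neighbor in $R \cap C_u$, and a symmetric bound holds for $v$. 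A short finite case analysis on $(k_u, k_v, r_u, r_v)$ subject to these constraints then shows that making $u$ or $v$ universal ties or beats $H$.

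The remaining regime is $r_u = r_v = 0$, in which $R = \emptyset$ and $V(G) = S \cup K$. Here the lower bound reduces to $N \ge k_u k_v$ together with the cross-component fill edges between $S$-vertices and their $K$-mates. A symmetric analysis using $\deg_G(u) \ge k_u$ and $\deg_G(v) \ge k_v$ shows that, unless $G[C_u]$ and $G[C_v]$ are both $P_2$'s, or one is a $P_2$ and the other is a headless thin spider~$(S_1,K_1,\emptyset)$ with $u$ or $v$ in $K_1$, or both are $P_4$'s with $u,v$ as middle vertices, making $u$ or $v$ universal is at least as good as~$H$; in each of the three exceptional configurations, the optimal solution is instead obtained by merging $G[C_u]$ and $G[C_v]$ into a single thin spider whose clique contains $K \cap C_u$ and $K \cap C_v$, as asserted in part~(b). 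The main obstacle I anticipate is the completeness of this finite case analysis: for every admissible small tuple $(k_u, k_v, r_u, r_v)$ and every compatible substructure of $G[C_u]$ and $G[C_v]$, I must check both that the spider~$H$ does not strictly beat the universal-vertex solution outside the exceptions, and that inside each exception the merging construction attains the counted lower bound --- the same bookkeeping effort that dominates the proof of Lemma~\ref{lemma:2b}.
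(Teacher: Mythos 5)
Your overall strategy is the same as the paper's (count the cross-component fill edges incident on $K$, compare with the $|V(G)|-1-\deg_G(u)$ cost of making $u$ or $v$ universal via Equation~\ref{eq:product}, then finish by a finite case analysis), but your case split is not exhaustive, and the omitted case is a substantial part of the argument. You handle the regime $r_u,r_v\ge 1$, where three applications of Equation~\ref{eq:product} give $N\ge |V(G)|-3$, and then declare that ``the remaining regime is $r_u=r_v=0$''. The mixed regime in which exactly one of $R\cap C_u$, $R\cap C_v$ is empty (say $r_u\ge 1$, $r_v=0$) is skipped entirely. There your lower bound degenerates to $N\ge k_uk_v+k_vr_u\ge |V(G)|-k_u-2$, which is too weak to force $\deg_G(u)\le 2$; instead one must compare $H$ against making $u$ universal to get $\deg_G(u)=k_u+1$ (using Observation~\ref{obs:2a} and connectivity of $G[C_u]$), deduce from $H[C_u]=G[C_u]$ that $|R\cap C_u|=1$, and then separately compare against making $v$ universal to obtain $k_v\le k_u+1$. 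This produces further structured tie configurations --- $G[C_u]$ a $P_3$ and $G[C_v]$ a $P_2$ or a $P_4$, or $G[C_v]$ a $P_2$ and $G[C_u]$ a thin spider with clique size $k_u$ and a single $R$-vertex --- each of which must be checked to still satisfy alternative~(a). None of this follows from your two regimes or from your list of exceptions, so as written the lemma is not established in that case; this is exactly the paper's Case~B.

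Two smaller points. First, your justification that $u,v\in K$ misapplies Observation~\ref{obs:2a}(i): its statement explicitly excludes edges $ab$ with $b\in\{u,v\}$, so it says nothing about the added edge $uv$ when, say, $u\in S$. The correct argument is that a vertex of $S$ has a unique neighbour in the thin spider~$H$, so if $u\in S$ its only $H$-neighbour would be $v\in C_v$, leaving $u$ with no neighbour inside $C_u$ and contradicting the connectivity of $G[C_u]$ (the degenerate possibility $C_u=\{u\}$ has to be dispatched separately). Second, in both of the regimes you do treat, the conclusion is deferred to an unspecified ``short finite case analysis''; to claim the statement you need to pin these down as the paper does (e.g., in the regime $r_u,r_v\ge 1$ the constraints force $k_u=k_v=1$ and both components to be $P_3$'s, a tie with the universal-vertex solution, and in the regime $R=\emptyset$ the optimality bound $k_uk_v\le k_u+2k_v-1$ yields $k_v\le 2$ and then the three exceptional shapes listed in part~(b)).
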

\begin{proof}
	Due to the symmetry of $u,v$, it suffices to consider the following cases.
	\begin{itemize}
		\item[A.]
		$R \cap C_u \ne \emptyset$ and $R \cap C_v \ne \emptyset$:
		By counting the fill edges with one endpoint in $C_u$ and the other in $C_v$, we have that the total number~$N$ of fill edges in $H$ is
		\[
		N \ \ge\  k_u \cdot k_v + k_u \cdot |R \cap C_v| + k_v \cdot |R \cap C_u|
		\]
		which by Equation~\ref{eq:product} gives
		\[
		N \ \ge\  (k_u + k_v - 1) + (k_u + |R \cap C_v| - 1) + (k_v + |R \cap C_u| - 1) \ =\  |V(G)| - 3.
		\]
		If we make $u$ universal in $G$, the number of fill edges needed (including $uv$) is $|V(G)| - 1 - deg_G(u)$; then, the optimality of $H$ implies that $deg_G(u) \ge 2$. Moreover, since the induced graph~$G[C_u]$ is connected (Observation~\ref{obs: solution_prop}(i)), $deg_G(u) \ge 2$ and thus $deg_G(u) = 2$. Similarly, we get that $deg_G(v) = 2$. The optimality of $H$ implies that $N = |V(G)| - 3$ and Equation~\ref{eq:product} requires that all of the following hold:
		$k_u = 1$ or $k_v = 1$;
		$k_u = 1$ or $|R \cap C_v| = 1$;
		$k_v = 1$ or $|R \cap C_u| = 1$;
		no additional fill edges exist, i.e., $G[C_u] = H[C_u]$ and $G[C_v] = H[C_v]$.
		Note that if $k_u > 1$, then because $|R \cap C_u| \ge 1$ we would have $deg_G(u) \ge 3$, in contradiction to $deg_G(u) \le 2$; thus, $k_u = 1$ and similarly $k_v = 1$, which implies that each of $G[C_u], G[C_v]$ is a $P_3$. Then the optimal solution requires $3$ fill edges (including $uv$) and there is a \emph{tie} between the thin spider~$H$ and making $u$ or $v$ universal.
		\item[B.]
		$R \cap C_u \ne \emptyset$ but $R \cap C_v = \emptyset$:
		Then $R \cap C_u = R$. By counting the fill edges with one endpoint in $C_u$ and the other in $C_v$, we have that the total number~$N$ of fill edges in $H$ is
		\[
		N \ \ge\  k_u \cdot k_v + k_v \cdot |R \cap C_u| \ =\  k_u \cdot k_v + k_v \cdot |R|
		\]
		which by Equation~\ref{eq:product} gives
		\[
		N \ \ge\  (k_u + k_v - 1) + (k_v + |R| - 1) \ =\  |V(G)| - k_u - 2.
		\]
		If we make $u$ universal in $G$, the number of fill edges needed (including $uv$) is $|V(G)| - 1 - deg_G(u)$. By Observation~\ref{obs:2a} and the facts that the induced graph~$G[C_u]$ is connected (Observation~\ref{obs: solution_prop}(i)) and that $R \cap C_u \ne \emptyset$, we have $deg_G(u) \ge k_u + 1$ and then, the optimality of $H$ implies that $deg_G(u) = k_u + 1$; similarly, we get that $k_v \le deg_G(v) \le k_u + 1$. The optimality of $H$ implies that $N = |V(G)| - k_u - 2$ and Equation~\ref{eq:product} requires that all of the following hold:
		$k_u = 1$ or $k_v = 1$;
		$k_v = 1$ or $|R| = 1$;
		no additional fill edges exist, i.e., $G[C_u] = H[C_u]$ and $G[C_v] = H[C_v]$.
		The facts $deg_G(u) = k_u + 1$ and $H[C_u] = G[C_u]$ imply that $|R| = |R \cap C_u| = 1$, whereas the fact $H[C_v] = G[C_v]$ implies that $deg_G(v) = k_v$ from which we get that $k_v \le k_u + 1$. We distinguish the following cases.
		\begin{itemize}
			\item
			$k_u = k_v = 1$: Then $G[C_u]$ is a $P_3$ and $G[C_v]$ is a $P_2$; an optimal solution requires $2$ fill edges (including $uv$), a \emph{tie} between the thin spider~$H$ and making $u$ universal.
			\item
			$k_u = 1$ and $k_v > 1$: Since $k_v \le k_u + 1 = 2$, $k_v = 2$. Then $G[C_u]$ is a $P_3$ and $G[C_v]$ is a $P_4$; an optimal solution requires $4$ fill edges (including $uv$), a \emph{tie} between the thin spider~$H$ and making $u$ or $v$ universal.
			\item
			$k_v = 1$ and $k_u > 1$: Then $G[C_v]$ is a $P_2$ whereas $G[C_u]$ is a thin spider with clique size equal to $k_u$ and only $1$ vertex in its $R$-set. An optimal solution requires $k_u + 1$ fill edges (including $uv$), a \emph{tie} between the thin spider~$H$ and making $u$ or $v$ universal. (The optimality can be shown by contradiction. Let $G[C_u]$ be the thin spider ($\{s_1,s_2,\ldots,s_{k_u}\}, \{u,t_2,\ldots,t_{k_u}\},$ $ \{b\}$) and let $G[C_v]$ be the $P_2$~$a v$. If there were an optimal solution with at most $k_u$ fill edges, then these would include the fill edge~$uv$ and at most $k_u - 1$ more fill edges; the latter $k_u - 1$ fill edges would be incident to the vertices $s_2, \ldots, s_{k_u}, t_2, \ldots, t_{k_u}$ for if there were a pair $s_i,t_i$ ($2 \le i \le k_u$) not incident to any fill edges then the vertices $a, v, u, t_i, s_i$ would induce an $F_5$ or an $F_2$ depending on whether $u, a$ are adjacent on not. Then, the vertices $a, v, u, s_1, b$ would induce an $F_3$, a contradiction.)
		\end{itemize}
		\item[C.]
		$R = \emptyset$:
		Then, by Observation~\ref{obs:2a}(i) and (ii), $G[C_u] = H[C_u]$ and $G[C_v] = H[C_v]$ and thus $deg_G(u) = k_u$ and $deg_G(v) = k_v$. The fill edges in $H$ are precisely the fill edges with one endpoint in $C_u$ and the other in $C_v$ which are $k_u \cdot k_v$ in total.
		
		Suppose without loss of generality that $k_u \ge k_v$. If we make $u$ universal in $G$, the number of fill edges needed (including $uv$) is $k_u + 2\, k_v - 1$. The optimality of $H$ implies that $k_u \cdot k_v \le k_u + 2\, k_v - 1 \le 3\, k_u - 1 < 3\, k_u$ and thus $k_v < 3$. 
		We distinguish the following cases.
		\begin{itemize}
			\item
			$k_v = 1$: Then $G[C_v]$ is a $P_2$ and $G[C_u]$ is a thin spider ($\{s_1,s_2,\ldots,s_{k_u}\}$, $\{u,t_2,\ldots,t_{k_u}\}$, $\emptyset$); an optimal solution requires $k_u$ fill edges (including $uv$), which form the thin spider~$H$ (the solution~$H$ requires fewer fill edges than making $u$ universal in $G$). (The optimality can be shown by contradiction. Let $G[C_v]$ be the $P_2$~$a v$. If there were an optimal solution with at most $k_u - 1$ fill edges, then these would include the fill edge~$uv$ and at most $k_u - 2$ more fill edges; then, there would exist a pair $s_i,t_i$ ($2 \le i \le k_u$) not incident to any fill edges and the vertices $a, v, u, t_i, s_i$ would induce an $F_5$ or an $F_2$ depending on whether $u, a$ are adjacent on not, a contradiction.)
			\item
			$k_v = 2$: Then $G[C_v]$ is a $P_4$. In this case, the solution~$H$ requires $2\, k_u$ fill edges (including $uv$) whereas making $u$ universal requires $k_u + 3$. The optimality of $H$ implies that $k_u \cdot k_v = 2\, k_u \le k_u + 3 \ \Longrightarrow\  k_u \le 3$. Since $k_u \ge k_v$, we have $2 \le k_u \le 3$.
			
			If $k_u = 2$ then $G[C_u]$ is also a $P_4$ and an optimal solution requires $4$ fill edges (including $uv$), which form the thin spider~$H$ (the solution~$H$ requires fewer fill edges than making $u$ or $v$ universal which requires $5$ fill edges).
			If $k_u = 3$ then $G[C_u]$ is a headless thin spider with clique size equal to $3$; an optimal solution requires $6$ fill edges (including $uv$), a \emph{tie} between the thin spider~$H$ and making $u$ universal.
		\end{itemize}
	\end{itemize}
\end{proof}

\vspace*{0.07in}
\subsection{Case 3: The root node of the $P_4$-sparse tree of the solution~$H$ is a $2$-node corresponding to a thick spider~$(S,K,R)$}

According to our convention, $|S| = |K| \ge 3$.

\bigskip\noindent
\textbf{Case 3a. The vertices $u,v$ belong to $R$.}
In this case, it is possible that $S \cup K \subset C_u$ or $S \cup K \subset C_v$ and in a fashion similar to the proof of Lemma~\ref{lemma:2a_1}, we can prove:

\begin{lemma} \label{lemma:3a_1}
	Suppose that an optimal solution~$H$ of the ($P_4$-sparse-2CC,{$+$}$1$)-Min\-Edge\-Addition Problem for a $P_4$-sparse graph~$G$ and an added non-edge~$uv$ is a thick spider~($S,K,R$) with $u,v \in R$.
	If $S \cup K \subseteq C_u$ then $G[C_u]$ is a thick spider~$(S_G,K_G,R_G)$ and $K = K_G$ and $S = S_G$, i.e., $T_{u,1}(H) = T_{u,1}(G[C_u])$.
	\\
	A symmetric result holds if $S \cup K \subseteq C_v$.
\end{lemma}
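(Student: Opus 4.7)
My plan is to follow the template of the proof of Lemma~\ref{lemma:2a_1} by case analysis on the treenode type of the root of $T_{u,1}(G[C_u])$. Since $G[C_u]$ is connected by Observation~\ref{obs: solution_prop}(i), this root is either a $1$-node, a $2$-node corresponding to a thin spider, or a $2$-node corresponding to a thick spider. I will rule out the first two cases and show that in the third the partition of $G[C_u]$ is forced to coincide with that of~$H$.

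For the $1$-node case, I plan to argue as in Lemma~\ref{lemma:2a_1}, Case~A. Every vertex of $V(T_{u,1}(G[C_u]))$ is adjacent in $G$, hence in $H$, to every vertex of $C_u \setminus V(T_{u,1}(G[C_u]))$, in particular to $u$. Since $u \in R$ of the thick spider $H$, no vertex of $S$ is adjacent to $u$; thus $S \subseteq C_u \setminus V(T_{u,1}(G[C_u]))$, and every vertex of $V(T_{u,1}(G[C_u]))$ is adjacent in $H$ to every vertex of $S$. However, in a thick spider with $|K| \geq 3$ no vertex is adjacent to all of $S$: each $k \in K$ misses $f^{-1}(k)$, each $s' \in S$ misses itself and the other vertices of $S$, and each $r \in R$ misses all of $S$. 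This yields the required contradiction.

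For the thin-spider sub-case, my plan is an exchange argument that contradicts the optimality of $H$. Given the thin spider $(S_G, K_G, R_G)$ at the root of $T_{G[C_u]}$, I would consider the alternative completion $H^{*}$ that keeps this thin-spider top-structure and uses a $P_4$-sparse completion of $G[R_G] \cup G[C_v]$ containing the edge $uv$ as its $R$-tail. The thick spider $H$ must carry all $|K|(|K|-1)$ adjacencies between $S$ and $K$ and all $|K|\cdot|C_v|$ cross-edges to $C_v$, whereas $H^{*}$ carries only the $|K_G|$ thin-spider wings already present in $G$ plus $|K_G|\cdot|C_v|$ cross-edges and the fill edges internal to its tail; a counting parallel to the one used in Lemma~\ref{lemma:2a_1}, Case~B, should show that $H^{*}$ uses strictly fewer fill edges than $H$.

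For the thick-spider sub-case I plan to mirror the exchange argument in Lemma~\ref{lemma:2a_1}, Case~B. First I would normalise $V(T_{u,1}(H))$ by fill-preserving swaps, using the containments $N_G[w] \subseteq N_G[k]$ for $w \in R_G$, $k \in K_G$, and $N_G[s_i] \subseteq N_G[k_i]$ for the spider pairs $(s_i, k_i)$, so that every $K_G$-vertex lies in $V(T_{u,1}(H))$ together with its matched $S_G$-vertex. If any vertex of $S_G \cup K_G$ still lay outside $S \cup K$, I would reassemble the thick spider $(S_G, K_G, R_G)$ over the remainder of the construction to obtain a completion with strictly fewer fill edges, contradicting the optimality of $H$; this forces $K = K_G$ and $S = S_G$, and hence $T_{u,1}(H) = T_{u,1}(G[C_u])$. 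The main obstacle I expect is the bookkeeping of this last step — ensuring that each swap preserves a valid $P_4$-sparse tree and that the fill edges across the $C_u$/$C_v$ interface are accounted for correctly, so that the strict-inequality counts actually carry through.
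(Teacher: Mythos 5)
Your overall skeleton (three cases on the root of $T_{u,1}(G[C_u])$, with the $1$-node case handled exactly as in Lemma~\ref{lemma:2a_1}) matches the paper, and your $1$-node argument is correct. The two spider cases, however, have genuine gaps. In the thin-spider case you compare $H$ against a candidate completion $H^{*}$ and assert a strict fill-edge inequality, but you never locate $S$ and $K$ inside the partition $(S_G,K_G,R_G)$, and without that the count does not go through: you do not know how many of the $|K|(|K|-1)$ $S$--$K$ adjacencies of $H$ are actually fill edges (they need not be, if $S\cup K$ sits largely inside $K_G\cup R_G$), and nothing at this stage rules out $|K|<|K_G|$, in which case your cross-edge comparison $|K|\,|C_v|$ versus $|K_G|\,|C_v|$ points the wrong way; the fill edges inside the tail of $H^{*}$ are also not bounded against anything in $H$. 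The paper instead proves impossibility structurally: if some $w\in K_G$ were outside $K$ then $w\in R$, forcing $S\subseteq S_G$ and $N_G(S)=K\subseteq K_G$, whereupon replacing $H[S\cup K]$ by $G[S\cup K]$ beats $H$; hence $K_G\subseteq K$, then $S_G\subseteq S$, and a second local substitution of $H[S_G\cup K_G]$ by $G[S_G\cup K_G]$ yields the contradiction. That localization is exactly the missing crux in your plan.

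In the thick-spider case there are two problems. First, the containment you invoke, $N_G[s_i]\subseteq N_G[k_i]$ for a matched pair of a \emph{thick} spider, is false: $s_i$ and $k_i$ are non-adjacent, so $s_i\notin N_G[k_i]$; only $N_G(s_i)\subseteq N_G(k_i)$ holds, and a swap that ends up requiring the edge $s_ik_i$ costs an extra fill edge, so ``fill-preserving'' needs justification. Second, and more importantly, swaps replace $H$ by a different optimal solution, so your route can only deliver the existential statement ``some optimal solution satisfies $T_{u,1}=T_{u,1}(G[C_u])$,'' whereas Lemma~\ref{lemma:3a_1} asserts $K=K_G$ and $S=S_G$ for the \emph{given} $H$ (note its phrasing differs from Lemmas~\ref{lemma:1a_1} and~\ref{lemma:2a_1} precisely on this point). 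The paper avoids swaps entirely: every $w\in K_G$ satisfies $|N_G(w)\cap C_u|=|C_u|-2$, while every vertex of $H$ outside $K$ misses all of $S$ and hence has at most $|C_u|-3$ neighbours in $C_u$; this forces $K_G\subseteq K$ directly for the given $H$, then $S_G\subseteq S$, and finally $K\setminus K_G=\emptyset$ follows by comparing the fill-edge counts $|K_G|\,|C_v|+|K\setminus K_G|\,|S_G|$ and $|K_G|\,|C_v|$ for the thick spider $(S_G,K_G,R_G)$ rebuilt over $H[V(G)\setminus(S_G\cup K_G)]$. You should adopt this degree-based argument (or otherwise argue about $H$ itself) rather than the swap normalization.
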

\begin{proof}
	We consider the following cases that cover all possibilities:
	\begin{enumerate}
		\item[A.] \emph{The root node of the tree~$T_{u,1}(G)$ is a $1$-node}. 
		We can prove that this case is \emph{not possible}; the proof is identical to Case~A in the proof of Lemma~\ref{lemma:2a_1}.
		%
		
		\item[B.] \emph{The root node of the tree~$T_{u,1}(G)$ is a $2$-node corresponding to a thin spider $(S_G,K_G,R_G)$}. We show that $K_G \subseteq K$. Suppose for contradiction that there existed a vertex $w \in K_G$, such that $w \notin K$. Moreover, since $w$ is adjacent in $G$ to $u$ and so is in $H$, $w \notin S$. Then, $w$ is not adjacent in $H$ to the vertices in $S$, which implies that neither is in $G$ and since $w \in K_G$, it implies that $S \subseteq S_G$ (note that $K_G \cup R_G \subset N_G[w]$). Moreover since $N_H(S) \subseteq K$, we have that $N_G(S) \subseteq K$, and since $|N_G(S)| = |S| = |K|$, it holds that $N_G(S) = K$. But then, if we replace in $H$ the induced subgraph~$H[S \cup K]$ by the induced subgraph~$G[S \cup K]$, we get a solution for the ($P_4$-sparse-2CC,{$+$}$1$)-Min\-Edge\-Addition Problem for $G$ and the non-edge~$uv$ which requires fewer fill edges than $H$, in contradiction to the optimality of $H$. Therefore, $K_G \subseteq K$ which implies that $S_G \subseteq S$. But again, if we replace in $H$ the induced subgraph~$H[S_G \cup K_G]$ by the induced subgraph~$G[S_G \cup K_G]$ (note that each vertex in $(S \setminus S_G) \cup (K \setminus K_G)$ is adjacent to each vertex in $K_G$), we get a solution for the ($P_4$-sparse-2CC,{$+$}$1$)-Min\-Edge\-Addition Problem for $G$ and the non-edge~$uv$ which requires fewer fill edges than $H$, a contradiction. Therefore, such a case is \emph{impossible}.
		
		\item [C.] \emph{The root node of the tree~$T_{u,1}(G)$ is a $2$-node corresponding to a thick spider~$Q_G = (S_G,K_G,R_G)$}. Since $Q_G$ is a thick spider, then for every vertex $w \in K_G$, it holds that $|N_G(w)| = |C_u|-2$ which yields that $|N_H(w) \cap C_u| \ge |N_G(w) \cap C_u| = |C_u|-2$. On the other hand, in $H$, for each vertex~$z$ in $V(H) \setminus K = V(G) \setminus K$, it holds that $N_H(z) \cap S = \emptyset$ and since $S \subset C_u$ and $|S| = |K| \ge 3$,  $|N_H(z) \cap C_u| \le |C_u - S| \le |C_u| - 3$. Therefore, $K_G \subseteq K$. Since $S \cup K \subseteq C_u$ and since for each $p \in K_G$, $p$'s only non-neighbor in $G[C_u]$ belongs to $S_G$, then $p$'s non-neighbor in $S$ is precisely $p$'s non-neighbor in $S_G$; thus, $S_G \subseteq S$.
		
		Additionally, we show that $K = K_G$. Let $K_2 = K \setminus K_G = \emptyset$ and let $S_2$ be the set of non-neighbors in $H$ of the vertices in $K_2$: $S_2 = \{ \,w \ |\  \exists \,s \in K_2: \ w \not\in N_H(s) \,\}$. Let us consider the $P_4$-sparse graph~$H'$ consisting of the thick spider $(S_G, K_G, R_G)$ where the induced subgraph~$H'[R_G]$ coincides with the induced subgraph~$H[V(G) \setminus (S_G \cup K_G)]$; note that each vertex in $S_2 \cup K_2$ is adjacent to each vertex in $K_G$. Clearly the graphs $H$ and $H'$ have the same fill edges with both endpoints in $V(G) \setminus (S_G \cup K_G)$. The number of fill edges in $H$ with an endpoint in $S_G \cup K_G$ is $|K_G| \, |C_v| + |K_2| \,|S_G|$ whereas the number of fill edges in $H'$ with an endpoint in $S_G \cup K_G$ is $|K_G| \, |C_v|$; the optimality of $H$ immediately implies that $K_2 = \emptyset$.
	\end{enumerate}
\end{proof}

However, unlike Case~2a, it turns out that this is the only possibility in this case.

\begin{lemma} \label{lemma:3a_2}
	Suppose that an optimal solution~$H$ of the ($P_4$-sparse-2CC,{$+$}$1$)-Min\-Edge\-Addition Problem for a $P_4$-sparse graph~$G$ and a non-edge~$uv$ is a thick spider~($S,K,R$) with $u,v \in R$. Then, it is not possible that $S \cup K$ contains vertices from both $C_u$ and $C_v$.
\end{lemma}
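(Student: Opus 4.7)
The plan is to derive a contradiction by comparing the fill edges of $H$ to the cost of the two universal-vertex solutions. Write $K_u = K\cap C_u$, $K_v = K\cap C_v$, $S_u = S\cap C_u$, $S_v = S\cap C_v$, $R_u = R\cap C_u$, $R_v = R\cap C_v$, and $k_u=|K_u|$, $k_v=|K_v|$, $s_u=|S_u|$, $s_v=|S_v|$; the standing assumption is that $S\cup K$ meets both $C_u$ and $C_v$. My first step is to show $k_u\ge 1$ (and symmetrically $k_v\ge 1$). If $k_u=0$, then because $S$ is independent, $R$ is non-adjacent to $S$ in the spider, and $K_u=\emptyset$, the induced subgraph $H[C_u]$ has no edge between $R_u$ and $S_u$. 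Since $G[C_u]\subseteq H[C_u]$ is connected by Observation~\ref{obs: solution_prop}(i) and $u\in R_u$ forces $R_u\ne\emptyset$, this is only possible if $s_u=0$; but then $S\cup K$ does not meet $C_u$, contradicting the hypothesis.

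Next, I enumerate the cross-component fill edges of $H$ using the thick-spider adjacencies. Every edge of $H$ with endpoints in different components of $G$ is a fill edge, and summing the contributions of $K_u\times K_v$, $K_u\times S_v$, $K_u\times R_v$, $S_u\times K_v$, and $R_u\times K_v$ (the only non-empty cross-component products, since $S$ is independent and $R$ is non-adjacent to $S$ in the spider), together with the required edge $uv\in R_u\times R_v$, gives
\[
N \;\ge\; k_u|C_v| + k_v|C_u| - k_u k_v - (a+b) + 1,
\]
where $a=|\{k\in K_u:f^{-1}(k)\in S_v\}|\le\min(k_u,s_v)$ and $b=|\{k\in K_v:f^{-1}(k)\in S_u\}|\le\min(s_u,k_v)$; the subtraction accounts for those $k\in K$ whose matched $S$-partner lies across the component boundary. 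Applying Equation~\ref{eq:product} to both products yields
\[
N \;\ge\; |V(G)| + |K| - 1 + (k_u-1)(|C_v|-1) + (k_v-1)(|C_u|-1) - k_u k_v - (a+b).
\]

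The final step is the case analysis. Making $u$ universal costs $|V(G)|-1-deg_G(u)$ with $deg_G(u)\ge 1$ (since $G[C_u]$ is connected and $|C_u|\ge 2$), so optimality of $H$ forces
\[
(k_u-1)(|C_v|-1) + (k_v-1)(|C_u|-1) + deg_G(u) \;\le\; k_u k_v + (a+b) - |K|. \qquad(\star)
\]
If $k_u,k_v\ge 2$, I use $|C_u|-1\ge k_u$ and $|C_v|-1\ge k_v$ (from $|R_u|,|R_v|\ge 1$) together with $a+b\le|K|$ to bound the left side of $(\star)$ below by $2k_u k_v-|K|+1$ and the right side above by $k_u k_v$, obtaining $(k_u-1)(k_v-1)\le 0$, which contradicts $k_u,k_v\ge 2$. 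If $k_u=1$ and $k_v\ge 2$, then $(\star)$ becomes $(k_v-1)(|C_u|-1)+deg_G(u)\le(a+b)-1\le s_u$ (using the sharper $a+b\le 1+s_u$ from $a\le k_u=1$ and $b\le s_u$), and substituting $|C_u|-1\ge s_u+1$ reduces it to $s_u(k_v-2)+(k_v-1)+deg_G(u)\le 0$, impossible since $k_v\ge 2$ and $deg_G(u)\ge 1$. The case $k_u\ge 2, k_v=1$ is symmetric (using the $v$-universal comparison), and $k_u=k_v=1$ gives $|K|=2$, contradicting the thick-spider hypothesis $|K|\ge 3$. The main obstacle is deriving the sharper bound $a+b\le 1+s_u$ in the $k_u=1$ case, since the coarse estimate $a+b\le|K|$ is not tight enough to close the argument.
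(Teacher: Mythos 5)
Your proposal is correct and follows essentially the same strategy as the paper's proof: lower-bound the cross-component fill edges forced by the clique vertices of the thick spider (which are adjacent to everything except one $S$-partner) and compare against the $|V(G)|-1-deg_G(u)$ cost of making $u$ or $v$ universal, using $|K|\ge 3$ to kill the remaining boundary case. The only difference is bookkeeping: the paper counts fill edges at just one clique vertex in each of $C_u$, $C_v$ and splits cases by where their $S$-partners lie, whereas you aggregate over all of $K$ and split on $(k_u,k_v)$; both arguments are sound.
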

\begin{figure}[t!]
		\hrule\medskip\medskip	
		\centering
		\includegraphics{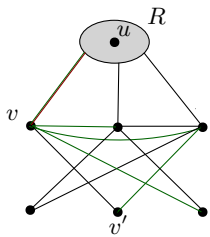}
		\smallskip\medskip\hrule\medskip
		\caption{The fill edges (green edges) are $|R| + 4$ including $uv$ (red edge).}
		\label{fig:thick3b}
	\end{figure}
\begin{proof}
	Suppose for contradiction that $S \cup K$ contains a vertex in $C_u$ and a vertex in $C_v$. Because $H[C_u]$ and $H[C_v]$ are connected (Observation~\ref{obs: solution_prop}(ii)), there exist vertices $u' \in K \cap C_u$ and $v' \in K \cap C_v$ and let $u'', v''$ be the non-neighbors in $S$ of $u', v'$ respectively. Then, if $u'' \in C_u$, $u'$ is incident on $|(S \cup K \cup R) \cap C_v| = |C_v|$ fill edges in $H$ whereas if $u'' \in C_v$, $u'$ is incident on $|(S \cup K \cup R) \cap C_v| - 1 = |C_v| - 1$ fill edges; a symmetric result holds for $v'$ and $v''$. Before proceeding, we note that by making $u$ universal in $G$, we would need at most $|V(G)| - 2$ fill edges since $deg_G (u) \ge 1$ because $|C_u| \ge 2$ and $G[C_u]$ is connected (Observation~\ref{obs: solution_prop}(i)). Next, we distinguish the following cases:
	\begin{itemize}
		\item $u'' \in C_u$ and $v'' \in C_v$:
		Then, in $H$, the number~$N$ of fill edges is
		$$N \ge |C_v| + |C_u| - 1 + 1$$
		where we subtract $1$ for the double counted fill edge~$u' v'$ and we add $1$ for the fill edge~$uv$, which implies that $N \ge |V(G)|$ in contradiction to the optimality of the solution~$H$.
		\item $u'', v'' \in C_u$ or $u'', v'' \in C_ v$:
		In either case, as in the previous item, in $H$, the number~$N$ of fill edges is
		$$N \ge \left( |C_u| + |C_v| - 1 \right) - 1 + 1,$$
		which implies that $N \ge |V(G)| - 1$, again a contradiction to the optimality of $H$.
		\item $u'' \in C_v$ and $v'' \in C_u$:
		Then, in $H$, in addition to the fill edge~$u v$ and the $(|C_u| - 1) + (|C_v| - 1) - 1 = |V(G)| - 3$ fill edges incident on $u', v'$, we note that any vertex in $K \setminus \{u',v'\}$ is adjacent to both $u'', v''$, thus being incident to at least $1$ fill edge, for a total of at least $1 + (|V(G)| - 3) + (|K| - 2) = |V(G)| + |K| - 4 \ge |V(G)| - 1$ fill edges, again a contradiction to the optimality of $H$.
	\end{itemize}
\end{proof}

\begin{figure}[t!]
		\hrule\medskip\medskip	
		\centering
		\includegraphics{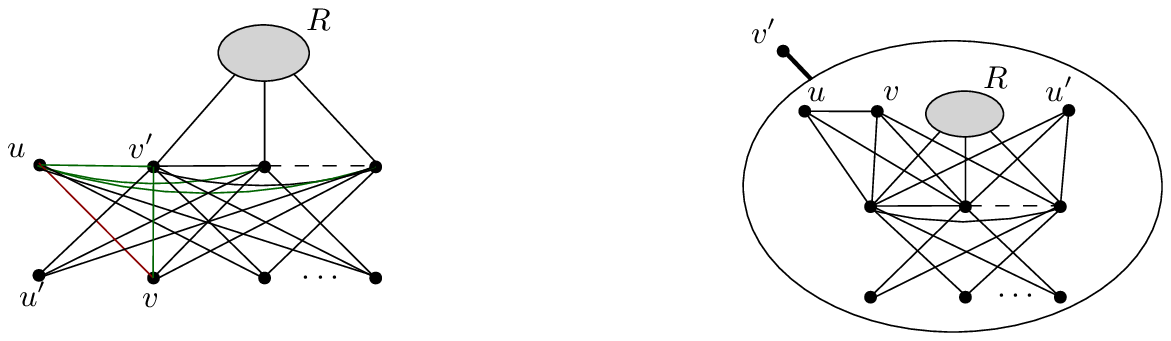}
		\smallskip\medskip\hrule\medskip
		\caption{The graph $G$ with fill edges (green edges) including $uv$ edge (red edge) where $C_u = \{u\}$ and $|K| \ge 4$, and its representation tree after addition of fill edges.}
		\label{fig:thick3ci_1}
\end{figure}
 
\bigskip\noindent
\textbf{Cases 3b and 3c. At least one of the vertices $u,v$ belongs to $S \cup K$.}

\begin{lemma} \label{lemma:3bc}
	If there exists an optimal solution~$H$ of the ($P_4$-sparse-2CC,+$1$)-Min\-Edge\-Addition Problem for the union of $G[C_u]$ and $G[C_v]$ and the non-edge $uv$ such that the root node of the $P_4$-sparse tree corresponding to $H$ is a $2$-node corresponding to a thick spider~$(S,K,R)$ with at least one of $u,v$ in $S \cup K$, then there exists an optimal solution of the same problem which results from making $u$ or $v$ universal in $G$.
\end{lemma}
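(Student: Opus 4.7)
The plan is to classify where $u,v$ can sit in the thick spider $(S,K,R)$ at the root of $T_H$ and, in each resulting subcase, lower-bound the number of fill edges of $H$ by a quantity that is no smaller than the cost $|V(G)|-1-\deg_G(u)$ or $|V(G)|-1-\deg_G(v)$ of making $u$ or $v$ universal in $G$.

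First I would narrow the admissible placements of $u$ and $v$. Because $uv$ is an edge of $H$ and $S$ is independent, both endpoints cannot lie in $S$; and since in a thick spider $R$ has no edges to $S$, it is also impossible to have one endpoint in $S$ and the other in $R$. Under the hypothesis that at least one of $u,v$ lies in $S\cup K$, the possibilities reduce to (A) $u,v\in K$; (B) one lies in $K$ and the other in $S$, with the two being non-mates under the bijection $f$; (C) one lies in $K$ and the other in $R$. I would assume without loss of generality that $u\in K\cap C_u$. Since $H[C_u]$ and $H[C_v]$ are connected (Observation~\ref{obs: solution_prop}(ii)), we also get $k_u=|K\cap C_u|\ge 1$ and $k_v=|K\cap C_v|\ge 1$; together with $|K|=|S|\ge 3$ this yields the quantitative leverage that drives the proof.

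Next I would bound $N:=|E(H)\setminus E(G)|$ from below by counting only fill edges with one endpoint in $C_u$ and the other in $C_v$; every such edge of $H$ is automatically a fill edge since $G$ has no cross edges. Writing $s_u,k_u,r_u,s_v,k_v,r_v$ for the obvious sizes and $m\le k_u+k_v$ for the number of mate pairs split across the two components, the adjacency pattern of the thick spider gives
\begin{equation*}
N \;\ge\; k_u|C_v| + k_v|C_u| - k_u k_v - m.
\end{equation*}
Applying Equation~(\ref{eq:product}) to the products $k_u|C_v|$ and $k_v|C_u|$ and invoking $k_u+k_v\ge 3$ converts this into a bound of the form $N\ge |V(G)|-c$ for a small constant~$c$. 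Comparing with the universal-vertex costs $|V(G)|-1-\deg_G(u)$ and $|V(G)|-1-\deg_G(v)$, and using $\deg_G(u),\deg_G(v)\ge 1$ (by connectivity of $G[C_u]$ and $G[C_v]$ from Observation~\ref{obs: solution_prop}(i)) whenever $|C_u|,|C_v|\ge 2$, I expect the inequality to match in subcases (A) and (B).

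The hard part will be subcase (C), in which one of $u,v$ lies in $R$ and therefore contributes only the $|K|$ clique edges rather than nearly all of $|C_v|$ (or $|C_u|$) to the cross-count; the naive bound then falls short by roughly $|R|$. I plan to recover the missing slack in the same spirit as in the proof of Lemma~\ref{lemma:3a_2}: each vertex $a\in K\setminus\{u,v\}$ has a mate $f^{-1}(a)\in S$, and once $S\cup K$ is split between $C_u$ and $C_v$ at least one endpoint of each such pair lies opposite from its partner, contributing at least one further fill edge per pair for an extra $|K|-2\ge 1$ edges in total. Combined with the stronger bound $\deg_G(u)\ge 2$ forced whenever $u\in K$ has both a mate and an $R$-neighbor inside $G[C_u]$, the inequality tightens to $N\ge |V(G)|-1-\deg_G(u)$ (or the analogous statement for $v$), yielding the desired optimality of the universal-vertex solution.
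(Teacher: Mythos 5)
Your plan has genuine gaps, concentrated in two places. First, the structural claim that connectivity of $H[C_u]$ and $H[C_v]$ forces $k_u\ge 1$ \emph{and} $k_v\ge 1$ is false: the component containing the endpoint that lies in $R$ (your case (C)), or an isolated endpoint in $S$, can sit entirely inside $R$ or be a singleton, since $H[R]$ may itself be connected; so $k_v=0$ is possible and the ``leverage'' $k_u,k_v\ge 1$, on which your use of Equation~(\ref{eq:product}) rests, is not available exactly where you need it. Second, the central calculation is not carried out and does not work as described. Applying Equation~(\ref{eq:product}) to $k_u|C_v|$ and $k_v|C_u|$ in the bound $N\ge k_u|C_v|+k_v|C_u|-k_uk_v-m$ leaves the uncontrolled term $-k_uk_v$ (which can be quadratic in $|K|$), so the stated conclusion $N\ge |V(G)|-c$ for a small constant $c$ does not follow from the manipulation you describe; one has to regroup as $k_u\bigl(|C_v|-k_v\bigr)+k_v\bigl(|C_u|-k_u\bigr)+k_uk_v$ and treat the degenerate factors separately, and even then the endgame comparison with $|V(G)|-1-\deg_G(u)$ in cases (A) and (B) is only ``expected to match,'' not proved. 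Most seriously, the repair you propose for case (C) fails: if, say, $v\in K$ and $u\in R$, it is entirely possible that $S\cup K\subseteq C_v$, so there are \emph{no} split mate pairs and the promised extra $|K|-2$ fill edges do not exist; moreover, with $C_u$ small and $R\cap C_v$ large, the cross-edge count can be far below $|V(G)|-2$ while $\deg_G(u)=1$, so counting only edges between $C_u$ and $C_v$ can never certify $N\ge |V(G)|-1-\deg_G(u)$. What is missing is precisely the paper's engine: a vertex of $K$ in the thick spider is adjacent in $H$ to all vertices except its mate, so making it universal in $G$ costs exactly one more edge than the fill edges of $H$ already incident on it; hence it suffices to exhibit a single fill edge of $H$ not incident on that $K$-endpoint (and to analyse separately the boundary situation where every fill edge is incident on it, which forces its component to be a singleton and pins down $G$ on the other side). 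Your purely cross-component count ignores fill edges inside $C_v$ incident on the $K$-endpoint and therefore cannot close case (C).

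For comparison, the paper's proof of Lemma~\ref{lemma:3bc} is organised around exactly that observation: it first reduces to the case that one of $u,v$ lies in $K$, then compares $H$ with the solution obtained by making that vertex universal (one extra edge beyond its incident fill edges), and in the residual configurations ($C_u=\{u\}$ or $C_u=\{u,z\}$ with $|K|=3$, etc.) exhibits explicit cheaper $P_4$-sparse completions to rule them out or show ties. If you want to salvage your write-up, replace the global cross-counting by this local comparison; the counting via Equation~(\ref{eq:product}) is the right tool in Lemmas~\ref{lemma:2a_2} and \ref{lemma:3a_2}, where both endpoints are in $R$, but not here.
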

\begin{proof}
	First, note that a vertex in the set~$K$ needs exactly $1$ additional fill edge to become universal in $H$. The idea of the proof is to show that in each case at least one of $u,v$ belongs to $K$ and that by making it universal in $H$, we get an optimal solution that is no worse than $H$. Furthermore, recall that we consider that in a thick spider $|K| = |S| \ge 3$.

	\smallskip\noindent
	\textbf{Case~3b: one of the vertices $u,v$ belongs to $R$ and the other belongs to $S \cup K$, which implies that in fact it belongs to $K$.}
	\ Without loss of generality, we assume that $u \in R$ and $v \in K$. We show that $|C_v| \ge 2$. Otherwise, $C_v = \{v\}$, and we could get a solution~$H'$ with fewer fill edges than $H$ by removing $v$ and all incident edges from $H$ (the resulting graph is still $P_4$-sparse) and by adding fill edges incident on $u$ to all its non-neighbors including $v$, a contradiction; note that in $H$, $v$ is incident on $|V(G)| - 2$ fill edges (including $uv$) whereas $u$ is adjacent to at least $|K|-1 \ge 2$ vertices other than $v$ which implies that it has at most $|V(G)| - 3$ non-neighbors including $v$. Thus, $|C_v| \ge 2$.
	Additionally, it holds that $K \cap C_v = \{v\}$ since otherwise, in addition to the fill edges incident on $v$ in $H$, we would have at least $2$ more fill edges whereas by making $v$ universal in $H$, we get a solution that requires fewer fill edges than $H$, a contradiction; to see this, note that if $|K \cap C_v| \ge 3$ there exist at least $2$ more fill edges connecting $u$ to each of the vertices in $(K \cap C_v) \setminus \{v\}$, whereas if $|K \cap C_v| = 2$ there exist at least $2$ more fill edges connecting the vertex in $(K \cap C_v) \setminus \{v\}$ to $u$ and to the vertices in $K \cap C_u$ where $|K \cap C_u| = | K \setminus C_v| \ge 1$.

	Therefore, $|C_v| \ge 2$ and $K \cap C_v = \{v\}$. In fact, $(C_v \setminus \{v\}) \subseteq S$; if there existed a vertex in $C_v \cap R$ then, in addition to the fill edges incident on $v$ in $H$, we would have at least $2$ more fill edges connecting that vertex to the vertices in $K \cap C_u$, again implying that making $v$ universal in $H$ would lead to a solution with fewer fill edges than $H$, a contradiction. Since $K \cap C_v = \{v\}$, each vertex in $S$ is adjacent to at least $1$ vertex in $K \cap C_u$ and thus the optimality of the solution~$H$ (versus the solution with $v$ being universal in $H$) implies that $|C_v| = 2$, $|K| = 3$, and the only fill edges are those connecting the vertices in $C_u$ to the vertices in $C_v$ (Figure~\ref{fig:thick3b}) for a total of $|R| + 4$ fill edges (including $uv$) as in the case when $v$ is universal in $G$.
	
	\begin{figure}[t!]
		\hrule\medskip\medskip	
		\centering
		\includegraphics{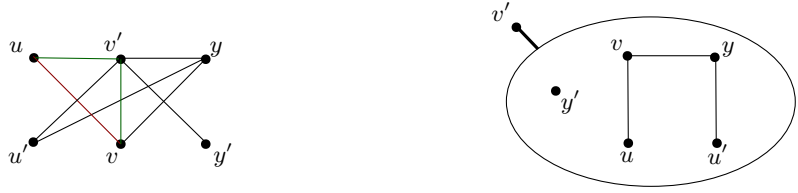}
		\smallskip\medskip\hrule\medskip
		\caption{The graph $G$ with fill edges (green edges) including $uv$ edge (red edge) where $C_u = \{u\}$, $|K| = 4$, and $R= \emptyset $ and its representation tree after addition of fill edges.}
		\label{fig:thick3ci_2}
	\end{figure}
	\smallskip\noindent
	\textbf{Case~3c: the vertices $u,v$ belong to $S \cup K$.} Then, because the vertices in $S$ form an independent set in $H$, at least one of $u,v$ belongs to $K$; without loss of generality, let us assume that $u \in K$. We consider the following cases:

	\begin{itemize}
		\item[(i)] $K \subseteq C_u$: Because $H[C_v]$ is connected \ref{obs: solution_prop}, then $C_v = \{v\}$ which implies that $v$ is incident on fill edges to $u \in K$ and to $|K|-2 \ge 1$ more vertices in $C_u$; then, the optimality of the solution~$H$ (versus the solution with $u$ being universal in $G$) implies that $|K| = 3$, and the fill edges are those connecting $v$ to the vertices in $K \subseteq C_u$ (a total of $2$ fill edges) matching the number of fill edges if $u$ is universal in $G$.
		\item[(ii)] $K \cap C_v \ne \emptyset$: We show that $v \not\in K$.  Otherwise, let $w \in K - \{u,v\}$ and $w' \in S$ be the non-neighbor of $w$. If $w, w' \in C_u$, then, in addition to the fill edges incident on $u$ in $H$, $H$ contains the $2$ fill edges $v w$ and $v w'$, a contradiction to the optimality of $H$ compared to the solution with $u$ being universal in $G$; if $w, w' \in C_v$, the case is symmetric considering $v$ being universal in $G$. So consider that one of $w,w'$ belongs to $C_u$ and the other in $C_v$; due to symmetry, we can assume that $w \in C_u$ and $w' \in C_v$. Then $H$ contains the fill edges $v w$ and $u w'$. Now consider the non-neighbor~$x$ of $u$ in $S$, which is adjacent to both $v$ and $w$; if $x \in C_u$, then $H$ also contains the fill edge~$vx$ and thus is not optimal compared to the solution with $u$ being universal in $G$ whereas if $x \in C_v$, $H$ contains the fill edge~$wx$ and again $H$ is not optimal compared to the solution with $u$ being universal in $G$.
		
		Thus $v \not\in K$; since $u,v \in S \cup K$, then $v \in S$. Since $K \cap C_v \ne \emptyset$ and $H[C_v]$ is connected (Observation~\ref{obs: solution_prop}(i)), there exists $w \in K \cap C_v$ with $w$ being adjacent to $v$. Then we can show that $K \setminus \{u\} \subseteq C_v$; otherwise, there would exist a vertex $x \in (K \setminus \{u\}) \cap C_u$ and if $x' \in S$ is a common neighbor of $w,x$, the graph~$H$ would include the fill edges $w x$ and one of $w x'$ or $x x'$ (depending on whether $x'$ belongs to $C_u$ or to $C_v$, respectively) and thus $H$ is not optimal compared to the solution with $u$ being universal in $G$, a contradiction. In a similar fashion, $R \subseteq C_v$ for otherwise $H$ would contain the at least $2$ fill edges from any vertex in $R \cap C_u$ to all the vertices in $K \setminus \{u\}$ and would not be optimal compared to the solution with $u$ being universal in $G$. A similar argument proves that there is at most $1$ vertex in $S \cap C_u$ and that all the vertices in $S$ that are adjacent to at least $2$ vertices in $K \setminus \{u\}$ need also belong to $C_v$.

		\begin{figure}[t]
			\hrule\medskip\medskip	
			\centering
			\includegraphics{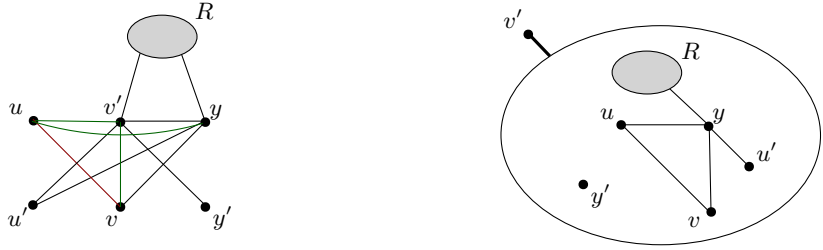}
			\smallskip\medskip\hrule\medskip
			\caption{The graph $G$ with fill edges (green edges) including $uv$ edge (red edge) where $C_u = \{u\}$, $|K| =3$, and $R \neq \emptyset $  and its representation tree after addition of fill edges.}
			\label{fig:thick3ci_3}
		\end{figure}
		
		Then, either (i)~$C_u = \{u\}$ or (ii)~$C_u = \{u, z\}$ (where $z \in S$ is a neighbor of $u$) and $|K| = 3$ (otherwise $z$ would be adjacent to at least $2$ vertices in $K \setminus \{u\}$ and thus would need to belong to $C_v$, a contradiction). In the former case, $H$ contains $|R| + 2\,|K| - 2$ fill edges incident on $u$, whereas in the latter, $|R| + 3\,|K| - 5$ fill edges incident on $u$ and $z$. However, we can show that in either case, we get a $P_4$-sparse graph by replacing these fill edges with fewer ones. In the following, let $u' \in S$ be the non-neighbor of $u$ and $v', z' \in K$ be the non-neighbors of $v,z$ respectively.
		\begin{itemize}
			\item[(i)] $C_u = \{u\}$.
			If $|K| \ge 4$, we use $|K| + 1$ fill edges ($|K|$ fill edges connecting $u$ to $v$ and to the vertices in $K \setminus \{u\}$ and $1$ more fill edge connecting $v$ to $v'$); $v'$ becomes universal in the resulting graph while the remaining vertices induce a thick spider with $S' = S \setminus \{u',v\}$, $K' = K \setminus \{u,v'\}$, and $R' = R \cup \{u, v, u'\}$ (Figure~\ref{fig:thick3ci_1}). If $|K| = 3$ and $R = \emptyset$, we use $3$ fill edges to connect $u$ to $v$ and to $v'$, and to connect $v$ to $v'$; in the resulting graph (Figure~\ref{fig:thick3ci_2}), $v'$ is universal and in the subgraph induced by the remaining vertices, the vertex in $S \setminus \{v, u'\}$ becomes isolated and the other vertices induce a $P_4$. If $|K| = 3$ and $R \ne \emptyset$, we use $4$ fill edges by additionally using the fill edge~$u y$ where $y$ is the vertex in $K \setminus \{u, v'\}$; in the resulting graph (Figure~\ref{fig:thick3ci_3}), the vertex~$v'$ and the vertex in $S \setminus \{v, u'\}$ are as in the case for $|K| = 3$ and $R = \emptyset$, vertex~$y$ is universal in the subgraph induced by the remaining vertices which in turn induce a disconnected graph with connected components $R$, $\{u'\}$, and $\{u,v\}$.
			\item[(ii)] $C_u = \{u, z\}$ and $|K| = 3$.
			In this case, we use $3 = |K|$ fill edges to connect $v$ to $u$ and to connect $u$ and $z$ to $z'$, and then $z'$ becomes universal in the resulting graph ($z'$ is universal in $H[C_v]$), in which the remaining vertices induce a disconnected subgraph with connected components $R \cup \{u',v'\}$ and $\{v,u,z\}$ (Figure~\ref{thick3cii}).
		\end{itemize}
		In either case, we get a contradiction to the optimality of the solution~$H$ (note that for any $|K| \ge 4$ it holds that $|K|+1 < 2\,|K| - 2 \le |R| + 2\,|K| - 2$ whereas for $|K| = 3$ we have: $3 < 4 = 2\,|K| - 2 \le |R| + 2\,|K| - 2$; for $R \ne \emptyset$, $4 < |R| + 4 = |R| + 2\,|K| - 2$; lastly, $3 < 4 = 3\,|K| - 5 \le |R| + 3\,|K| - 5$).
	\end{itemize}
\end{proof}

\begin{figure}[t]
	\hrule\medskip\medskip	
	\centering
	\includegraphics{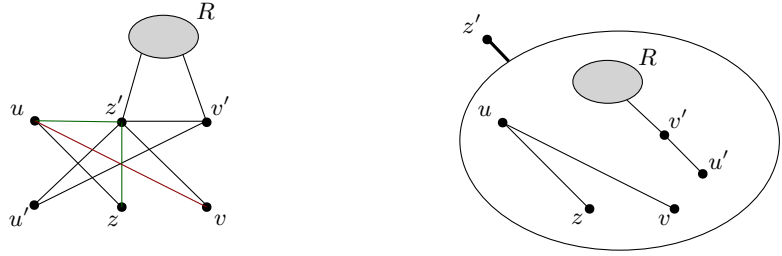}
	\smallskip\medskip\hrule\medskip
	\caption{The graph $G$ with fill edges (green edges) including $uv$ edge (red edge) where $C_u = \{u,z\}$ and $|K| =3$, and its representation tree after addition of fill edges.}
	\label{thick3cii}
\end{figure}

\section{Adding a Non-edge incident on a Vertex of the Clique or the Independent Set of a Spider}
\label{ch:inside_spider}

In this section, we consider the ($P_4$-sparse,{$+$}$1$)-Min\-Edge\-Addition Problem for a spider $G = (S,K,R)$ and a non-edge~$e$ incident on a vertex in $S \cup K$.
In the following, for simplicity, we assume that $S = \{ s_1, s_2, \ldots, s_{|K|} \}$, $K = \{ k_1, k_2, \ldots, k_{|K|} \}$, and $R = \{ r_1, r_2, \ldots, r_{|R|} \}$ where $|K| \ge 2$ and $|R| \ge 0$.

\subsection{Thin Spider}

Suppose that the spider~$G$ is thin and that $s_i$ is adjacent to $k_i$ for each $i = 1, 2, \ldots, |K|$. The following lemmas address the cases of the addition of the non-edge~$e$.

\begin{lemma} \label{lemma:thin_K-S}
	The ($P_4$-sparse,{$+$}$1$)-Min\-Edge\-Addition Problem for the thin spider $G = (S,K,R)$ and a non-edge~$e$ incident on a vertex in $S$ and a vertex in $K$ admits an optimal solution that requires $|K|-1$ fill edges (including $e$).
\end{lemma}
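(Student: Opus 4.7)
The plan is to prove matching upper and lower bounds of $|K|-1$. Without loss of generality write the non-edge as $e = s_i k_j$ with $i \neq j$, which is the only form a non-edge between $S$ and $K$ can take in a thin spider.

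For the upper bound, I would form the completion $H$ by adding $e$ together with the $|K|-2$ fill edges $\{\,s_\ell k_j : \ell \notin \{i,j\}\,\}$, for a total of $|K|-1$ fill edges. Then $k_j$ becomes adjacent in $H$ to every vertex of $S$, to all other vertices of $K$, and to all of $R$; that is, $k_j$ is universal in $H$. Removing $k_j$ leaves $s_j$ isolated (its only neighbor in $G$ was $k_j$), and the induced subgraph on the remaining vertices is the thin spider $(S \setminus \{s_j\},\, K \setminus \{k_j\},\, R)$ when $|K| \geq 3$, or a trivial $P_4$-sparse graph when $|K|=2$. In either case $H$ is the join of the universal vertex $k_j$ with a $P_4$-sparse graph, and its $P_4$-sparse tree has a $1$-node at the root, confirming that $H$ is $P_4$-sparse.

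For the lower bound, the key observation is that for every index $\ell \notin \{i,j\}$ the five-vertex set $B_\ell = \{s_\ell, k_\ell, k_j, s_i, s_j\}$ induces in $G+e$ the two chordless $P_4$'s
\[
s_\ell - k_\ell - k_j - s_i \quad\text{and}\quad s_\ell - k_\ell - k_j - s_j,
\]
so $G+e$ is not $P_4$-sparse on $B_\ell$. Any $P_4$-sparse completion $H$ must therefore contain at least one fill edge with both endpoints in $B_\ell$. The non-edges of $G+e$ inside $B_\ell$ are $s_\ell k_j,\ s_\ell s_i,\ s_\ell s_j,\ k_\ell s_i,\ k_\ell s_j$, and $s_i s_j$. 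A direct inspection of the five-vertex induced subgraph shows that adding only $s_i s_j$ does not eliminate either of the two displayed $P_4$'s since $s_i s_j$ is a chord of neither; hence at least one of the first five non-edges --- each of which is incident on the pair $\{s_\ell, k_\ell\}$ --- must be among the fill edges of $H$.

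To conclude, I would use that the pairs $\{s_\ell, k_\ell\}$ are pairwise disjoint for distinct $\ell \notin \{i,j\}$, and that the edge $e = s_i k_j$ itself touches none of them, so the $|K|-2$ fixing edges obtained above are pairwise distinct and distinct from $e$. Summing, any $P_4$-sparse completion uses at least $(|K|-2) + 1 = |K|-1$ fill edges, matching the construction. The main obstacle is the local case analysis inside $B_\ell$: one must check that no single fill edge other than one incident on $\{s_\ell,k_\ell\}$ can kill both displayed chordless $P_4$'s, and that no combination of fill edges confined to the complement of $\{s_\ell, k_\ell\}$ within $B_\ell$ can do so either. I would settle this by a short direct verification on the five vertices of $B_\ell$.
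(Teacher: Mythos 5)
Your proof is correct and takes essentially the same approach as the paper: an explicit completion with $|K|-1$ fill edges obtained by saturating the $K$-endpoint of $e$, and a lower bound extracted from the five-vertex sets $\{s_\ell,k_\ell,k_j,s_i,s_j\}$, each of which carries two induced $P_4$'s after adding $e$. The only difference is bookkeeping: the paper argues by contradiction, pigeonholing to find one pair $\{s_\ell,k_\ell\}$ untouched by fill edges and exhibiting a forbidden subgraph on that five-set, whereas you charge to every such five-set a fill edge with both endpoints inside it (necessarily incident on $\{s_\ell,k_\ell\}$) and note these charges are pairwise distinct --- a slightly more careful rendering of the same counting.
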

\begin{proof}
	Suppose, without loss of generality, that $e = k_1 s_2$.
	Then, we can get a $P_4$-sparse graph if, in addition to the fill edge~$e$, we add the fill edges $k_2 s_j$ ($j = 3, \ldots, |K|$) or alternatively the fill edges $s_1 k_j$ ($j = 3, \ldots, |K|$) for a total of $|K|-1$ fill edges.
	
	To prove the optimality of this solution, assume for contradiction that there is an optimal solution of the ($P_4$-sparse,{$+$}$1$)-Min\-Edge\-Addition Problem for the thin spider $G$ and the non-edge~$k_1 s_2$ with at most $|K| - 2$ fill edges, that is, for $e$ and at most $|K|-3$ additional fill edges. Because the number of pairs~$s_i, k_i$ ($3 \le i \le |K|$) is equal to $|K|-2$, there exists a pair~$s_j, k_j$ among them such that neither $s_j$ nor $k_j$ is incident on any of the fill edges. Then, due to the addition of the non-edge~$e = k_1 s_2$, the vertices $s_1,k_1,s_2, s_j, k_j$ induce a forbidden subgraph~$F_5$ or $F_3$ (depending on whether $s_1, s_2$ have been made adjacent or not, respectively); a contradiction.

\end{proof}

\begin{lemma} \label{lemma:thin_S-S}
	The ($P_4$-sparse,{$+$}$1$)-Min\-Edge\-Addition Problem for the thin spider $G = (S,K,R)$ and a non-edge~$e$ with both endpoints in $S$ admits an optimal solution that requires $\lambda$ fill edges (including the non-edge~$e$) where
	\[
	\lambda =
	\begin{cases}
	\ 2\, |K| - 3, & \text{if $|R| = 0$;}\\
	\ 2\, |K| - 2, & \text{if $|R| = 1$;}\\
	\ 2\, |K| - 1, & \text{if $|R| \ge 2$.}
	\end{cases}
	\]
\end{lemma}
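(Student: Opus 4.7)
\emph{Proof plan.} Assume without loss of generality that $e = s_1 s_2$. I will prove each of the three bounds by exhibiting an explicit $P_4$-sparse completion (upper bound) and a matching counting argument on forced fill edges (lower bound).

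\smallskip\noindent
\textit{Upper bound.} The shared construction, valid for $|K|\ge 3$, adds $s_1 s_2$ together with the $2(|K|-2)$ edges $s_1 k_j$ and $s_2 k_j$ for $j=3,\ldots,|K|$, producing a thin spider whose $R$-set is $R\cup\{s_1,s_2,k_1,k_2\}$. For $|R|=0$ the $R$-set induces the $4$-cycle $s_1{-}k_1{-}k_2{-}s_2$, which is $P_4$-sparse; this gives $2|K|-3$ fills (the trivial case $|K|=2$ gives the same count, since $G+e$ is itself $C_4$). For $|R|=1$ I add one more edge, say $s_1 k_2$, to destroy the two induced $P_4$'s that the base construction leaves on $\{s_1,s_2,k_1,k_2,r\}$, yielding $2|K|-2$ fills. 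For $|R|\ge 2$ I further add $s_2 k_1$, so that $\{s_1,s_2,k_1,k_2\}$ becomes a $K_4$ inside the $R$-set; a modular-decomposition argument (the $K_2$-module $\{s_1,s_2\}$ substituted into a $P_4$-sparse quotient) then shows the result is $P_4$-sparse, for a total of $2|K|-1$ fills. The degenerate sub-case $|K|=3$ is handled by observing that $k_3$ becomes universal, so the whole graph is $P_4$-sparse.

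\smallskip\noindent
\textit{Lower bound.} Let $H$ be any $P_4$-sparse completion containing $s_1 s_2$. For each $j\in\{3,\ldots,|K|\}$, the 5-subsets $A_j=\{s_1,s_2,k_1,k_j,s_j\}$ and $B_j=\{s_1,s_2,k_2,k_j,s_j\}$ each induce a $P_5$ in $G+e$ and hence carry two induced $P_4$'s; a direct inspection of all candidate fill edges in $A_j\cup B_j$ shows that no single edge simultaneously fixes both $A_j$ and $B_j$, and that any fix must be incident to $k_j$ or $s_j$. Hence $H$ contains at least two distinct fills per $j$, pairwise disjoint across $j$, contributing $2(|K|-2)$ fills. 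Together with $s_1 s_2$ this gives the $2|K|-3$ bound for $|R|=0$. For $|R|\ge 1$, the 5-subset $\{s_1,s_2,k_1,k_2,r\}$ already carries two induced $P_4$'s in $G+e$ (namely $s_2\,s_1\,k_1\,r$ and $s_1\,s_2\,k_2\,r$) and forces a fill disjoint from the $j$-fills (which all touch some $k_j$ or $s_j$ with $j\ge 3$), yielding $2|K|-2$. For $|R|\ge 2$, whichever single fill was used to fix $\{s_1,s_2,k_1,k_2,r\}$, a case check shows that at least one of the 5-subsets $\{s_1,s_2,k_1,r,r'\}$ or $\{s_1,s_2,k_2,r,r'\}$ still carries two induced $P_4$'s, forcing yet one more fill, yielding $2|K|-1$.

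\smallskip\noindent
The main technical obstacle will be the lower-bound case analysis, since a single candidate fill such as $s_1 k_2$ belongs to many critical 5-subsets at once. To avoid overcounting I will partition the forced fills into three types according to whether they touch some $\{k_j,s_j\}$ with $j\ge 3$, an $R$-vertex, or only the core $\{s_1,s_2,k_1,k_2\}$, and argue separately that each type contributes distinct edges. A secondary complication is the degenerate regime $|K|\in\{2,3\}$, which is handled by small ad hoc constructions but introduces no new ideas.
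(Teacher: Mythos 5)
Your upper-bound constructions are in fact the same fill sets the paper uses (for $|R|\ge 2$ your ``base plus $s_1k_2$ and $s_2k_1$'' is exactly the paper's $s_1k_2,\ldots,s_1k_{|K|}$, $s_2k_1,s_2k_3,\ldots,s_2k_{|K|}$, and your universal-clique/module justification is fine), but your lower bound is organized differently. The paper first splits on which of the two core non-edges $k_1s_2$, $k_2s_1$ are present in the completion (three cases) and then, inside each case, extracts two forced fills per pair $(k_j,s_j)$ from a named forbidden subgraph ($F_1$/$F_4$, $F_6$, or $F_5$) plus one fill per $R$-vertex; you instead avoid the top-level case split by anchoring, case-freely, two forced fills per pair via the two $P_5$'s $A_j=\{s_1,s_2,k_1,k_j,s_j\}$ and $B_j=\{s_1,s_2,k_2,k_j,s_j\}$, and then force the remaining one or two fills inside the core-plus-$R$ region via the house $5$-sets. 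Your route buys a cleaner disjointness argument (every forced fill inside $A_j$ other than $s_2k_1$, and inside $B_j$ other than $s_1k_2$, has one endpoint in $\{k_j,s_j\}$ and the other in the core, so the per-$j$ contributions cannot collide with each other nor with the core/$R$ fills), while the paper's case split keeps each individual $5$-set check a single forbidden-subgraph recognition. One caution: your stated key claim, ``no single edge simultaneously fixes both $A_j$ and $B_j$,'' is not by itself sufficient, because the fills available inside $A_j$ (resp.\ $B_j$) also include $s_2k_1$ (resp.\ $s_1k_2$); e.g.\ $s_1k_j$ alone does fix $A_j$, and what forces the second pair-anchored fill is that $B_j$ remains broken even after adding $s_1k_j$ and $s_1k_2$ (and, dually, that $\{s_1k_j,s_2k_1\}$ re-breaks $A_j$). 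These two-edge checks, and your $|R|\ge 2$ check on $\{s_1,s_2,k_1,r,r'\}$ and $\{s_1,s_2,k_2,r,r'\}$ against each of the four candidate house-fixing fills $s_1k_2,s_2k_1,s_1r,s_2r$, do all go through, so this is exactly the inspection you announce rather than a gap; also, for $|R|=1$ the extra edge $s_1k_2$ does not ``destroy both'' induced $P_4$'s of the house -- it leaves one, which is permitted since $P_4$-sparseness only forbids two per five vertices.
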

\begin{proof}
	Suppose, without loss of generality, that $e = s_1 s_2$.
	Then, we can get a $P_4$-sparse graph if, in addition to the fill edge~$e$, we add the following fill edges:
	\begin{itemize}
		\item
		if $R = \emptyset$, $s_1 k_3$, $\ldots$, $s_1 k_{|K|}$ and $s_2 k_3$, $\ldots$, $s_2 k_{|K|}$;
		\item
		if $R = \{r_1\}$, $s_1 k_3$, $\ldots$, $s_1 k_{|K|}$, $s_2 k_3$, $\ldots$, $s_2 k_{|K|}$, and $1$ fill edge (among $r_1 s_1$, $r_1 s_2$, $k_1 s_2$, $k_2 s_1$) so that the  forbidden subgraph~$F_1$ induced by $s_1$, $s_2$, $k_1$, $k_2$, $r_1$ be comes a $P_4$-sparse graph;
		\item
		if $|R| \ge 2$, $s_1 k_2$, $s_1 k_3$, $\ldots$, $s_1 k_{|K|}$ and $s_2 k_1$, $s_2 k_3$, $\ldots$, $s_2 k_{|K|}$, $s_1 k_2$ (then $k_1, k_2$ become universal);
	\end{itemize}
	for a total of $\lambda$ fill edges as stated above.
	
	\begin{figure}[t]
		\hrule\medskip\medskip	
		\centering
		\includegraphics[width=0.7\textwidth]{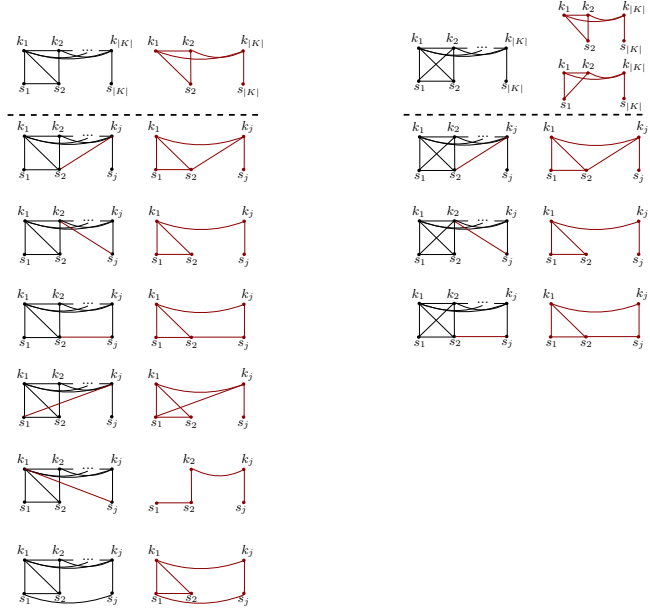}
		\smallskip\medskip\hrule\medskip
		\caption{For the proof of Lemma~\ref{lemma:thin_S-S}: \ (left)~at the top, the clique and stable set of the thin spider with the fill edges $s_1 s_2$ and $k_1 s_2$ (but not $k_2 s_1$) and below the graph that results after the addition of $1$ more fill edge; \ (right)~at the top, the clique and stable set of the thin spider with the fill edges $s_1 s_2$, $k_1 s_2$, and $k_2 s_1$ and below the graph that results after the addition of $1$ more fill edge. The red graph next to each of the above graphs is an induced forbidden subgraph.}
		\label{fig:intra_s-s}
	\end{figure}
	
	To prove the optimality of this solution,
	suppose for contradiction that there exists an optimal solution~$G'$ that requires \emph{fewer} than $\lambda$ fill edges (including $e$).
	First, consider that $|K| = 2$. Then the values of $\lambda$ imply that $G'$ requires at most $0$~fill edges if $|R| = 0$, at most $1$~fill edge if $|R| = 1$, and at most $2$~fill edges if $|R| \ge 2$ including $e$ in each case. The number of fill edges if $|R| = 0$ leads to a contradiction since $e$ is added. If $|R| = 1$, then the addition of $e$ results in an $F_1$ ($=$~house) and at least $1$ additional fill edge needs to be added, a contradiction again. 
	If $|R| \ge 2$, then each vertex~$r \in R$ and the vertices $s_1$, $s_2$, $k_1$, $k_2$ induce an $F_1$, and additional fill edges are needed. If neither the fill edge~$k_1 s_2$ nor the fill edge~$k_2 s_1$ is added, then we need $1$ fill edge incident on each $r \in R$; since $G'$ requires at most $2$ fill edges (including $e$) then $|R| + 1 \le 2 \Longleftrightarrow |R| \le 1$, in contradiction to the fact that $||R| \ge 2$. Since $G'$ uses at most $2$ fill edges including $e$, only one of $k_1 s_2$ and $k_2 s_1$ can be added; let that be the fill edge~$k_1 s_2$. But then the vertices $s_1$, $s_2$, $k_2$, $r_1$, $r_2$ induce a forbidden subgraph~$F_5$ or $F_3$ (depending on whether $r_1, r_2$ are adjacent on not), a contradiction.
	
	Now, consider that $|K| \ge 3$.
	
	\smallskip\noindent
	\textit{A. Suppose that neither the non-edge~$k_1 s_2$ nor the non-edge~$k_2 s_1$ is added.} Then, the vertices $k_1, k_2, s_1, s_2$ induce a $C_4$. For each vertex $k_j$ ($3 \le j \le |K|$), the vertices $k_1, k_2, s_1, s_2, k_j$ induce a forbidden subgraph~$F_1$ and thus for each such subgraph at least one fill edge needs to be added; since $k_1 s_2$ and $k_2 s_1$ cannot be added, this has to be adjacent to $k_j$ (connecting it to $s_1$ or $s_2$). If only one of these two non-edges is added, say the edge~$k_j s_2$ but not the edge~$k_j s_1$, then an edge needs to be added adjacent to $s_j$, otherwise the vertices $k_1, s_1, s_2, k_j, s_j$ induce a forbidden subgraph~$F_4$. Thus, for each $j = 3, \ldots, |K|$, we need to add at least $2$ fill edges, for a total of $2 \, |K| - 4$ fill edges in addition to $e$. Moreover, if $|R| > 0$, for each vertex~$r_i \in R$, the vertices $k_1, k_2, s_1, s_2, r_i$ induce a forbidden subgraph~$F_1$ and thus at least $1$ additional fill edge adjacent to $r_i$ needs to be added. Then, the total number of fill edges is at least equal to $|R| + 2 \, |K| - 3$ (including $e$), which is no less than the value of $\lambda$ for all values of $|R|$.
	
	\smallskip\noindent
	\textit{B. Suppose that exactly one of the non-edges $k_1 s_2$ and $k_2 s_1$ is added.} Without loss of generality, suppose that the non-edge~$k_1 s_2$ is added (and not the edge~$k_2 s_1$). Then, for each pair of vertices $s_j, k_j$ ($3 \le j \le |K|$), the vertices $k_1, k_2, s_2, k_j, s_j$ induce a forbidden subgraph~$F_6$. But a single fill edge is not enough (see Figure~\ref{fig:intra_s-s}(left)). Thus at least $2 + 2 \, (|K| - 2) = 2 |K| - 2$ fill edges are needed (including $e$), which is no less than the value of $\lambda$ for $|R| \le 1$. If $|R| \ge 2$, the vertices $s_1, s_2, k_2, r_1, r_2$ induce a forbidden subgraph~$F_5$ or $F_3$ (depending on whether $r_1, r_2$ are adjacent or not); hence, at least one more fill edge is needed, for a total of $2 \, |K| - 1$ fill edges (including $e$), which is no less than the value of $\lambda$ for $|R| \ge 2$.
	
	\smallskip\noindent
	\textit{C. Suppose that both the edges $k_1 s_2$ and $k_2 s_1$ are added.} Then, the vertices $s_1$, $s_2$, $k_1$, $k_2$ induce a $K_4$. For each pair of vertices $k_j, s_j$ ($3 \le j \le |K|$), the vertices $s_1$, $k_1$, $k_2$, $k_j$, $s_j$ and $k_1$, $k_2$, $s_2$, $k_j$, $s_j$ induce a forbidden subgraph~$F_5$. But a single fill edge is not enough (as shown in Figure~\ref{fig:intra_s-s}(right)). Thus, the total number of fill edges (including $e$) is at least $3 + 2\, (|K|-2) = 2\,|K| - 1$, which is no less than the value of $\lambda$ for all values of $|R|$.

\end{proof}

\begin{lemma} \label{lemma:thin_S-R}
	The ($P_4$-sparse,{$+$}$1$)-Min\-Edge\-Addition Problem for the thin spider $G = (S,K,R)$ and a non-edge~$e$ incident on a vertex~$s$ in $S$ and a vertex in $R$ admits an optimal solution that requires $|K| - 1 + \mu$ fill edges (including $e$) where $\mu$ is the number of fill edges in an optimal solution of the ($P_4$-sparse,{$+$}$1$)-Min\-Edge\-Addition Problem for the disconnected induced subgraph~$G[\{s\} \cup R]$ and the non-edge~$e$.
\end{lemma}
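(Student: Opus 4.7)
Without loss of generality let $s = s_1$ with $f(s_1) = k_1$, and let $r \in R$ denote the other endpoint of $e$, so $e = s_1 r$. The plan is to prove matching upper and lower bounds of $|K|-1+\mu$ by partitioning, in any candidate completion, the fill edges into those with both endpoints inside $\{s_1\} \cup R$ and those with at least one endpoint in $K \cup (S \setminus \{s_1\})$; the first kind will be matched against $\mu$, and the second against $|K|-1$.

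For the upper bound I would take an optimal $P_4$-sparse completion $F$ of the subproblem on $G[\{s_1\} \cup R]$ together with $e$ (contributing $\mu$ fill edges, $e$ included) and additionally insert the $|K|-1$ fill edges $s_1 k_j$ for $j = 2,\ldots,|K|$. To show that the resulting graph $H$ is $P_4$-sparse, I plan to exhibit its $P_4$-sparse tree. When $|K| \ge 3$, $H$ is exactly the thin spider with partition $(S \setminus \{s_1\},\ K \setminus \{k_1\},\ R \cup \{s_1,k_1\})$: every vertex of the new $R$-set is adjacent to every vertex of $K \setminus \{k_1\}$ (for $s_1$ by the newly added fill edges, for $k_1$ by the original clique, for $r' \in R$ by the original spider) and is adjacent to no vertex of $S \setminus \{s_1\}$ (no such fill edges were added). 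The $R$-subtree then represents $H[R \cup \{s_1,k_1\}]$, in which $k_1$ is universal, being adjacent to all of $R$ by the spider and to $s_1$ via the wing $s_1 k_1$; removing $k_1$ leaves exactly $F$, so this subtree is a $1$-node with leaf child $k_1$ sitting above the tree of $F$. When $|K| = 2$ the same edges produce a graph whose valid tree is instead a $1$-node with leaf $k_2$ (now universal in $H$) above a $0$-node that separates the isolated $s_2$ from a $1$-node placing $k_1$ universally above $F$.

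For the lower bound, let $H$ be any $P_4$-sparse completion of $G+e$. First, $H[\{s_1\} \cup R]$ is itself $P_4$-sparse, contains $e$, and extends $G[\{s_1\} \cup R]$, hence by definition of $\mu$ it contains at least $\mu$ fill edges, all of which have both endpoints in $\{s_1\} \cup R$. Second, for each $j \in \{2,\ldots,|K|\}$ I will inspect the induced subgraph of $G+e$ on $U_j = \{s_1,k_1,s_j,k_j,r\}$, whose edges are the chordless path $s_j$--$k_j$--$k_1$--$s_1$ together with $s_1 r$, $k_1 r$, $k_j r$, and whose only non-edges are $s_1 s_j$, $s_1 k_j$, $k_1 s_j$, $s_j r$. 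Since every induced $P_4$ of a spider has its endpoints in the $S$-set and its middles in the $K$-set, any spider representation of $U_j$ must take $S' = \{s_j, s_1\}$, $K' = \{k_j, k_1\}$, $R' = \{r\}$; but $r$ is adjacent to $s_1 \in S'$ via $e$, contradicting the spider property. Hence $U_j$ is not $P_4$-sparse in $G+e$, and $H[U_j]$ must contain at least one of the four fill edges listed above. Each such edge is uniquely determined by its endpoint in the pair $\{s_j, k_j\}$, so the four-edge sets for distinct values of $j$ are pairwise disjoint and each such edge has an endpoint outside $\{s_1\} \cup R$; this yields at least $|K|-1$ additional fill edges disjoint from the $\mu$ already counted, and summing gives the desired lower bound.

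The hard part will be the upper-bound verification rather than the counting: one must check that the tree representation above is valid in every parameter range, and in particular handle the degenerate case $|K|=2$, in which the natural ``new thin spider'' has $|S'|=|K'|=1$ and so must be replaced by the explicit $1$-$0$-$1$-node tree described. A minor bookkeeping matter is that if the root of the tree of $F$ happens to be a $1$-node, the alternating-label convention of $P_4$-sparse trees forces a local merge with the $1$-node that introduces $k_1$; this changes the tree representation but neither the graph nor the fill-edge count.
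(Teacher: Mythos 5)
Your proposal is correct and follows essentially the same route as the paper: the same upper-bound construction (the fill edges $s_1 k_j$, $j = 2,\ldots,|K|$, on top of an optimal completion of $G[\{s_1\}\cup R]+e$, with $k_1$ universal above it) and the same lower bound via the five-vertex obstruction on $\{s_1,k_1,r,s_j,k_j\}$ combined with a disjointness/pigeonhole count over $j$, where the paper simply identifies that subgraph as the forbidden configuration~$F_6$. The only step you should tighten is the claim that a $P_4$-sparse completion of $G[U_j]+e$ using none of the four listed non-edges would have to be a spider: this requires the (easy) remark that $G[U_j]+e$ and its complement are both connected, or can be bypassed entirely by observing that $s_j\,k_j\,k_1\,s_1$ and $s_j\,k_j\,r\,s_1$ are two induced $P_4$s on the same five vertices, which already violates $P_4$-sparseness.
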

\begin{proof}
	Suppose, without loss of generality, that $s = s_1$ and $e = s_1 r_1$ with $r_1 \in R$.
	Then, we can get a $P_4$-sparse graph if first we add the fill edges $s_1 k_j$ ($j = 2,3, \ldots, |K|$) which makes $s_1$ adjacent to all the vertices in $K$ and then add the minimum number of fill edges so that the disconnected induced subgraph~$G[\{s_1\} \cup R]$ with the non-edge~$s_1 r_1$ becomes $P_4$-sparse for a total of $|K| - 1 + \mu$ fill edges (including $e$); note that the only neighbor~$k_1$ of $s_1$ in $G$ is universal in $G[\{s_1\} \cup R]$.
	
	To prove the optimality of this solution, we show that no optimal solution of the ($P_4$-sparse,{$+$}$1$)-MinEdgeAddition Problem for the thin spider $G$ and the non-edge~$s_1 r_1$ has fewer than $|K| - 1$ fill edges incident on vertices in $(S \cup K) \setminus \{s_1,k_1\}$.
	Suppose, for contradiction, that there is a solution with at most $|K| - 2$ such fill edges. Then, because the number of pairs $k_i, s_i$ in $(S \cup K) \setminus \{s_1,k_1\}$ is equal to $|K| - 1$, there exists a pair $k_j, s_j$ ($2 \le j \le |K|$) such that neither $k_j$ nor $s_j$ is incident to any of the fill edges. Then, due to the addition of the non-edge~$e = s_1 r_1$, the vertices $s_1, k_1, r_1, k_j, s_j$ induce a forbidden subgraph~$F_6$; a contradiction.

\end{proof}

\subsection{Thick Spider}

Suppose that the spider~$G$ is thick and that $s_i$ is non-adjacent to $k_i$ for each $i = 1, 2, \ldots, |K|$. Additionally, according to our convention, we assume that $|K| \ge 3$.

\begin{lemma} \label{lemma:thick_K-S}
	The ($P_4$-sparse,{$+$}$1$)-MinEdgeAddition Problem for the thick spider $G = (S,K,R)$ and a non-edge~$e$ incident on a vertex in $S$ and a vertex in $K$ admits an optimal solution that requires only the fill-edge~$e$.
\end{lemma}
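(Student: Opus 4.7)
The plan is to show that simply adding the single fill edge~$e$ already produces a $P_4$-sparse graph; since $e$ must be present in any completion (it is the prescribed added non-edge), this is clearly optimal.

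Without loss of generality, assume $e = s_1 k_1$ (this is the only form of a non-edge between $S$ and $K$, since in a thick spider the unique non-neighbor of $s_i$ in $K$ is $k_i$). First I would verify that in $G + e$ the vertex $k_1$ becomes universal: by the thick spider definition, $k_1$ is already adjacent in $G$ to every vertex in $K \setminus \{k_1\}$, to every vertex in $R$, and to every vertex in $S \setminus \{s_1\}$ (its only non-neighbor in $S$ being $s_1$); adding $e$ supplies the missing adjacency to $s_1$, so $k_1$ is adjacent to all other vertices of $G+e$.

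Next I would describe the structure of the induced subgraph $(G+e)[V(G) \setminus \{k_1\}]$ and argue that it is a thick spider with partition $(S', K', R')$, where $S' = S \setminus \{s_1\}$, $K' = K \setminus \{k_1\}$, and $R' = R \cup \{s_1\}$. Indeed, each $s_i$ with $i \ge 2$ remains non-adjacent in $S'$ (so $S'$ is independent), each $k_i$ with $i \ge 2$ remains adjacent to all other $k_j$'s (so $K'$ is a clique), the bijection $s_i \leftrightarrow k_i$ for $i \ge 2$ still satisfies $N_{G+e}(s_i) \cap K' = K' \setminus \{k_i\}$ (the thick-spider condition), and the new vertex $s_1$ belongs correctly to $R'$: it is non-adjacent to every $s_i$ ($i \ge 2$), non-adjacent to every $r \in R$, and adjacent to every $k_i$ with $i \ge 2$ (because in $G$, $s_1$ was adjacent to all of $K \setminus \{k_1\}$). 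The original vertices of $R$ retain their status of being adjacent to all of $K'$ and to no vertex of $S'$. Finally, since $|K| \ge 3$, we have $|S'| = |K'| \ge 2$, so $(S', K', R')$ meets the size requirement of a spider partition.

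Consequently, $G + e$ is the join of the single universal vertex $k_1$ with the thick spider $(S', K', R')$; equivalently, its $P_4$-sparse tree has a $1$-node root with two children, a leaf for $k_1$ and a $2$-node whose subtree represents the thick spider on $V(G)\setminus\{k_1\}$ (the latter may internally have a recursive $P_4$-sparse tree for $G[R']$, but this uses the unchanged $P_4$-sparse structure of $G[R]$ together with $s_1$ as an additional leaf under the $0$-node or $1$-node root of that subtree). Thus $G+e$ is $P_4$-sparse, so the completion consisting solely of $e$ is feasible. Since any completion must include~$e$, the bound of $1$ fill edge is tight and the solution is optimal. The only step that requires any care is the verification that $(S', K', R')$ really is a thick spider partition, i.e.\ checking the four defining adjacency conditions for the new element $s_1 \in R'$; this is routine given the definition of a thick spider.
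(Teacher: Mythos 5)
Your proposal is correct and follows essentially the same route as the paper: adding $e = s_1k_1$ makes $k_1$ universal, so $G+e$ is $P_4$-sparse and the single fill edge $e$, which any completion must contain, is trivially optimal. Your detailed verification of the thick-spider partition $(S',K',R')$ is fine but more than needed, since $(G+e)-k_1 = G-k_1$ is an induced subgraph of $G$ and hence already $P_4$-sparse, and attaching a universal vertex preserves $P_4$-sparseness.
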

\begin{proof}
	Suppose, without loss of generality, that $e = k_1 s_1$.
	Then, the addition of $e$ makes $k_1$ universal, and no additional fill edges are needed, which is optimal.

\end{proof}

\begin{lemma} \label{lemma:thick_S-S}
	The ($P_4$-sparse,{$+$}$1$)-MinEdgeAddition Problem for the thick spider $G = (S,K,R)$ and a non-edge~$e$ with both endpoints in $S$ admits an optimal solution that requires $\lambda$ fill edges (including the non-edge~$e$) where
	\[
	\lambda =
	\begin{cases}
	\ 2, & \text{if $|K| + |R| = 3$;}\\
	\ 3, & \text{if $|K| + |R| \ge 4$.}
	\end{cases}
	\]
\end{lemma}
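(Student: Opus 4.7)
My plan is to prove the stated bound $\lambda$ by exhibiting explicit completions for the upper bound and a $5$-subset argument for the lower bound, handling the regimes $|K|+|R|=3$ and $|K|+|R|\ge 4$ separately; assume throughout that $e=s_1 s_2$. For the upper bound when $|K|+|R|=3$ (forcing $|K|=3$, $|R|=0$), I add the single extra fill edge $s_1 s_3$; the complement of the resulting graph is a $P_4$ on $\{k_2,s_2,s_3,k_3\}$ disjoint from a $K_2$ on $\{s_1,k_1\}$, which is $P_4$-sparse, so the graph itself is $P_4$-sparse since $P_4$-sparseness is closed under complementation. When $|K|+|R|\ge 4$, I add the two extra fill edges $s_1 k_1$ and $s_2 k_2$; these make $k_1$ and $k_2$ universal, and removing them leaves a thick spider with clique $\{k_3,\ldots,k_{|K|}\}$, stable set $\{s_3,\ldots,s_{|K|}\}$, and $R$-set $R\cup\{s_1,s_2\}$ in which $s_1 s_2$ is the only edge attaching $\{s_1,s_2\}$ to the rest. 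The induced subgraph on $R\cup\{s_1,s_2\}$ is the disjoint union of the $P_4$-sparse graph $G[R]$ with a $K_2$, so it is $P_4$-sparse, the spider is $P_4$-sparse, and restoring the universal $k_1,k_2$ preserves this; the boundary case $|K|=3$, $|R|\ge 1$ degenerates to $s_3$ isolated plus $k_3$ universal over $R\cup\{s_1,s_2\}$, which is again $P_4$-sparse.

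For the lower bound when $|K|+|R|=3$, the graph $G+e$ already contains a forbidden configuration: the $5$-subset $\{s_1,s_2,s_3,k_1,k_2\}$ induces the two distinct $P_4$s $s_1{-}s_2{-}k_1{-}s_3$ and $s_2{-}s_1{-}k_2{-}s_3$, forcing at least two fill edges. When $|K|+|R|\ge 4$, suppose for contradiction that some single extra fill edge $f$ makes $G+e+f$ be $P_4$-sparse. I exhibit four $5$-subsets each inducing exactly two $P_4$s after $e$ alone: $A_3=\{s_1,s_2,s_3,k_1,k_2\}$; a same-shape set $A_4=\{s_1,s_2,s_4,k_1,k_2\}$ if $|K|\ge 4$ or $B_r=\{s_1,s_2,r,k_1,k_2\}$ if $|R|\ge 1$; and the pair $D=\{s_1,s_2,s_3,x,k_1\}$, $D'=\{s_1,s_2,s_3,x,k_2\}$ for a suitable $x\in\{s_4\}\cup R$. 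Any $f$ whose endpoints are not both inside a given $5$-subset leaves that subset's induced structure untouched, and any internal non-edge of a subset that is not a chord of one of its two $P_4$s fails to destroy either of them; hence $f$ must lie in the chord list of each of the four subsets. A short enumeration yields $A_3\cap A_4=\{s_1 k_1,s_2 k_2\}$ (or $A_3\cap B_r=\{s_1 k_1,s_2 k_2\}$ in the alternative case); intersecting with $D$'s list, which contains $s_1 k_1$ but not $s_2 k_2$, leaves only $\{s_1 k_1\}$; and intersecting with $D'$'s list, which contains $s_2 k_2$ but not $s_1 k_1$, yields the empty set, the desired contradiction.

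The principal obstacle will be the bookkeeping in the lower-bound argument: verifying that each of the four designated $5$-subsets really induces exactly two $P_4$s after $e$, enumerating the internal non-edges of each, and identifying which of them are chords of one of those two $P_4$s. A secondary technicality is the boundary upper-bound construction when $|K|=3$ and $|R|\ge 1$, where the residual structure after removing $k_1$ and $k_2$ has no genuine spider core and its $P_4$-sparseness must be read off directly from the explicit description.
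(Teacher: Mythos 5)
Your proof is correct and follows the paper's overall strategy: the upper-bound completions coincide (for $|K|+|R|=3$ you add $s_1s_3$ and observe the complement is $P_4\cup P_2$, exactly the paper's move with $s_2s_3$; for $|K|+|R|\ge 4$ you add $s_1k_1$ and $s_2k_2$, making $k_1,k_2$ universal, as the paper does), and both lower bounds rest on the fact that a $5$-set violating $P_4$-sparseness can only be repaired by an edge inside it. The difference is in how the lower bound for $|K|+|R|\ge 4$ is organized: the paper cases on the identity of the single extra fill edge and invokes the named forbidden subgraphs $F_1$ and $F_6$ (its Case~A uses the $5$-set $\{s_1,s_2,s_3,k_2,k_3\}$, so it needs $k_3$), whereas you never fix the extra edge but intersect the chord lists of four chosen $5$-sets, $\{s_1,s_2,s_3,k_1,k_2\}$, $\{s_1,s_2,x',k_1,k_2\}$, $\{s_1,s_2,s_3,x,k_1\}$, $\{s_1,s_2,s_3,x,k_2\}$ with $x\in\{s_4\}\cup R$, and reach the empty set; your route is a bit more elementary (only the one-$P_4$-per-five-vertices definition, no forbidden-subgraph catalogue) at the price of verifying the induced structure and chord lists of four $5$-sets, which I checked and which are right. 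Two small presentational points: in the $|K|+|R|=3$ case the two $P_4$s in $\{s_1,s_2,s_3,k_1,k_2\}$ by themselves force only one edge beyond $e$ (which, together with $e$, is all you need for $\lambda=2$), so phrase it that way; and in the $|K|+|R|\ge 4$ case your contradiction is stated for ``some single extra fill edge'', so add the trivial remark that zero extra edges is also impossible because that same $5$-set already carries two $P_4$s after $e$.
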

\begin{proof}
	Suppose, without loss of generality, that $e = s_1 s_2$. Additionally, recall that we assume that $|K| \ge 3$. We can get a $P_4$-sparse graph if, in addition to the fill edge~$e$, we add the fill edge~$s_2 s_3$ if $|K| = 3$ and $|R| = 0$ (note that the complement of the resulting graph is the union of the $P_2$~$s_2 k_2$ and the $P_4$~$k_1 s_1 s_3 k_3$) and the fill edges~$s_1 k_1$ and $s_2 k_2$ if $|K| + |R| \ge 4$ (note that $k_1, k_2$ are universal in the resulting graph).
	
	To establish the optimality of this solution, we first observe that
	for $|K| = 3$ and $|R| = 0$, the vertices $s_1, s_2, s_3, k_1, k_2$ induce a forbidden subgraph~$F_1$ and thus, at least $2$ fill edges (including $e$) are needed.
	Next we show that for $|K| + |R| \ge 4$, no solution has fewer than $3$ fill edges (including $e$).
	Suppose for contradiction that there is a solution with at most $2$ fill edges. Due to $e$, the vertices $s_1$, $s_2$, $s_3$, $k_1$, $k_2$ induce a forbidden subgraph~$F_1$, and thus at least $1$ additional fill edge is needed.
	
	\smallskip\noindent
	\textit{A. This additional fill edge is $s_1 k_1$ or $s_2 k_2$.} Due to symmetry, suppose without loss of generality that the fill edge~$s_1 k_1$ is added. But then, the vertices $s_1$, $s_2$, $s_3$, $k_2$, $k_3$ induce a forbidden subgraph~$F_6$, a contradiction.
	
	\smallskip\noindent
	\textit{B. None of the non-edges $s_1 k_1$ and $s_2 k_2$ is added.} Then, the vertices $s_1$, $s_2$, $k_1$, $k_2$ induce a $C_4$ and for each $q \in \{s_3, \ldots, s_{|K|} \} \cup R$, the vertices $s_1$, $s_2$, $k_1$, $k_2$, $q$ induce a forbidden subgraph~$F_1$ and either the fill edge~$q s_1$ or the fill edge~$q s_2$ needs to be added (recall that none of $s_1 k_1$, $s_2 k_2$ is added). Since for the different possibilities of $q$, these fill edges are distinct and at most $1$ fill edge is added in addition to $e$, then it must hold that $|K| + |R| - 2 = 1 \Longleftrightarrow |K| + |R| = 3$, in contradiction to the fact that $|K| + |R| \ge 4$.

\end{proof}

\begin{lemma} \label{lemma:thick_S-R}
	The ($P_4$-sparse,{$+$}$1$)-Min\-Edge\-Addition Problem for the thick spider $G = (S,K,R)$ and a non-edge~$e$ incident on a vertex~$s$ in $S$ and a vertex in $R$ admits an optimal solution that requires $1 + \mu$ fill edges (including $e$) where $\mu$ is the number of fill edges in an optimal solution of the ($P_4$-sparse,{$+$}$1$)-Min\-Edge\-Addition Problem for the disconnected induced subgraph~$G[\{s\} \cup R]$ and the non-edge~$e$.
\end{lemma}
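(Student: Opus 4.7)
Without loss of generality take $e = s_1 r_1$ with $r_1 \in R$, and set $A = \{s_1\} \cup R$. The plan is to prove matching upper and lower bounds of $1 + \mu$ fill edges.

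For the upper bound, I would add the single fill edge $s_1 k_1$ together with the $\mu$ fill edges (including $e$) of an optimal solution of the subproblem on $G[A]$ with non-edge $e$. After adding $s_1 k_1$ the vertex $k_1$ is adjacent to every other vertex of $G$---it was already adjacent to all of $K \setminus \{k_1\}$, all of $R$, and all of $S \setminus \{s_1\}$ by the thick-spider definition---so $k_1$ becomes universal. The induced subgraph on $V(G) \setminus \{k_1\}$ is then a thick spider with stable set $S \setminus \{s_1\}$, clique $K \setminus \{k_1\}$, and $R$-set $\{s_1\} \cup R$: the vertex $s_1$'s unique non-neighbour in $K$ was $k_1$, so $s_1$ is adjacent to all of $K \setminus \{k_1\}$ and has no neighbour in $R$, placing it correctly in the $R$-set. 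After also installing the $\mu$ edges, the induced subgraph on $\{s_1\} \cup R$ is $P_4$-sparse and contains $e$; the resulting $P_4$-sparse tree has a $1$-node root with leaf $k_1$ on one side and a $2$-node child for the thick spider on the other, so the whole graph is $P_4$-sparse and uses exactly $1 + \mu$ fill edges.

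For the lower bound, let $H$ be any $P_4$-sparse completion of $G + e$. Since $H[A]$ is an induced subgraph of a $P_4$-sparse graph it is itself $P_4$-sparse, and because it contains $e$ it is a completion of $G[A]$ with non-edge $e$; by the definition of $\mu$, the number of fill edges of $H$ with both endpoints in $A$ is at least $\mu$. The remaining task is to force at least one further fill edge with an endpoint in $V(G) \setminus A$. Suppose, for contradiction, that every fill edge of $H$ has both endpoints in $A$. Pick $s_2 \in S \setminus \{s_1\}$ together with its non-neighbour $k_2 \in K$, which exists because $|K| \geq 3$. Then the induced subgraph of $H$ on $\{s_1, r_1, k_1, k_2, s_2\}$ equals that of $G + e$, since the only pair among these five vertices that lies inside $A$ is $\{s_1, r_1\}$, and $s_1 r_1$ is already present in $G + e$. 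A direct check shows that this five-vertex graph contains two induced $P_4$'s, namely the one with edges $s_2 k_1$, $k_1 r_1$, $r_1 s_1$ and the one with edges $s_2 k_1$, $k_1 k_2$, $k_2 s_1$, so it is a forbidden subgraph of the $P_4$-sparse class, contradicting the $P_4$-sparseness of $H$. Hence $|E(H) \setminus E(G)| \geq \mu + 1$, matching the upper bound.

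The main difficulty is in picking a five-vertex witness whose only legal repairs touch $V(G) \setminus A$. The configuration above achieves this because the two induced $P_4$'s share the edge $s_2 k_1$: any single fill edge eliminating both $P_4$'s must be incident to one of $k_1, k_2, s_2$, and none of these three vertices lies in $A$. Once the witness is identified, the clean partition of the fill edges of $H$ into those with both endpoints in $A$ (contributing at least $\mu$) and those with at least one endpoint in $V(G) \setminus A$ (contributing at least $1$) delivers the lower bound immediately.
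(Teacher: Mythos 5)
Your proposal is correct and follows essentially the same route as the paper: the upper bound adds $s_1 k_1$ to make $k_1$ universal and then optimally completes $G[\{s_1\}\cup R]+e$, and the lower bound uses the same five-vertex witness $\{s_1,s_2,k_1,k_2,r_1\}$ (the paper names it as an induced $F_6$, you verify it directly via two induced $P_4$'s) to force one extra fill edge with an endpoint outside $\{s_1\}\cup R$. Your explicit accounting that this extra edge is disjoint from the at least $\mu$ fill edges inside $\{s_1\}\cup R$ is a slightly more careful write-up of the paper's terse optimality argument, not a different proof.
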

\begin{proof}
	Suppose, without loss of generality, that $s = s_1$ and $e = s_1 r_1$ with $r_1 \in R$.
	Then, we can get a $P_4$-sparse graph if first we add the fill edge $s_1 k_1$ which makes $k_1$ universal and $s_1$ adjacent to all the vertices in $K$, and then add the minimum number~$\mu$ of fill edges (including $e$) so that the disconnected induced subgraph~$G[\{s_1\} \cup R]$ with the non-edge~$e$ becomes $P_4$-sparse for a total of $1 + \mu$ fill edges.
	
	The optimality of this solution follows from the fact that, due to the addition of the non-edge~$e = s_1 r_1$, the vertices $s_1$, $s_2$, $k_1$, $k_2$, $r_1$ induce a forbidden subgraph~$F_6$ and so at least $1$ fill edge incident on a vertex in $S \cup K$ and other than $e$ is needed.

\end{proof}

\section{Adding an Edge to a General $P_4$-sparse Graph}
\label{ch:general_algo}

It is not difficult to see that the following fact holds.

\begin{observation} \label{obs:algorithm}
	Let $G$ be a $P_4$-sparse graph, $T$ be the $P_4$-sparse tree of $G$, and $uv$ be a non-edge that we want to add.
	Suppose that the least common ancestor of the tree leaves corresponding to $u,v$ in $T$ is a $0$-node and let $C_u$ ($C_v$ resp.) be the connected components containing $u$ ($v$ resp.) in $G$ after having removed all of their common neighbors.
	Then an optimal solution of the ($P_4$-sparse,{$+$}$1$)-Min\-Edge\-Addition Problem for the graph~$G$ and the non-edge~$uv$ can be obtained from $G$ after we have replaced the induced subgraph~$G[C_u \cup C_v]$ by an  optimal solution of the ($P_4$-sparse-2CC,{$+$}$1$)-Min\-Edge\-Addition Problem for the union of $G[C_u]$ and $G[C_v]$ and the non-edge~$uv$.
\end{observation}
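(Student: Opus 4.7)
The plan is to reduce the $(P_4$-sparse,$+1)$-Min\-Edge\-Addition Problem for $G$ and $uv$ to the $(P_4$-sparse-$2$CC,$+1)$-Min\-Edge\-Addition Problem for $G[C_u] \cup G[C_v]$ and $uv$, by exploiting the fact that the $0$-node LCA of $u,v$ cleanly separates $C_u \cup C_v$ from the remainder of $G$.

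First I would establish a structural decomposition of $V(G)$. Let $t$ be the $0$-node LCA of the leaves corresponding to $u,v$, let $L_t$ denote the leaf set of $t$'s subtree, and let $W$ be the set of common neighbors of $u,v$ in $G$. A key property of the tree is that for any $x \notin L_t$, the least common ancestor of $x$'s leaf and $y$'s leaf in $T$ is the same node (call it $t_x$) for every $y \in L_t$; hence $x$ is either adjacent to all of $L_t$ or to none. A short case analysis on the type of $t_x$ and the placement of $x$'s leaf within the associated spider (when $t_x$ is a $2$-node) shows that the former case is precisely $x \in W$; thus $W$ is universal to $L_t$ and every other vertex outside $L_t$ has no edges to $L_t$. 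Since $t$ is a $0$-node, the connected components of $G[L_t]$ are exactly the leaf sets of $t$'s children's subtrees, two of which are $C_u$ and $C_v$. Therefore $V(G) \setminus (C_u \cup C_v)$ splits into $W$, universal to $C_u \cup C_v$, and a set $Z$ with no edges to $C_u \cup C_v$.

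Next I would prove the equivalence in two directions. For the upper bound, take an optimal $2$CC solution $H_0$ for $G[C_u] \cup G[C_v]$ with the non-edge $uv$ and form $H$ by leaving $G$ untouched outside $C_u \cup C_v$ and replacing $G[C_u \cup C_v]$ by $H_0$. To verify that $H$ is $P_4$-sparse I construct its $P_4$-sparse tree by modifying $T$: replace the two children of $t$ whose subtrees induce $C_u$ and $C_v$ by a single new child whose subtree is the $P_4$-sparse tree of $H_0$. By Observation~\ref{obs: solution_prop}(ii) the root of $H_0$'s tree is a $1$- or $2$-node, so it legally attaches to the $0$-node $t$ (collapsing $t$ into its parent if it is left with a unique child). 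The structural decomposition from the previous paragraph guarantees that this modified tree represents exactly $H$, and the fill-edge count of $H$ equals that of $H_0$. For the lower bound, let $H^*$ be any optimal solution to the original problem; since $P_4$-sparseness is hereditary, $H^*[C_u \cup C_v]$ is $P_4$-sparse and contains $G[C_u] \cup G[C_v] \cup \{uv\}$, so it is feasible for the $2$CC problem, and its fill-edge count is at least the $2$CC optimum while being at most the fill-edge count of $H^*$.

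The main obstacle is the first step: the careful case analysis identifying $W$ with the vertices universal to $L_t$ and showing that every other vertex outside $L_t$ is non-adjacent to $L_t$, by applying the $P_4$-sparse tree rules for $0$-, $1$-, and $2$-nodes together with the spider structure at $2$-node ancestors of $t$. Once this partition $V(G) = C_u \cup C_v \cup W \cup Z$ is in place, the cut-and-paste construction of the $P_4$-sparse tree of $H$ and the hereditary lower bound fit together directly to yield the claim.
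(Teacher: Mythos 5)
Your proposal is correct, and since the paper states Observation~\ref{obs:algorithm} without any proof (``It is not difficult to see\dots''), your argument supplies precisely the reasoning it implicitly relies on: the leaf set of the $0$-node is a module whose outside neighbors are exactly the common neighbors $W$ of $u,v$, so $C_u \cup C_v$ is itself a module; cut-and-pasting an optimal $2$CC solution gives the upper bound, and heredity of $P_4$-sparseness applied to $H^*[C_u \cup C_v]$ gives the matching lower bound. The only small caveat is in the tree surgery: when $t$ is left with a unique child and is collapsed, the root of $H_0$'s tree may carry the same label as $t$'s parent (e.g., two $1$-nodes, or a $2$-node under a $2$-node), so such nodes must be merged or the tree renormalized --- or, more robustly, one can skip the surgery entirely by observing that $C_u \cup C_v$ is a module and that the class of $P_4$-sparse graphs is closed under substitution of a $P_4$-sparse graph into a module.
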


In light of the lemmas in Section~\ref{ch:inside_spider} and Observation~\ref{obs:algorithm}, Algorithm {\sc   $P_4$-sparse-Edge-Addition} for solving the ($P_4$-sparse,{$+$}$1$)-Min\-Edge\-Addition Problem for a $P_4$-sparse graph~$G$ and a non-edge~$uv$ computes the least common ancestor of the leaves corresponding to $u$ and $v$, and if it is a $2$-node, it applies the results in Lemmas~\ref{lemma:thin_K-S}-\ref{lemma:thick_S-R} calling Algorithm {\sc ($P_4$-sparse-$2$CC)-Edge-Addition} for the problem on a $2$-component graph in the S-R case whereas if it is a $0$-node, we apply Observation~\ref{obs:algorithm}, compute the connected components that include $u$ and $v$ and call Algorithm {\sc ($P_4$-sparse-$2$CC)-Edge-Addition}. 

Algorithm {\sc ($P_4$-sparse-$2$CC)-Edge-Addition} relies on the lemmas of Section~\ref{ch:cc}; it has as input the connected components $C_u$ and $C_v$ containing $u$ and $v$ respectively and the $P_4$-sparse trees $T(G[C_u])$ and $T(G[C_v])$ of the induced subgraphs~$G[C_u]$ and $G[C_v]$. It first checks if $C_u = \{u\}$ or $C_v = \{v\}$ in which case it calls Algorithm {\sc $P_4$-sparse-Tail-Addition}. Otherwise it checks for the special cases of Lemmas~\ref{lemma:2b} and \ref{lemma:2c} and if they apply, it computes the number of fill edges as suggested in the lemmas. Next, it ignores $T_{u,1}(G[C_u])$ if its root node is a $0$-node and similarly for $T_{v,1}(G[C_v])$. Otherwise, it computes the fill edges of a $P_4$-sparse graph~$H$ on the vertex set~$C_u \cup C_v$ having an edge set that is a superset of $E(G[C_u \cup C_v]) \cup \{uv\}$
\begin{itemize}
	\item
	which results from making $u$ universal in $G[C_u \cup C_v]$,
	\item
	which results from making $v$ universal in $G[C_u \cup C_v]$,
	\item
	in which $T_{u,1}(H) = T_{u,1}(G[C_u])$,
	\item
	in which $T_{u,1}(H) = T_{v,1}(G[C_v])$, and
	\item
	as in the special case of Lemma~\ref{lemma:1a_1}
\end{itemize}
making recursive calls in the last $3$ cases.

The algorithms can be easily augmented to return a minimum cardinality set of fill edges (including $uv$). 

\medskip\noindent
\textbf{Time and space Complexity.}
Let the given graph~$G$ have $n$ vertices and $m$ edges. The $P_4$-sparse tree of a given $P_4$-sparse graph~$G$ can be constructed in $O(n+m)$ time and its number of nodes and height is $O(n)$. Then the time to compute the number of fill edges (excluding the call to Algorithm {\sc (2CC-$P_4$-sparse)-Edge-Addition}) is $O(n)$.

\begin{theorem}
	Let $G$ be a $P_4$-sparse graph on $n$ vertices and $m$ edges and $u,v$ be two non-adjacent vertices of $G$. Then for the ($P_4$-sparse,{$+$}$1$)-Min\-Edge\-Addition Problem for the graph~$G$ and the non-edge~$uv$, we can compute the minimum number of fill edges needed (including $uv$) in $O(n^2)$ time and $O(n^2)$ space. 
\end{theorem}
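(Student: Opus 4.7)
The plan is to establish the theorem by combining correctness of Algorithms $P_4$-sparse-Edge-Addition and $(P_4$-sparse-$2$CC$)$-Edge-Addition, which is immediate from the lemmas of Sections~\ref{ch:cc} and \ref{ch:inside_spider} together with Observation~\ref{obs:algorithm}, with a direct accounting of time and space. Given the $P_4$-sparse tree $T(G)$, let $t$ be the least common ancestor of the leaves for $u$ and $v$. If $t$ is a $1$-node then $uv$ would already be an edge, contradicting the hypothesis; if $t$ is a $2$-node (spider), Lemmas~\ref{lemma:thin_K-S}--\ref{lemma:thick_S-R} explicitly describe the optimal count for each position of $u,v$ within $(S,K,R)$ and, in the $S$-$R$ case only, recurse on $G[\{s\}\cup R]$ through $(P_4$-sparse-$2$CC$)$-Edge-Addition; if $t$ is a $0$-node, Observation~\ref{obs:algorithm} reduces the instance to $(P_4$-sparse-$2$CC$,+1)$-Min\-Edge\-Addition for $G[C_u]\cup G[C_v]$, which the $2$CC procedure handles by enumerating the constantly many candidate forms guaranteed by Lemmas~\ref{lemma:1a_1}--\ref{lemma:3bc} and returning the minimum.

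For the complexity bound I proceed in three steps. First, the $P_4$-sparse tree $T(G)$ is built in $O(n+m)$ time, has $O(n)$ nodes, and admits an $O(n)$ preprocessing that records the number of leaves (vertices of $G$) in each subtree, so that for every vertex $w$ the cost of making $w$ universal in an induced subgraph $G[A]$ can be read off in $O(1)$ once $A$ is identified with a union of subtrees of $T(G)$. Second, inside $(P_4$-sparse-$2$CC$)$-Edge-Addition, Lemmas~\ref{lemma:1a_1} and~\ref{lemma:1b_1} (together with the corresponding statements for the spider root cases) reduce the search for an optimal solution to a constant number of candidate forms: making $u$ or $v$ universal, fixing $T_{u,1}$ of the solution to equal $T_{u,1}(G[C_u])$ or $T_{v,1}(G[C_v])$, or one of the explicitly enumerated $O(1)$-size exceptional configurations. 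In the two recursive branches the current instance shrinks by at least the peeled-off top subtree, so the recursion tree has linear branching depth at most equal to the height of $T(G)$, which is $O(n)$.

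Third, I bound the work per recursion level. Each call evaluates a constant number of candidate completions, and the evaluation of each candidate amounts to adding up leaf-counts over $O(n)$ subtrees of $T(G)$ and invoking the formulas of Lemmas~\ref{lemma:thin_K-S}--\ref{lemma:thick_S-R} in the spider case; all of this costs $O(n)$ per level. Since there are $O(n)$ levels, the total running time is $O(n^2)$. The space bound comes from $T(G)$ and the precomputed leaf-counts, which require $O(n)$ space, plus the set of fill edges, which can be of size $\Theta(n^2)$ in the worst case (for instance when one endpoint must be made universal in $G$); only one candidate needs to be held at a time during the recursion, so auxiliary working space is $O(n)$.

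The main obstacle is precisely the last point: to guarantee $O(n^2)$ overall rather than a larger bound, one must argue that in $(P_4$-sparse-$2$CC$)$-Edge-Addition each recursive call strictly shrinks the current $P_4$-sparse subtree and the branching factor stays constant. This is exactly what Lemmas~\ref{lemma:1a_1}, \ref{lemma:1b_1}, \ref{lemma:2a_1}, \ref{lemma:2a_2}, and~\ref{lemma:3a_1} buy us: the optimal solution is always of one of finitely many normal forms, and in the two recursive forms the new instance is an induced subgraph obtained by deleting $V(T_{u,1})$ or $V(T_{v,1})$, which is a proper prefix of the root-to-leaf path in $T(G)$. Summing a linear per-level cost over a linear number of levels yields the claimed $O(n^2)$ time and space bounds, completing the proof.
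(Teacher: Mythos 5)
Your correctness argument follows the paper's own route exactly: the case split on the least common ancestor (a $1$-node is impossible for a non-edge, a $2$-node is handled by Lemmas~\ref{lemma:thin_K-S}--\ref{lemma:thick_S-R} with a recursive call only in the $S$--$R$ case, a $0$-node is reduced via Observation~\ref{obs:algorithm}), together with the reduction of the two-component problem to a constant number of candidate normal forms guaranteed by Lemmas~\ref{lemma:1a_1}--\ref{lemma:3bc}. That part is fine and is essentially the paper's argument.

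The gap is in the time accounting. Each invocation of Algorithm ($P_4$-sparse-$2$CC)-Edge-Addition makes up to \emph{three} recursive calls (peel $T_{u,1}(G[C_u])$, peel $T_{v,1}(G[C_v])$, and the exceptional configuration of Lemma~\ref{lemma:1a_1}), so the recursion is not a chain of $O(n)$ levels with one call per level. A subproblem is determined by a pair $(i,j)$: how many top subtrees have been peeled along the root-to-$u$ path of $T(G[C_u])$ and along the root-to-$v$ path of $T(G[C_v])$; a call on $(i,j)$ spawns calls on $(i+1,j)$ and $(i,j+1)$, and without memoization the subproblem $(i,j)$ is recomputed once for every lattice path from $(0,0)$ to $(i,j)$, i.e., exponentially often in the worst case. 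Your sentence that ``the recursion tree has linear branching depth \dots\ there are $O(n)$ levels, so the total running time is $O(n^2)$'' implicitly assumes a single recursive branch per call, which the lemmas do not provide. The repair is to observe that the distinct subproblems are exactly the $O(n^2)$ pairs of suffixes of the two root-to-leaf paths, to memoize (or fill a table bottom-up), and to evaluate the non-recursive candidates of each subproblem in $O(1)$ time after an $O(n)$ precomputation of subtree leaf counts and degree sums along the two paths; this gives $O(n^2)$ time and is also the natural source of the paper's $O(n^2)$ space bound, which you instead attribute entirely to outputting a fill-edge set (not required, since the theorem asks only for the number of fill edges) while claiming $O(n)$ auxiliary space. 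As written, your argument establishes correctness but does not justify the claimed $O(n^2)$ running time.
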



\newpage

\end{document}